\theoremstyle{definition}
\newtheorem{thm}{Theorem}
\newtheorem{lem}{Lemma}
\newtheorem{cor}{Corollary}
\begin{document}

\title{When are dynamic relaying strategies necessary in half-duplex
  wireless networks?}

\author{Ritesh~Kolte,~\IEEEmembership{Student Member,~IEEE,} Ayfer~\"{O}zg\"{u}r,~\IEEEmembership{Member,~IEEE} and Suhas~Diggavi,~\IEEEmembership{Fellow,~IEEE}    \thanks{Manuscript received May 7, 2014; revised November 3, 2014; accepted January 14, 2015; date of current version January 29, 2015.  This work was presented in part at the IEEE International Symposium on Information Theory 2012 \cite{OzgDig} and the IEEE International Symposium on Information Theory 2013 \cite{KolOzg}. R.~Kolte and A.~\"{O}zg\"{u}r are with the Department of Electrical Engineering at
    Stanford University. S.~Diggavi is with the Department of Electrical Engineering at UCLA. (Emails: rkolte@stanford.edu,
    aozgur@stanford.edu, suhasdiggavi@ucla.edu). The work of R.~Kolte and A.~\"{O}zg\"{u}r was partly supported a Stanford Graduate Fellowship and NSF CAREER award 1254786. The research of S. Diggavi was supported in part by NSF grant 1314937 and a gift from Intel.}%
        \thanks{Communicated by A. S. Avestimehr, Associate Editor for Communications.}
        \thanks{Copyright (c) 2014 IEEE.}
}

\maketitle               

\begin{abstract} 
  In this paper we study a simple question: when are dynamic relaying
  strategies essential in optimizing the diversity-multiplexing
  tradeoff (DMT) in half-duplex wireless relay networks? This is
  motivated by apparently two contrasting results
  even for a simple $3$ node network, with a
  single half-duplex relay. When all channels in the system are
  assumed to be independent and identically fading, a \emph{static}
  schedule where the relay listens half the time and transmits half
  the time combined with quantize-map-and-forward (QMF) relaying
  is known to achieve the full-duplex
  performance. However, when there is no direct link
  between the source and the destination, a \emph{dynamic} decode-and-forward
  (DDF) strategy is needed to achieve the optimal
  tradeoff. In this case, a static schedule is
  strictly suboptimal and the optimal tradeoff is significantly worse
  than the full-duplex performance. In this paper we study the general
  case when the direct link is neither as strong as the other links
  nor fully non-existent, and identify regimes where dynamic schedules
  are necessary and those where static schedules are enough. We
  identify four qualitatively different regimes for the single relay channel
  where the tradeoff between diversity and multiplexing is significantly different. We
  show that in all these regimes one of the above two strategies is
  sufficient to achieve the optimal tradeoff by developing a new upper
  bound on the best achievable tradeoff under channel state
  information available only at the receivers. A natural next question
  is whether these two strategies are sufficient to achieve the DMT of
  more general half-duplex wireless networks with a larger number of
  relays. We propose a generalization of the two existing schemes
  through a dynamic QMF strategy, where the relay listens for a
  fraction of time depending on received CSI but not long enough to be
  able to decode. We show that such a dynamic QMF (DQMF) strategy
  is needed to achieve the optimal DMT in a parallel channel with two relays,
  outperforming both DDF and static QMF strategies.
\end{abstract}

\begin{keywords}
  Diversity-Multiplexing Tradeoff, Half-Duplex, Relay Networks, Relay
  scheduling, Dynamic-Decode-Forward, Quantize-Map-Forward.
\end{keywords}

\section{Introduction}
Diversity-multiplexing trade-off (DMT) \cite{ZheTse} captures the
inherent tension between rate and reliability over fading channels. It
has been used to demonstrate the value of relays in wireless networks
\cite{LanWor, LanTseWor, SenErkAaz, SenErkAaz2}.  The two critical
issues that complicate the problem in relay networks is who knows what
channel state and whether nodes can listen and transmit at the same
time ({\em i.e.,} half or full duplex).  The DMT of \emph{full-duplex}
(AWGN) wireless networks can be fully characterized, even with only
receiver channel knowledge, which can be forwarded to the
destination. For any statistics of the channel fading and any network
topology, it can be achieved by a quantize-map-and-forward (QMF)
strategy introduced in \cite{AveDigTse}. This is a simple consequence
of the fact that QMF achieves the capacity of wireless relay networks
within a constant gap without requiring (transmit) channel state
information (CSI) at the relays.

In current wireless systems, however, nodes operate in a half-duplex
mode, \emph{i.e.,} they can not simultaneously transmit and receive
signals on the same frequency band. Designing DMT optimal strategies
for half-duplex networks is more challenging as it also involves an
optimization over the listen and transmit schedules for the relays,
which could be dynamic, \emph{i.e.,} dependent on the received signals
and the channel state.  Another challenge in a fading environment
is that transmit CSI is typically unavailable at the nodes; this
necessitates the design of relay listen-transmit schedules that are either
static or depend only on local receive CSI for dynamic
schedules. These challenges have contributed to the difficulty in
characterizing the optimal DMT of general half-duplex relay networks,
which remains an open problem. Even in the
special cases where the DMT has been characterized, the understanding
of what necessitates dynamic schedules is incomplete.

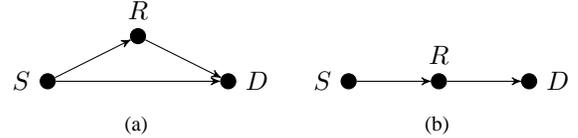
\begin{figure}
\centering
\subfigure[]{
\begin{tikzpicture}[scale=1.2]
\node (source) at (0,0) [circle,draw,fill,inner sep=0pt,minimum size=2mm,label=left:$S$] {};
\node (relay) at (1,0.5) [circle,draw,fill,inner sep=0pt,minimum size=2mm,label=above:$R$] {};
\node (dest) at (2,0) [circle,draw,fill,inner sep=0pt,minimum size=2mm,label=right:$D$] {};
\path[draw,->,>=stealth'] (source)  -- (relay);
\path[draw,->,>=stealth'] (relay)  -- (dest);
\path[draw,->,>=stealth'] (source)  -- (dest);
\end{tikzpicture}
\label{subfig:relay}}
\subfigure[]{
\begin{tikzpicture}[scale=1.2]
\node (source) at (0,0) [circle,draw,fill,inner sep=0pt,minimum size=2mm,label=left:$S$] {};
\node (relay) at (1,0) [circle,draw,fill,inner sep=0pt,minimum size=2mm,label=above:$R$] {};
\node (dest) at (2,0) [circle,draw,fill,inner sep=0pt,minimum size=2mm,label=right:$D$] {};
\path[draw,->,>=stealth'] (source)  -- (relay);
\path[draw,->,>=stealth'] (relay)  -- (dest);
\end{tikzpicture}
\label{subfig:line}}
\caption{(a) Single relay network, (b) Line relay network}
\end{figure}

Consider the simplest case where the communication between a source
and a destination is assisted by a single half-duplex relay. Two
settings for a single relay network have been considered and
characterized in the literature:
\begin{itemize}
\item When all links (source-destination, source-relay,
  relay-destination) are independent and identically fading (see
  Figure~\ref{subfig:relay}), \cite{PawAveTse} shows that the optimal
  DMT is achieved by the QMF scheme with a \emph{fixed RX-TX} schedule
  for the half-duplex relay that does not depend on the channel
  realizations. Here, the relay listens half of the total duration for
  communication, then quantizes and maps its received signal to a
  random codeword and transmits it in the second half. We call this
  strategy static QMF in the sequel. The performance meets the
  full-duplex DMT.
\item When there is no link between the source and the destination,
  the single relay channel of Figure~\ref{subfig:relay} reduces to the
  line topology in Figure~\ref{subfig:line}. In this case,
  \cite{GunKhoGolPoo} shows that the optimal DMT is achieved by a
  \emph{dynamic} decode-and-forward (DDF) strategy at the relay introduced in
  \cite{AzaGamSch}. In DDF, the relay listens until it gathers enough
  mutual information to decode the transmitted message so its RX time
  is dynamically determined as a function of its incoming channel
  realization and the targeted rate \cite{AzaGamSch}. The optimal
  performance does not reach the full-duplex DMT.
\end{itemize}
The two results suggest two different conclusions. While the first
result suggests that fixed schedules are sufficient to achieve the
optimal DMT with a half-duplex relay, even the full-duplex
performance, the second result establishes the necessity of dynamic
scheduling, which, even though  DMT optimal, does not meet the
full-duplex performance. In a practical setup, the source-destination
link can be expected to be neither as strong as the relay links (as in
the first setup in Figure~\ref{subfig:relay}) nor fully non-existent
(as in the second setup in Figure~\ref{subfig:line}). Given the
difference in the nature of the optimal strategies in the two
extremal cases, it is not clear which of these two strategies would be
optimal in a general setting where channel strengths are arbitrary; or
whether we need new strategies to achieve the optimal DMT in the general
case.

In this paper we answer these questions in the context of two
topologies: {\sf (i)} a relay channel where the direct link is neither
as strong as the other links nor fully non-existent, {\emph i.e.,}
where the different links scale differently. {\sf (ii)} a parallel
relay network which demonstrates the necessity for a new dynamic QMF
strategy. In the single relay channel, we demonstrate that the static
QMF scheme and DDF scheme are optimal in different regimes.  We
describe this in more detail below.

Let $(a,b,c)$ be the exponential orders of the average SNR's of the
source-relay (S-R), relay-destination (R-D) and source-destination
(S-D) channels respectively and $r$ be the desired multiplexing
rate. See Figure~\ref{subfig:abc_relay}. We show that:


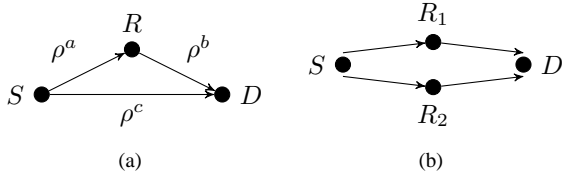
\begin{figure}
\centering
\subfigure[]{
\begin{tikzpicture}[scale=1.2]
\node (source) at (0,0) [circle,draw,fill,inner sep=0pt,minimum size=2mm,label=left:$S$] {};
\node (relay) at (1,0.5) [circle,draw,fill,inner sep=0pt,minimum size=2mm,label=above:$R$] {};
\node (dest) at (2,0) [circle,draw,fill,inner sep=0pt,minimum size=2mm,label=right:$D$] {};
\path[draw,->,>=stealth'] (source)  -- node[above left] {$\rho^a$} (relay);
\path[draw,->,>=stealth'] (relay)  -- node[above right] {$\rho^b$} (dest);
\path[draw,->,>=stealth'] (source)  -- node[below] {$\rho^c$} (dest);
\end{tikzpicture}
\label{subfig:abc_relay}}
\subfigure[]{
\begin{tikzpicture}[scale=1.2]
\node (source) at (0,0) [circle,draw,fill,inner sep=0pt,minimum size=2mm,label=left:$S$] {};
\node (relay1) at (1,0.25) [circle,draw,fill,inner sep=0pt,minimum size=2mm,label=above:$R_1$] {};
\node (relay2) at (1,-0.25) [circle,draw,fill,inner sep=0pt,minimum size=2mm,label=below:$R_2$] {};
\node (dest) at (2,0) [circle,draw,fill,inner sep=0pt,minimum size=2mm,label=right:$D$] {};
\node (out_source) at (0,0) [circle,inner sep=0pt,minimum size=3mm] {};
\node (out_dest) at (2,0) [circle,inner sep=0pt,minimum size=3mm] {};
\path[draw,->,>=stealth'] (out_source.north)  -- (relay1);
\path[draw,->,>=stealth'] (out_source.south)  -- (relay2);
\path[draw,->,>=stealth'] (relay1)  -- (out_dest.north);
\path[draw,->,>=stealth'] (relay2)  -- (out_dest.south);
\end{tikzpicture}
\label{subfig:parallel}}
\caption{(a) Single half-duplex relay channel, (b) Parallel
  half-duplex relay network}
\end{figure}
\begin{itemize}
\item when $c\geq\min(a,b)$, i.e. when the S-D link is as strong as or
  stronger than one of the relay links, static QMF achieves the
  full-duplex DMT. The result of \cite{PawAveTse} corresponds to the
  special case $(a,b,c)=(1,1,1)$.
\item when $c <\min(a,b)$, i.e. the S-D link is weaker than both the
  relay links but $r\leq c$, then the full-duplex DMT can still be
  achieved by static QMF.
\end{itemize}
The remaining regime is when $c <\min(a,b)$ and $r>c$, i.e. when the
direct link is weaker than one of the relay links but is not
sufficient alone to provide the desired multiplexing gain. To simplify
the analysis, we concentrate on the case where $a=b=p$.  We show that:
\begin{itemize}
\item when $r\geq p/2$, static QMF is again DMT optimal. It does not
  achieve the full-duplex DMT in this case, but it does achieve the
  best DMT under the more optimistic assumption that the TX-RX
  schedule can be optimized based on the knowledge of all
  instantaneous channel realizations in the network (i.e. global CSI
  at the relay). The result implies that this additional CSI is not
  needed. The largest achievable multiplexing gain is given by
  $\frac{p+c}{2}$.
\item when $c < r< p/2$, we show that DDF achieves the optimal DMT
  under local receive CSI. In this case, global CSI can improve the
  DMT. To the best of our knowledge, this is the first upper bound on
  the DMT trade-off under limited CSI. The result of
  \cite{GunKhoGolPoo} corresponds to the special case
  $(a,b,c)=(1,1,0)$, which falls in this regime.
\end{itemize}
These conclusions are summarized in Figure~\ref{fig:dmt_plot}.

The fact that DDF is optimal for small multiplexing gains and QMF with
a fixed schedule is optimal for large multiplexing gains is
qualitatively similar to the case when the relay is equipped with
multiple antennas but with identical fading for each link
\cite{KarVar,PraVar}. However, the two settings and the resultant
trade-offs are quite different. For example, here in the very low
multiplexing gain regime, QMF with a fixed schedule also becomes
optimal. Also in the intermediate multiplexing gain regime DDF is
optimal but cannot achieve the global CSI upper bound.  Moreover, for
this regime we needed to prove a new outer bound for the DMT under
(local) limited CSI. These ideas make the results distinct from the
multiple antennas case studied in \cite{KarVar,PraVar}.

The above discussion shows that static QMF and DDF are sufficient to
achieve the optimal trade-off in the single-relay channel when
$a=b$. Moreover, in all other scenarios studied in the literature such as \cite{YukErk, SreBirVij} where the optimal DMT is achieved by another strategy, it can be shown that either DDF or static QMF is also optimal with the added benefit of avoiding extra requirements on transmit
CSI at the relays. Therefore, the current results
in the literature for half-duplex relay networks (including our
result above) exhibit the following dichotomy: in all cases where the DMT of
half-duplex relay networks is known it is either achieved by DDF where
the relay waits until it can fully decode the source message, or by
QMF with a fixed schedule independent of the channel realizations. 
A natural question is whether these strategies are enough for half-duplex relay
networks.

We demonstrate that the answer is negative by developing a dynamic QMF
(DQMF) scheme where a relay listens for a fraction of time determined
by its receive CSI that is not necessarily long enough to allow
decoding of the transmitted message. The relay then quantizes, maps
and forwards the received signal as in the original QMF
\cite{AveDigTse}.  We show that for a specific configuration of two
parallel relays given in Figure~\ref{subfig:parallel}, DQMF is needed
to achieve the optimal DMT and it outperforms both DDF and static QMF.
We characterize the DMT and identify the optimal dynamic schedule at
the relays. This establishes the necessity of dynamic QMF for
achieving the DMT of general half-duplex relays. We also give
numerical evidence to show that DQMF might be needed for other regimes ($a\neq b$)
of the single-relay channel as well.

Our contributions are summarized as follows:
\begin{itemize}
\item demonstrating that DDF or static QMF are sufficient to get the
  optimal DMT when $a=b$ (see Figure \ref{subfig:abc_relay} and
  Theorem \ref{thm:mainres1}), and identify the corresponding regimes of the single relay channel.
\item a new outer bound for the DMT of the single-relay channel, when
  there is only local CSI (see Lemma \ref{lem:ddfreceivecsi}).
\item the necessity of a dynamic QMF strategy using a parallel relay
  network topology (see Theorem \ref{thm:mainres2}). We also give
  numerical evidence suggesting its importance in the single-relay
  channel when $a\neq b$ (see Section \ref{sec:anoteqb}).
\end{itemize}

The paper is organized as follows. In Section \ref{sec:model}, we
formulate the problem, establish the models and notation and describe some preliminaries used in the paper.  We state the main results in Section~\ref{sec:MainRes}.  We
prove the DMT optimality of DDF or static QMF for the single-relay channel
when $a=b$ in Section \ref{sec:relay}.  We briefly study the case
when $a\neq b$ in Section \ref{sec:anoteqb}. We give the proof for the
necessity of DQMF for parallel relay networks in Section
\ref{sec:parallel}. We end with a brief discussion in Section
\ref{sec:Conc}. Several of the proof details are given in the Appendices.



\begin{figure}
\centering
\input{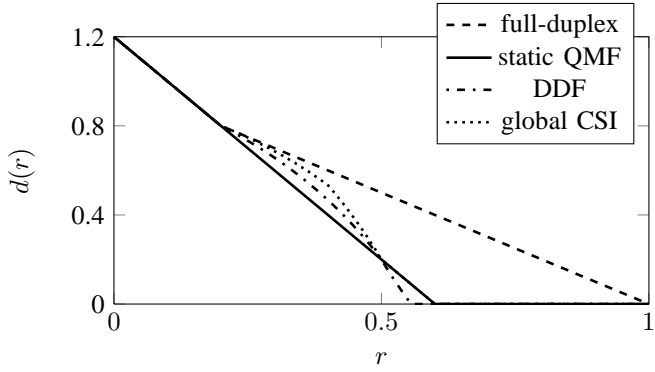}
\caption{DMT for $(a,b,c)=(1,1,0.2)$ where $(a,b,c)$ are the
  exponential orders of the average SNR's of the source-relay,
  relay-destination and source-destination channels respectively.}
\label{fig:dmt_plot}
\end{figure}

\section{Model and Preliminaries}\label{sec:model}
\subsection{Model}
We consider wireless networks where a source and a destination want
to communicate with the help of half-duplex relays. In this paper, we
consider the two configurations depicted in Figure
\ref{subfig:abc_relay} and \ref{subfig:parallel}. In the setup in Figure~\ref{subfig:abc_relay}, the source transmission is broadcasted to the relay and the destination, while the source and relay transmissions superpose at the destination. The relay is half-duplex and all nodes are equipped with a single
antenna. All channels are assumed to be flat-fading with
Rayleigh-distributed gains, i.e. the channel gains of the S-R, R-D and S-D
links are of the form $h_{sr}\rho^{a/2}$, $h_{rd}\rho^{b/2}$, $h_{sd}\rho^{c/2}$ respectively, where 
$h_{sr},h_{rd},h_{sd}$ are i.i.d. circularly-symmetric complex Gaussian random variables
$\mathcal{CN}(0,1)$ and $a,b,c\geq 0$. The additive noise
at every receiving node is assumed to be i.i.d. $\mathcal{CN}(0,1)$ and independent
across the nodes. Thus, $\rho^a$, $\rho^b$ and $\rho^c$ correspond to the average SNR's of S-R, R-D and S-D links and $a,b$ and $c$ are their exponential orders which can be different due to different path-loss and shadowing experienced by different links (see also \cite{NagPawTseNik}). We also define the exponential orders of the instantaneous SNR's for the three links as
$$\alpha\triangleq
\frac{\log(\left|h_{sr}\right|^2\rho^a)}{\log\rho},$$
\begin{equation}
\beta\triangleq
\frac{\log(\left|h_{rd}\right|^2\rho^b)}{\log\rho},
\label{eq:defalphabeta}
\end{equation}
$$\gamma\triangleq \frac{\log(\left|h_{sd}\right|^2\rho^c)}{\log\rho}.$$
These definitions are used extensively in the proofs in the following sections.

So, if $x_s[t]$, $x_r[t]$ denote the signals transmitted by the source and the relay respectively at time $t$, and $y_r[t]$ and $y_d[t]$ denote the signals received by the relay and the destination respectively, then the input-output relationships are given as follows:

\begin{IEEEeqnarray*}{l}
\text{If the relay is listening at time }t:\\
\quad\quad y_r[t] = h_{sr}[t]\rho^{a/2}x_s[t] + z_r[t]\\
\quad\quad y_d[t] = h_{sd}[t]\rho^{c/2}x_s[t] + z_d[t]\\
\\
\text{If the relay is transmitting at time } t:\\
\quad\quad y_r[t] = 0\\
\quad\quad y_d[t] = h_{sd}[t]\rho^{c/2}x_s[t] + h_{rd}[t]\rho^{b/2}x_r[t] + z_d[t],
\end{IEEEeqnarray*}
where $z_r[t]$ and $z_d[t]$ denote the additive noise at the relay and the destination respectively and the transmit signals are subject to a unit power constraint.
We assume quasi-static fading, i.e. the channel gains remain constant over the duration of the codeword and change
independently from one codeword to another. Local channel realizations
are known at the receivers but not at the transmitters, i.e. the relay
can track the realization of the source-relay link (and communicate it
to the destination) but can not track the relay-destination link. Similarly, the source node is not aware of the realizations of its outgoing channels. We also assume
that the codeword lengths are large enough so that an error occurs
only when the channel is in outage.

A sequence of codes $\{\mathcal{C}(\rho^a, \rho^b, \rho^c)\}$ indexed by $\rho$ with rate $R(\rho^a, \rho^b, \rho^c)$ and
average error probability $P_e(\rho^a, \rho^b, \rho^c)$ for a given
$(a,b,c)$ is said to achieve a multiplexing gain $r$ and diversity
gain $d$
if 
\begin{align}
\label{eq:dmt_def}
  \begin{split}
    \lim_{\rho\rightarrow\infty}\frac{R(\rho^a,
    \rho^b, \rho^c)}{\log\rho} & = r, \\
    \lim_{\rho\rightarrow\infty}\frac{\log P_e(\rho^a, \rho^b,
    \rho^c)}{\log\rho} & = -d.
  \end{split}
\end{align}
 For each multiplexing gain
$r$, the supremum $d(r)$ of diversity gains achievable over all
families of codes is called the diversity-multiplexing tradeoff (DMT)
of the half-duplex $(a,b,c)$-relay channel and is denoted by
$d_{(a,b,c)}^*(r)$.

In the setup of Figure~\ref{subfig:parallel}, the source communicates to the destination through two half-duplex parallel relays R$_1$ and R$_2$. By \emph{parallel}, we mean that there is no broadcasting from the source and no superposition at the destination and all links are independent of each other. This setup is different from the diamond network which has a similar topology. In the diamond network, the first hop resembles a Gaussian broadcast channel and the second hop resembles a Gaussian multiple-access channel, whereas the parallel relay setup we consider is composed of four orthogonal point-to-point channels. As before, nodes only know their incoming (receive) channel states and not the outgoing channel states. Here, we only focus on the case where all channels have the same average SNR $\rho$, which turns out to be sufficient for demonstrating the necessity of a dynamic QMF strategy, i.e. the channel gains of the S-R$_1$, S-R$_2$, R$_1$-D and R$_2$-D links are of the form $h_{sr_1}\rho^{1/2}$, $h_{sr_2}\rho^{1/2}$, $h_{r_1d}\rho^{1/2}$ and $h_{r_2d}\rho^{1/2}$ respectively where  $h_{sr_1},h_{sr_2},h_{r_1d},h_{r_2d}$ are i.i.d. circularly-symmetric complex Gaussian random variables $\mathcal{CN}(0,1)$. Thus, the average SNR of each of the four links is equal to $\rho$. If $x_{s_1}[t]$, $x_{s_2}[t]$ denote the signals transmitted by the source to the two relays respectively and $x_{r_1}[t]$, $x_{r_2}[t]$ denote the signals transmitted by the two relays at time $t$, then the received signals $y_{r_1}[t]$, $y_{r_2}[t]$ by the relays and $y_{d_1}[t]$, $y_{d_2}[t]$ by the destination are given as follows:

\begin{IEEEeqnarray*}{l}
\text{If relay } i \text{ is listening at time } t:\\
\quad\quad y_{r_i}[t] = h_{sr_i}[t]\rho^{1/2}x_{s_i}[t] + z_{r_i}[t]\\
\quad\quad y_{d_i}[t] = 0\\
\\
\text{If relay } i \text{ is transmitting at time } t:\\
\quad\quad y_{r_i}[t] = 0\\
\quad\quad y_{d_i}[t] = h_{r_id}[t]\rho^{1/2}x_{r_i}[t] + z_{d_i}[t],
\end{IEEEeqnarray*}

for $i=1,2 $, where $z_{r_1}[t]$, $z_{r_2}[t]$ and $z_{d_1}[t]$, $z_{d_2}[t]$ denote the additive noise at the two relays and the destination respectively and transmit signals are again subject to a unit power constraint. As in the previous setup, we assume quasi-static fading and sufficiently large codeword lengths. For this setup, we are interested in a sequence of codes $\mathcal{C}(\rho)$ for this setup indexed by $\rho$ with rate $R(\rho)$ and probability of error $P_e(\rho)$ achieving a multiplexing gain $r$ and diversity gain $d$ defined analogously to \eqref{eq:dmt_def}. For each multiplexing gain $r$, the supremum $d(r)$ of diversity gains achievable over all families of codes is called the diversity-multiplexing tradeoff (DMT) of the parallel relay network and is denoted by $d^*(r)$.

\subsection{Preliminaries}
In this subsection, we describe some results on capacity approximation that will be used in the proofs. In \cite{AveDigTse}, it was shown that for a Gaussian relay network, a quantize-map-forward (QMF) strategy can achieve rates that are within a constant gap of the capacity of the network. We specialize these results in this subsection to the setup in Figure~\ref{subfig:abc_relay}. Analogous results also apply for the setup in Figure~\ref{subfig:parallel}, and they are provided wherever required in Section~\ref{sec:parallel}.

First, consider the case when the relay in Figure~\ref{subfig:abc_relay} is full-duplex and the channel realizations $h_{sr}$, $h_{rd}$ and $h_{sd}$ are known to all the nodes. An upper-bound $C_u(\rho^a, \rho^b, \rho^c)$ on the capacity of this network is given by 
\begin{IEEEeqnarray}{lCl}
  C_u(\rho^a, \rho^b, \rho^c) \nonumber\\
  \quad\quad = \min\left\{\log\left(1+|h_{sr}|^2\rho^a + |h_{sd}|^2\rho^c\right)\right.,\nonumber\\
  \quad\quad\qquad\quad \left.\log\left(1+(|h_{rd}|\rho^{b/2} +
    |h_{sd}|\rho^{c/2})^2\right)\right\},\label{eq:fdcutset}
\end{IEEEeqnarray}
which is obtained by relaxing the standard information-theoretic cutset bound for this channel by exchanging the maximization over all joint input distributions and minimization over all cuts. 
The QMF scheme from \cite{AveDigTse} can achieve all rates upto \begin{equation}\label{eq:fdqmf}
C_u(\rho^a, \rho^b, \rho^c) - \kappa,\end{equation} where $\kappa$ is a
constant that is independent of the SNR and the channel
realizations. No transmit CSI is required by this scheme and therefore
can be applied as it is in our current outage setting.

Consider now the case of a half-duplex relay. In this case, the cutset bound and the QMF achievable rate depend on what channel state information (CSI) is assumed at the relay. In each case however, we have a constant gap result similar to the full-duplex case. 

If the half-duplex relay is assumed to follow a \emph{fixed} (non-random) listen-transmit schedule, that is independent of the channel realizations in the network, where it listens for a fixed $t$  fraction of the total time and transmits in the remaining $1-t$ fraction, then \eqref{eq:hdcutset} given at the top of the next page gives an upper bound on the achievable rate. 
\begin{figure*}[!th]
\normalsize
\begin{IEEEeqnarray}{l}
C_{h.d.}(\rho^a, \rho^b, \rho^c)\triangleq 
\min\left\{ \begin{array}{c} t\log(1+|h_{sr}|^2\rho^a +|h_{sd}|^2\rho^c)+(1-t)\log(1+|h_{sd}|^2\rho^c),\\
t\log(1+|h_{sd}|^2\rho^c)+(1-t)\log(1+ (|h_{rd}|\rho^{b/2} +|h_{sd}|\rho^{c/2})^2 )\end{array}\right\}\label{eq:hdcutset}
\end{IEEEeqnarray}
\hrulefill
\end{figure*}

The rate $R_{QMF}$ achieved by QMF with a {\em fixed} listen-transmit schedule where the relay listens for a fraction $t$ of the time is lower bounded in \cite[Section~VIII-C]{AveDigTse} as 
\begin{equation} \label{eq:hdqmf}
R_{QMF}\geq C_{h.d}(\rho^a, \rho^b, \rho^c)-\kappa,
\end{equation} where $\kappa$ as before denotes a constant independent of SNR and the channel realizations. Maximizing over all choices of $t$ yields the best rate achievable by QMF with such fixed transmit-receive schedules. 

In general, the listen-transmit schedule of the relay can be \emph{random}, in which case information can be transmitted through the sequence of listen-transmit states, and/or \emph{dynamic}, in which case it can depend on the instantaneous realizations of the channel coefficients. Consider the case when the relay has global CSI, i.e. it knows all the instantaneous realizations $h_{sd}, h_{sr}$ and $h_{rd}$ of the channels in the network. 
An upper bound on the capacity of the half-duplex $(a,b,c)$-relay channel with global CSI is given by \cite[Section VI]{OzgDig2} as $$\max_{{t(h_{sd},h_{sr},h_{sd})}}C_{h.d.}(\rho^a, \rho^b, \rho^c)+G$$ where $G$ is a constant independent of SNR and channel realizations and $C_{h.d.}$ is given in \eqref{eq:hdcutset}. Here $t$ again is the fraction of time relay listens to the source transmission but is now allowed to be a function of  $h_{sd}, h_{sr}$ and $h_{rd}$. It is also shown in \cite{OzgDig2} that a fixed dynamic QMF scheme with the same choice for $t$ that maximizes $C_{h.d.}(\rho^a, \rho^b, \rho^c)$ achieves rates that are within a constant gap of the above upper-bound, i.e. all rates less than $\max_{{t(h_{sd},h_{sr},h_{sd})}}C_{h.d.}(\rho^a, \rho^b, \rho^c)-\kappa$ are achievable by a fixed dynamic QMF scheme utilizing global CSI.

Finally, when the relay only has receive CSI (CSIR), i.e. it only knows the channel realization $h_{sr}$, an upper-bound and achievable lower-bound on the capacity of the relay channel can be obtained by adapting the proof in \cite[Section VI]{OzgDig2} to the case of limited CSI. Now, the choice of the listening time $t$ can only be a function of $h_{sr}$. Hence, we get the upper-bound on the capacity as $\max_{t(h_{sr})}C_{h.d.}(\rho^a, \rho^b, \rho^c)+G$ and lower-bound achievable by QMF as $\max_{t(h_{sr})}C_{h.d.}(\rho^a, \rho^b, \rho^c)-\kappa.$

\section{Main Results}
\label{sec:MainRes}

The main results of the paper regarding the two setups introduced in
the earlier section are summarized in the following two theorems.
\begin{thm}\label{thm:mainres1}The generalized DMT of the
  $(a,b,c)$-relay channel with $a=b=p$ is given by
\begin{IEEEeqnarray*}{l}
d_{(p,p,c)}^*(r) =\\
\quad\quad\begin{cases}
(p-r)^++(c-r)^+ & \text{if } c\geq p,\\
p+c-2r & \text{if } c< p, r\leq c,\\
p-\frac{(p-c)r}{p-r} & \text{if } c<p, c < r < \frac{p}{2},\\
p+c-2r & \text{if } c<p, \frac{p}{2}\leq r\leq \frac{p+c}{2},
\end{cases}
\end{IEEEeqnarray*} 
where the optimal strategy for the first, second and fourth cases is
static QMF with a half-TX half-RX schedule. In the third case, the DMT
is achieved by a dynamic decode and forward strategy.
\end{thm}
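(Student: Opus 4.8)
The proof requires, for each of the four regimes, a concrete scheme whose outage exponent meets a matching converse. I would organize everything around the capability ordering
$$
d_{\mathrm{QMF}}(r)\;\le\;d^*_{(p,p,c)}(r)\;\le\;\bar d_{\mathrm{CSIR}}(r)\;\le\;\bar d_{\mathrm{gCSI}}(r)\;\le\;\bar d_{\mathrm{FD}}(r),
$$
where $d^*_{(p,p,c)}$ is the quantity to be characterized (half-duplex, local receive CSI), $d_{\mathrm{QMF}}$ is the exponent of a fixed static-QMF scheme, and $\bar d_{\mathrm{CSIR}},\bar d_{\mathrm{gCSI}},\bar d_{\mathrm{FD}}$ are the outage exponents of the three capacity upper bounds of Section~\ref{sec:model} (local CSIR, global CSI, and the full-duplex cut-set \eqref{eq:fdcutset}). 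The idea is that in each regime one \emph{computable} bound is tight, so it suffices to compute a scheme's exponent and the relevant bound's exponent and check they coincide; the remaining quantities are then squeezed. Every exponent is obtained by the standard passage to the instantaneous SNR orders $(\alpha,\beta,\gamma)$ of \eqref{eq:defalphabeta}: on $\alpha\le p,\ \beta\le p,\ \gamma\le c$ the joint ``density'' of $(\alpha,\beta,\gamma)$ has exponent $\rho^{-[(p-\alpha)+(p-\beta)+(c-\gamma)]}$, so by Laplace's principle (as in \cite{ZheTse}) each outage exponent is an infimum of $(p-\alpha)^+ + (p-\beta)^+ + (c-\gamma)^+$ over the relevant region. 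Substituting into \eqref{eq:hdcutset} gives $C_{h.d.}(t)/\log\rho\doteq\min\{t\max(\alpha,\gamma)^+ + (1-t)\gamma^+,\ t\gamma^+ + (1-t)\max(\beta,\gamma)^+\}$, and into \eqref{eq:fdcutset} gives $C_u/\log\rho\doteq\min\{\max(\alpha,\gamma)^+,\,\max(\beta,\gamma)^+\}$, reducing each exponent to a finite piecewise-linear program.

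For achievability I would first treat static QMF (regimes 1, 2, 4). Fixing $t=\tfrac12$ in \eqref{eq:hdqmf} and discarding the SNR-independent $\kappa$, the outage region is $\{\tfrac12\min\{\max(\alpha,\gamma)^+ + \gamma^+,\ \gamma^+ + \max(\beta,\gamma)^+\}<r\}$, and minimizing the cost over it yields the stated $(p-r)^+ + (c-r)^+$ and $p+c-2r$. For DDF (regime 3) the relay's listening fraction is $f=\min\{1,r/\alpha\}$, so it forwards only when $\alpha>r$; the destination then collects $f\gamma^+ + (1-f)\max(\beta,\gamma)^+$, and the outage region is $\{\alpha\le r\}\cup\{\alpha>r,\ f\gamma^+ + (1-f)\max(\beta,\gamma)^+<r\}$. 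Minimizing the cost, the binding configuration is $(\alpha,\beta,\gamma)=\big(\tfrac{r(p-c)}{p-r},\,p,\,c\big)$ --- a moderately faded S-R link forcing a late hand-off with the other two links typical --- which gives exactly $p-\tfrac{(p-c)r}{p-r}$.

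The converses must match these. In regimes 1 and 2 I would compute $\bar d_{\mathrm{FD}}$ from \eqref{eq:fdcutset} and check it equals $(p-r)^+ + (c-r)^+$ and $p+c-2r$ respectively; since $d_{\mathrm{QMF}}$ already attains these values the entire chain collapses. In regime 4 the full-duplex bound is loose, so I would instead compute the global-CSI exponent $\bar d_{\mathrm{gCSI}}=\inf (p-\alpha)^+ + (p-\beta)^+ + (c-\gamma)^+$ over the region $\{\max_{t}C_{h.d.}(t)<r\}$ (the relay knowing all channels maximizes $t$ per realization), show it equals $p+c-2r$, and note it is again matched by static QMF.

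The hard part is the regime-3 converse, which needs the new local-CSIR bound, Lemma~\ref{lem:ddfreceivecsi}. Two things must be done. First, one must justify that the capacity under local receive CSI, for \emph{any} (possibly random and received-signal-dependent) half-duplex schedule, is upper bounded by $\max_{t(h_{sr})}C_{h.d.}+G$; this adapts the cut-set argument of \cite{OzgDig2} by conditioning on $h_{sr}$ and absorbing the randomness of the listen/transmit state into $G$, the point being that the relay's switching is measurable only with respect to its own received signal, hence with respect to $h_{sr}=\alpha$. Second, one must evaluate the resulting minimax exponent
$$
\bar d_{\mathrm{CSIR}}(r)=\inf_{\alpha}\Big[(p-\alpha)^+ + \max_{t\in[0,1]}\ \inf_{(\beta,\gamma):\,C_{h.d.}(t)<r}\big((p-\beta)^+ + (c-\gamma)^+\big)\Big],
$$
where $t$ is chosen per $\alpha$ but must be robust to the unknown $(\beta,\gamma)$. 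I expect this nested $\max$--$\inf$ to be the main obstacle: the inner problem is piecewise-linear, but for the converse one must show that \emph{no} $\alpha$-dependent listening schedule can beat the worst-case $(\beta,\gamma)$ by more than DDF does --- i.e. that the robust maximizer is essentially the DDF profile $t\approx r/\alpha$, so $\bar d_{\mathrm{CSIR}}(r)$ collapses to $p-\tfrac{(p-c)r}{p-r}$. Exhibiting a cheap outage configuration against every admissible schedule (giving the ``$\le$'' direction) is exactly what separates the local-CSIR bound from the strictly larger global-CSI bound in this regime, and is where the real work lies.
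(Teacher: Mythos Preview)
Your overall architecture --- the chain of capability bounds, the passage to SNR exponents $(\alpha,\beta,\gamma)$, and the reduction to piecewise-linear programs --- is exactly the paper's, and your achievability computations (static QMF with $t=\tfrac12$, and DDF with the critical configuration $(\alpha,\beta,\gamma)=\big(\tfrac{r(p-c)}{p-r},p,c\big)$) are correct and match the paper's. The regimes~1--2 converses via the full-duplex cut-set are likewise the paper's route.

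Where you are vaguer than necessary is the converses for regimes~3 and~4, and the paper's shortcut there is worth internalizing: in both cases it does \emph{not} solve the optimization but merely exhibits a single feasible point. For regime~4, rather than computing $\bar d_{\mathrm{gCSI}}$ in full, one just checks that $(\alpha,\beta,\gamma)=(p,p,2r-p)$ lies in the global-CSI outage set $\{(\alpha\beta-\gamma^2)/(\alpha+\beta-2\gamma)\le r,\ \gamma<\min(\alpha,\beta)\}$, which immediately yields $\bar d_{\mathrm{gCSI}}\le p+c-2r$. For regime~3 --- which you rightly flag as the crux --- the paper collapses the outer $\inf_\alpha$ in your $\bar d_{\mathrm{CSIR}}$ expression by simply fixing $\alpha=p$ (any single choice upper-bounds the infimum), leaving only $\max_t\inf_{(\beta,\gamma)}$. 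It then exhibits, for each $t$, an explicit adversarial pair: if $t<r/p$ take $(\beta,\gamma)=(p,0)$, giving cost $\le c$; if $t\ge r/p$ take $\gamma=c$ and $\beta=\min\{(r-tc)/(1-t),p\}$, giving cost $\le p-(r-tc)/(1-t)\le p-\tfrac{(p-c)r}{p-r}$, with the last inequality tightest at $t=r/p$. The max over the two cases is exactly the DDF value on $c<r<p/2$. So you need not identify ``the robust maximizer'' or solve the full nested minimax; a one-point-per-case argument against the fixed worst $\alpha=p$ suffices, and this is precisely what distinguishes the local-CSIR bound from the global one here.
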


The result of \cite{PawAveTse} corresponds to the special case
$(a,b,c)=(1,1,1)$, which falls in the first regime, while the result
of \cite{GunKhoGolPoo} corresponds to the special case
$(a,b,c)=(1,1,0)$, which falls in the third regime. The above results
uncover two other regimes where static QMF with a half-TX half-RX
schedule is optimal. Note that the performance reaches the full-duplex
DMT only in the first two regimes.

Theorem~\ref{thm:mainres1} suggests that the two strategies studied in literature, static QMF with a half-TX half-RX schedule and dynamic-decode-forward, are sufficient to achieve the optimal DMT in all regimes when the source to relay and relay to destination links have the same average SNR. A natural generalization of these two strategies
is dynamic QMF where a relay listens for a fraction of time determined
by its receive CSI that is not necessarily long enough to allow
decoding of the transmitted message. The relay then quantizes maps and
forwards the received signal as in the original QMF. In
Section~\ref{sec:anoteqb} we show that this additional flexibility for
the dynamic schedule can be critical for achieving the optimal DMT
when $a\neq b$. However, obtaining an explicit
expression for the optimal DMT or for the optimal dynamic schedule as
a function of the receive CSI at the relay seems difficult in this case. Instead, we demonstrate this numerically.

To obtain better insight on the necessity of dynamic QMF, we next turn
to the parallel relay network given in
Figure~\ref{subfig:parallel}. We show that even in this simple case
with no broadcast or superposition of signals, dynamic QMF is needed
to achieve the optimal trade-off. In this case, we explicitly
characterize the optimal trade-off and the optimal dynamic schedule at
the relays.%

\begin{thm}\label{thm:mainres2} The DMT of the parallel relay network in Figure~\ref{subfig:parallel} (in which the  average SNR's of the four orthogonal links are all equal to each other) is given by 
$$
d^*(r)= 	\begin{cases}
	2-\frac{r}{1-r},& 0\leq r< \frac{1}{2}\\
	2(1-r), & \frac{1}{2}\leq r\leq 1.
	\end{cases}
$$
where in the first case the optimal DMT is achieved by a dynamic QMF
scheme and in the second case it is achieved by a static QMF scheme
with a half-TX half-RX schedule. In both regimes DDF is sub-optimal.
\end{thm}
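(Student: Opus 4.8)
The plan is to reduce the theorem to an outage-exponent optimization over relay schedules and then supply matching achievability and converse bounds. Writing $\alpha_i \triangleq \log(|h_{sr_i}|^2\rho)/\log\rho$ and $\beta_i \triangleq \log(|h_{r_id}|^2\rho)/\log\rho$ for the instantaneous exponents of the two paths $S\to R_i\to D$, the Rayleigh statistics give $P(\alpha_i\approx a,\beta_i\approx b)\doteq \rho^{-[(1-a)^+ + (1-b)^+]}$. Since the four links are orthogonal, for any listen fractions $t_1,t_2$ the network min-cut decomposes over the two parallel paths as $\sum_{i}\min\{t_i\alpha_i,(1-t_i)\beta_i\}\log\rho$, and by the parallel-network analogue of the QMF guarantee \eqref{eq:hdqmf} this min-cut is achievable within a constant gap, while the limited-CSI cut-set bound (obtained by adapting \cite{OzgDig2}, as in Section~\ref{sec:model}) shows it is also an upper bound when $t_i=t_i(\alpha_i)$ may depend only on the receive CSI. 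Hence, with $f_i \triangleq \min\{t_i(\alpha_i)\alpha_i,(1-t_i(\alpha_i))\beta_i\}$,
\[ d^*(r)=\max_{t_1(\cdot),t_2(\cdot)}\ \inf_{\substack{(\alpha_i,\beta_i):\ f_1+f_2<r}}\ \sum_{i=1}^2\big[(1-\alpha_i)^+ + (1-\beta_i)^+\big], \]
and the entire statement becomes this max--min over scheduling functions.

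For achievability when $r<\tfrac12$ I would use the dynamic schedule $t(\alpha)=\min\{1-r,\,r/\alpha\}$: the relay listens just long enough to accumulate a first-hop budget $r$, but never for more than a fraction $1-r$ of the block, so a fraction at least $r$ is always reserved for transmission. The crux is the per-path ``cost to deliver exponent $f$'', $\phi(f)\triangleq\inf\{(1-\alpha)^+ + (1-\beta)^+: \min\{t(\alpha)\alpha,(1-t(\alpha))\beta\}=f\}$. A second-hop attack at $\alpha\approx1$ (where the relay listens only a fraction $r$) costs $1-f/(1-r)$, and a first-hop attack driving $\alpha$ down (where the relay responds by listening the full $1-r$) also costs $1-f/(1-r)$; the point is that adaptivity lets the two attacks face \emph{different} listen fractions, both yielding slope $1/(1-r)<2$, whereas any static $t$ is stuck with slope $\max\{1/t,1/(1-t)\}\ge2$. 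After checking these are the cheapest attacks one gets $\phi(f)=1-f/(1-r)$ for $f\le r$, so the two independent paths give $\inf_{f_1+f_2=r}[\phi(f_1)+\phi(f_2)]=2-r/(1-r)$. For $r\ge\tfrac12$ the same bookkeeping with the static schedule $t\equiv\tfrac12$ yields $\phi(f)=1-2f$ and hence $2(1-r)$.

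For the converse I would lower-bound, for an arbitrary receive-CSI scheme with listen functions $t_i(\cdot)$, the outage probability by restricting to the typical event $\{\alpha_1\approx\alpha_2\approx1\}$ (exponent $0$). There the listen fractions freeze to constants $\tau_i=t_i(1)$, and a short linear program over $(\beta_1,\beta_2)$ shows the cheapest $\beta$'s producing $\sum_i\min\{\tau_i,(1-\tau_i)\beta_i\}<r$ cost $A(\tau_1,\tau_2)$, with $\sup_{\tau}A(\tau,\tau)=2-r/(1-r)$ attained at $\tau=r$ (for $r<\tfrac12$); as this holds for every scheme, $d^*(r)\le 2-r/(1-r)$, matching achievability. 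For $r\ge\tfrac12$ the dominant cheap outage instead involves a depressed $\alpha$ (a first-hop event), and the corresponding exponent evaluates to $2(1-r)$. Finally, DDF forces each relay to fully decode its assigned message before forwarding, so partial observations cannot be soft-combined at the destination; the resulting ``both paths must individually decode'' outage (repetition/selection) gives diversity $2-2r/(1-r)$ for $r<\tfrac12$, strictly below $d^*(r)$ by $r/(1-r)$, which quantifies the gain of the soft combining inherent in DQMF.

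The step I expect to be the main obstacle is the converse. Unlike achievability it must rule out \emph{all} receive-CSI schemes, including randomized listen--transmit patterns that could in principle convey information through the schedule itself; this is precisely where the limited-CSI cut-set bound adapted from \cite{OzgDig2} is required, and where one must argue that the simple $\{\alpha_i\approx1\}$ event is genuinely the dominant outage for $r<\tfrac12$ (and, symmetrically, that a first-hop event dominates for $r\ge\tfrac12$) rather than some intermediate configuration. Checking that asymmetric schedules $\tau_1\ne\tau_2$ and mixed first-/second-hop attacks never beat the symmetric saddle point is the delicate-but-routine verification underlying the clean closed form.
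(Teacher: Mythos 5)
Your overall architecture matches the paper's: reduce to a max--min outage-exponent problem via the constant-gap QMF/cut-set results, prove achievability for $r<\tfrac12$ with a receive-CSI-adaptive schedule, and prove a converse by freezing the listen fractions on a typical first-hop event. The differences are real but compatible. Your schedule $t(\alpha)=\min\{1-r,\,r/\alpha\}$ is not the paper's $t=1-\alpha(1-r)$, but both freeze to $t=r$ at $\alpha=1$ and both cap the per-path delivered exponent at $r$, and I can verify your per-path cost $\phi(f)=1-f/(1-r)$ holds for your schedule as well, so the achievability goes through. For the converse at $r<\tfrac12$ you use a receive-CSI argument (condition on $\alpha_1=\alpha_2=1$, freeze $\tau_i=t_i(1)$, attack the unseen $\beta_i$), whereas the paper grants the relays \emph{global} CSI (Lemma~\ref{lem:dqmf2}) and exhibits the critical point $(1,\tfrac{r}{1-r},1,0)$ (Lemma~\ref{lem:par1}); your route is the one the paper reserves for $r\ge\tfrac12$ (Lemma~\ref{lem:par2}) and for the single-relay local-CSI bound, and it is arguably the more honest converse since it never needs to invoke global CSI. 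The claim $\sup_{\tau}A(\tau,\tau)=2-\tfrac{r}{1-r}$ at $\tau=r$ checks out, and asymmetric $(\tau_1,\tau_2)$ are easily seen to do worse (if some $\tau_i<r$ one path can be left at its cap for free and the other killed at cost at most $1\le 2-\tfrac{r}{1-r}$).

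Two soft spots. First, your converse for $r\ge\tfrac12$ asserts that ``the dominant cheap outage involves a depressed $\alpha$ (a first-hop event).'' That is not right: a pure first-hop attack must drive $\alpha_1+\alpha_2$ down to about $r$, costing $2-r>2-2r$, and moreover a depressed $\alpha$ is visible to the relay, which can adapt $t(\alpha)$ against it. The correct argument (the paper's Lemma~\ref{lem:par2}) is exactly the one you already use for $r<\tfrac12$: fix $\alpha_1=\alpha_2=1$, and show that for every value of $f(r,1)$ there is a $(\beta,\delta)$ attack of cost at most $2-2r$; you would need to replace your sentence with that case analysis. Second, your bookkeeping $\inf_{f_1+f_2=r}[\phi(f_1)+\phi(f_2)]$ hides a boundary issue: each path's delivered exponent is capped at $r$, so the split $f_1=r$, $f_2=0$ costs only $0+1=1<2-\tfrac{r}{1-r}$ under the literal definition of $\phi$, yet it produces total exactly $r$, which is not a strict outage and is not in the closure of the strict-outage set; the correct accounting uses the left-limit $\phi(r^-)=\tfrac{1-2r}{1-r}$ and recovers $2-\tfrac{r}{1-r}$, which your linear formula implicitly does, but this is worth making explicit since the same issue trips up a naive reading of the outage region with ``$\le$''. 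Your quoted DDF exponent $2-\tfrac{2r}{1-r}$ is also not what the paper establishes (it only shows $d_{DDF}\le 2-2r$ from the event that neither relay decodes), though the conclusion of strict suboptimality for $r<\tfrac12$ stands either way.
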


Section~\ref{sec:relay} and \ref{sec:parallel} are devoted to the
proofs of the two theorems.

\section{The half-duplex $(a,b,c)$-relay channel when $a= b$}\label{sec:relay}

In this section, we prove Theorem~\ref{thm:mainres1} in a number of
steps. Each step is summarized in a lemma.

\subsection{The Full-Duplex DMT}

We first derive the generalized diversity-multiplexing tradeoff of the
full-duplex $(a,b,c)$-relay channel. This serves as an upper bound for
the optimal DMT of the corresponding half-duplex channel.

\begin{lem} \label{lem:fullduplex} The diversity-multiplexing tradeoff
  of the full-duplex $(a,b,c)$-relay channel is given
  by $$d_{f.d.}(r)=\left(\min(a,b)-r\right)^{+}+(c-r)^{+}.$$ where
  $a^+=\max(a,0)$.
\end{lem}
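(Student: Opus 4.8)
The plan is to reduce the computation of $d_{f.d.}(r)$ to an outage-probability exponent and then evaluate that exponent by a standard large-deviations (Laplace) argument in the exponential orders $\alpha,\beta,\gamma$ defined in \eqref{eq:defalphabeta}. Since QMF achieves every rate below $C_u(\rho^a,\rho^b,\rho^c)-\kappa$ by \eqref{eq:fdqmf} while $C_u$ in \eqref{eq:fdcutset} upper bounds the capacity, the true capacity is sandwiched between the two within the $\rho$-independent constant $\kappa$; because an error is assumed to occur only in outage, both bounds yield the same diversity order. I would therefore first argue that $d_{f.d.}(r)$ equals the exponent of the outage probability $P_{\mathrm{out}}\triangleq\Pr\left[C_u(\rho^a,\rho^b,\rho^c)<r\log\rho\right]$, writing $f(\rho)\doteq\rho^{-d}$ to mean $\lim_{\rho\to\infty}\log f(\rho)/\log\rho=-d$.

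Next I would translate \eqref{eq:fdcutset} into exponential orders. By \eqref{eq:defalphabeta} we have $|h_{sr}|^2\rho^a=\rho^\alpha$, $|h_{rd}|^2\rho^b=\rho^\beta$, $|h_{sd}|^2\rho^c=\rho^\gamma$, and the cross term $2|h_{rd}||h_{sd}|\rho^{(b+c)/2}$ in the second argument of the minimum has order $(\beta+\gamma)/2\le\max(\beta,\gamma)$; hence the two cuts collapse (after dividing by $\log\rho$) to $\max(\alpha,\gamma)^+$ and $\max(\beta,\gamma)^+$, where $x^+=\max(x,0)$. The outage event is therefore $\mathcal{O}=\{\min(\max(\alpha,\gamma)^+,\max(\beta,\gamma)^+)<r\}$. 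Since $|h_{sr}|^2,|h_{rd}|^2,|h_{sd}|^2$ are i.i.d. unit-mean exponentials, the standard computation of \cite{ZheTse} gives that the joint density of $(\alpha,\beta,\gamma)$ has exponential order $\rho^{-[(a-\alpha)+(b-\beta)+(c-\gamma)]}$ on $\{\alpha\le a,\beta\le b,\gamma\le c\}$ and is super-polynomially small elsewhere. By Laplace's principle the outage exponent equals the infimum of this density exponent over $\mathcal{O}$, i.e.
\begin{equation*}
d_{f.d.}(r)=\inf\left\{(a-\alpha)+(b-\beta)+(c-\gamma):(\alpha,\beta,\gamma)\in\mathcal{O},\ \alpha\le a,\ \beta\le b,\ \gamma\le c\right\}.
\end{equation*}

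The final step is to evaluate this infimum, the only place I expect to need care. Since $\mathcal{O}$ is the union of the source-side cut region $\{\max(\alpha,\gamma)^+<r\}$ and the destination-side cut region $\{\max(\beta,\gamma)^+<r\}$, the infimum over $\mathcal{O}$ is the smaller of the two infima. On the source-side region the active constraints are $\alpha<r$ and $\gamma<r$ (together with $\alpha\le a$, $\gamma\le c$) while $\beta$ is free; taking $\beta=b$ removes its term, and pushing $\alpha\to\min(r,a)$, $\gamma\to\min(r,c)$ minimizes the remaining terms, yielding $(a-r)^++(c-r)^+$. The symmetric computation on the destination-side region yields $(b-r)^++(c-r)^+$. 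Taking the minimum and using $\min\left((a-r)^+,(b-r)^+\right)=(\min(a,b)-r)^+$ (verified by the cases $r\le\min(a,b)$ and $r>\min(a,b)$) gives $d_{f.d.}(r)=(\min(a,b)-r)^++(c-r)^+$, as claimed. The main obstacle is purely the bookkeeping: tracking the domain constraints so that the clipped terms $(\cdot)^+$ emerge in the correct positions, and checking that in each cut the unconstrained variable can be set to annihilate its own contribution without leaving the feasible region.
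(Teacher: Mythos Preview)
Your proposal is correct and follows essentially the same route as the paper: reduce $d_{f.d.}(r)$ to the outage exponent of $C_u$ via the constant-gap QMF result, rewrite the outage event in terms of the exponential orders $(\alpha,\beta,\gamma)$, and solve the resulting minimization of $(a-\alpha)+(b-\beta)+(c-\gamma)$. The only difference is in the final case split: the paper breaks on $\gamma>\min(\alpha,\beta)$ versus $\gamma\le\min(\alpha,\beta)$, whereas you use the cleaner decomposition $\mathcal{O}=\{\max(\alpha,\gamma)^+<r\}\cup\{\max(\beta,\gamma)^+<r\}$ coming directly from $\min(A,B)<r\Leftrightarrow A<r\text{ or }B<r$, which makes the two per-cut infima $(a-r)^++(c-r)^+$ and $(b-r)^++(c-r)^+$ immediate.
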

\begin{proof}
For the full-duplex relay channel at hand, since reliable communication at rates larger than the upper bound $C_u(\rho^a,
\rho^b, \rho^c)$ given in \eqref{eq:fdcutset} is
fundamentally impossible, the error probability of any strategy is
lower bounded by some fixed $\epsilon>0$ when the target rate at the
transmitter $r\log \rho$ turns out to be larger than $C_u(\rho^a,
\rho^b, \rho^c)$. Therefore, the probability of error for any strategy
is lower bounded by
$$ P_{e}(\rho^a, \rho^b, \rho^c) \geq \,\epsilon\,\, \mathbb{P}(C_u(\rho^a, \rho^b, \rho^c)\leq r\log\rho),$$
when the channel realizations are not known at the transmitter. The
probability is calculated over the random channel
realizations. Therefore, the diversity multiplexing tradeoff of the
full-duplex relay channel can be upper bounded by $d_{f.d.}(r)\leq
d_u(r),$ where $$d_u(r) =
-\lim_{\rho\rightarrow\infty}\frac{\log\mathbb{P}(C_u(\rho^a, \rho^b,
  \rho^c)\leq r\log\rho)}{\log\rho}.$$

The achievable rate by QMF for the full-duplex relay channel is given in \eqref{eq:fdqmf} as $C_u - \kappa$.  Since we assume that the codeword lengths are sufficiently large (in the quasi-static
model), the probability of error for QMF can be upper bounded as
follows:
\begin{IEEEeqnarray*}{lCl}
  \text{Pr}(\text{error})\\ 
   =  \text{Pr}(C_u - \kappa \leq r\log\rho)\cdot\text{Pr}(\text{error}|C_u - \kappa \leq r\log\rho) \\
  \quad\quad + \;\text{Pr}(C_u - \kappa > r\log\rho)\cdot\text{Pr}(\text{error}|C_u - \kappa_1 > r\log\rho)\\
   \leq  \text{Pr}(C_u - \kappa \leq r\log\rho) + \text{Pr}(\text{error}|C_u - \kappa > r\log\rho)\\
   \leq  \text{Pr}(C_u - \kappa \leq r\log\rho)+\epsilon,
\end{IEEEeqnarray*}
$\forall\epsilon>0$, where the last inequality follows since $\text{Pr}(\text{error}|C_u -
\kappa > r\log\rho)$ can be made arbitrarily small by choosing a
sufficiently long codeword length.

Since QMF is one particular scheme, the diversity achieved by QMF
$d_{QMF}(r)$ is a lower bound on $d_{f.d.}(r).$ We can prove that QMF
achieves the optimal DMT as follows:
\begin{IEEEeqnarray*}{rCl}
  d_{f.d.}(r) & \geq & d_{QMF}(r)\\ 
  & \geq & -\lim_{\rho\rightarrow\infty}\frac{\log\mathbb{P}(C_u(\rho^a, \rho^b, \rho^c) - \kappa \leq r\log\rho)}{\log\rho}\\
  & = & -\lim_{\rho\rightarrow\infty}\frac{\log\mathbb{P}(C_u(\rho^a, \rho^b, \rho^c) \leq r\log\rho)}{\log\rho}\\
  & = & d_u(r)\\
  & \geq & d_{f.d.}(r),
\end{IEEEeqnarray*}
and hence $d_{QMF}(r) = d_u(r) = d_{f.d.}(r).$ 

This equality, apart from showing that QMF is optimal, is also
convenient from the point of view of characterizing the optimal DMT:
the equality of $d_{f.d.}(r)$ and $d_u(r)$ implies that we can define
outage as the event when the cutset bound (instead of the capacity)
falls below the transmission rate $r\log\rho.$ This is convenient
because we have an explicit expression for the cutset bound whereas
the capacity of the relay channel is not known.\footnote{These
  arguments also hold for general Gaussian relay networks.} Thus, we have the chain of equalities given at the top of the next page, in which $(a)$ follows because in the high
SNR limit, the event $$C_{u}(\rho^a,\rho^b,\rho^c)\leq r\log \rho$$
 is equivalent to
$$\min\left(\max(\alpha^+,\gamma^+),\max(\beta^+,\gamma^+)\right) \leq
r,$$ where $\alpha, \beta, \gamma$ are defined in \eqref{eq:defalphabeta}
and $(b)$ follows by plugging in the
expression for the joint pdf $p_{\alpha,\beta,\gamma}$ and simplifying
in a manner similar to \cite{ZheTse}.

  \begin{figure*}[!th]
  \normalsize
\begin{IEEEeqnarray*}{rCl}
  d_{f.d.}(r) & = & -\lim_{\rho\rightarrow\infty}\frac{\log\mathbb{P}(C_u(\rho^a, \rho^b, \rho^c)\leq r\log\rho)}{\log\rho}\\
  & \stackrel{(a)}{=} & -\lim_{\rho\rightarrow\infty}\frac{\log\mathbb{P}(\min\left(\max(\alpha^+,\gamma^+),\max(\beta^+,\gamma^+)\right) \leq r)}{\log\rho}\\
  & = & -\lim_{\rho\rightarrow\infty}\frac{1}{\log\rho}\log\left(\int_{\min\left(\max(\tilde{\alpha}^+,\tilde{\gamma}^+),\max(\tilde{\beta}^+,\tilde{\gamma}^+)\right) \leq r}p_{\alpha,\beta,\gamma}(\tilde{\alpha},\tilde{\beta},\tilde{\gamma})\; d\tilde{\alpha}\;d\tilde{\beta}\;d\tilde{\gamma}\right)\\
  & \stackrel{(b)}{=} & \; \min_{\substack{\alpha\leq a,\; \beta\leq b,\; \gamma\leq c,\\ \min(\max(\alpha^+,\gamma^+),\max(\beta^+,\gamma^+))\leq r }} a+b+c-\alpha-\beta-\gamma\\
  & = & \;\min_{\substack{0\leq\alpha\leq a,\; 0\leq\beta\leq b,\;
      0\leq\gamma\leq c,\\
      \min(\max(\alpha,\gamma),\max(\beta,\gamma))\leq r }}\;
  a+b+c-\alpha -\beta -\gamma
\end{IEEEeqnarray*}
\hrulefill
\end{figure*}

So $d_{f.d.}(r)$ is given by the solution to the following optimization problem:
\begin{IEEEeqnarray}{l}
\min\;\;  a+b+c-\alpha-\beta-\gamma \nonumber\\
    \textrm{s.t. }\quad\min\left(\max(\alpha,\gamma),\max(\beta,\gamma)\right) \leq r,\nonumber\\
  \quad\quad 0\leq\alpha\leq a,\;0\leq \beta\leq b,\;0\leq \gamma\leq c.\label{eq:optprob}
\end{IEEEeqnarray}
We solve this optimization problem in the remainder of this proof. For the sake of brevity, define 
\begin{equation}
s(\alpha,\beta,\gamma)\triangleq a+b+c-\alpha-\beta-\gamma.\label{eq:s}
\end{equation}
\begin{itemize}
\item If $\gamma > \min(\alpha,\beta)$, then $${\min\left\{\max(\alpha,\gamma),\max(\beta,\gamma)\right\} =\gamma,}$$and hence the feasible region becomes ${\min(\alpha,\beta) < \gamma \leq r}$. It can be easily verified that the optimal solution is given by
$$\begin{array}{l}\gamma  =  \min(c,r), \\ \min(\alpha,\beta) =  \min(\gamma,a,b),\\ \max(\alpha,\beta) = \max(a,b)\end{array}$$
and \begin{IEEEeqnarray}{l}
s(\alpha,\beta,\gamma)\nonumber\\
\enskip = a+b+c-\max(a,b)-\min(a,b,c,r)-\min(c,r)\nonumber\\
\enskip =  \min(a,b)-\min(a,b,c,r)+(c-r)^+.\label{eq:FDcase1}\end{IEEEeqnarray}
\item If $\gamma \leq \min(\alpha,\beta)$, then $${\min\left\{\max(\alpha,\gamma),\max(\beta,\gamma)\right\} = \min(\alpha,\beta),}$$and outage implies ${\gamma\leq\min(\alpha,\beta)\leq r}$. The optimal solution in this case is $$\begin{array}{l}\min(\alpha,\beta)=\min(a,b,r),\\ \max(\alpha,\beta)=\max(a,b),\\ \gamma = \min(\alpha,\beta,c,r)\end{array}$$ and $s(\alpha,\beta,\gamma)$ has the value \begin{equation}\label{eq:FDcase2}(\min(a,b)-r)^+ + c - \min(a,b,c,r).\end{equation} \end{itemize} 
The optimal value of $s(\alpha,\beta,\gamma)$ is given by the minimum of \eqref{eq:FDcase1} and \eqref{eq:FDcase2} which is
$$d(r)=\left(\min(a,b)-r\right)^{+}+(c-r)^{+}.$$ \end{proof}

\subsection{QMF with a fixed schedule for the half-duplex relay}
\label{subsec:StaticQMF}
We now investigate the performance of the quantize map and forward
strategy (QMF) in \cite{AveDigTse} for the half-duplex relay: here,
the relay listens for half of the total duration for communication,
then quantizes its received signal at the noise level and maps it to a
random codeword, and transmits it in the second half. Since the TX-RX
schedule is fixed ahead of time and is independent of the
instantaneous channel realizations, we call this a \emph{static} QMF
strategy. Note that the strategy uses only receive CSI at the relay to
determine the noise level for quantization.

\begin{lem}\label{lem:statqmf}
The DMT achieved by \emph{static} QMF on the half-duplex \((a,b,c)\)-relay channel is given by
\begin{IEEEeqnarray*}{l}
d_{QMF}(r)= \\
\quad\quad\begin{cases}
	\left(\min(a,b)-r\right)^{+}+(c-r)^{+} & \text{if } c \geq \min(a,b) \\ 
	\left(\min(a,b)+c-2r\right)^{+} & \text{if } c < \min(a,b).
	\end{cases}
\end{IEEEeqnarray*}
\end{lem}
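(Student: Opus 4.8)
The plan is to mirror the proof of Lemma~\ref{lem:fullduplex}: first reduce the diversity of static QMF to an outage exponent of the half-duplex cutset bound $C_{h.d.}$ evaluated at the fixed schedule $t=\tfrac12$, and then evaluate that exponent by solving a small optimization in the exponential orders $\alpha,\beta,\gamma$ of \eqref{eq:defalphabeta}. For the reduction, by \eqref{eq:hdqmf} static QMF achieves every rate below $C_{h.d.}-\kappa$, so its error probability is at most $\mathbb{P}(C_{h.d.}\le r\log\rho+\kappa)+\epsilon$; conversely, $C_{h.d.}$ at the fixed $t=\tfrac12$ in \eqref{eq:hdcutset} upper bounds any rate reliably supported by this schedule, so whenever $C_{h.d.}<r\log\rho$ the scheme is in outage and errs with bounded-below probability. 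Since $\kappa$ (and the constant $G$ of the preliminaries) are independent of $\rho$, they do not affect the exponent, and both bounds give
\[
d_{QMF}(r)=-\lim_{\rho\to\infty}\frac{\log\mathbb{P}\!\left(C_{h.d.}\le r\log\rho\right)}{\log\rho}.
\]

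Next I would pass to exponential orders. Substituting \eqref{eq:defalphabeta} into \eqref{eq:hdcutset}, each logarithm is dominated in the high-SNR limit by its largest term, so with $t=\tfrac12$ the normalized cutset $C_{h.d.}/\log\rho$ tends to
\[
\tfrac12\gamma^+ + \tfrac12\min\!\left(\max(\alpha^+,\gamma^+),\,\max(\beta^+,\gamma^+)\right),
\]
which, by the distributive-lattice identity $\min(\max(x,z),\max(y,z))=\max(\min(x,y),z)$, equals $\tfrac12\gamma^+ + \tfrac12\max(\min(\alpha^+,\beta^+),\gamma^+)$. Then, exactly as in Lemma~\ref{lem:fullduplex} (Laplace principle applied to the joint density of $(\alpha,\beta,\gamma)$), the outage exponent becomes the value of
\begin{IEEEeqnarray*}{l}
\min\ (a-\alpha)+(b-\beta)+(c-\gamma)\\
\text{s.t. }\ \tfrac12\gamma+\tfrac12\max(\min(\alpha,\beta),\gamma)\le r,\\
\quad\ 0\le\alpha\le a,\ 0\le\beta\le b,\ 0\le\gamma\le c,
\end{IEEEeqnarray*}
where one may drop the $(\cdot)^+$ because any optimizer satisfies $\alpha,\beta,\gamma\ge 0$ (decreasing a variable below zero only raises the objective).

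To solve it, note the objective equals $(a+b+c)-(\alpha+\beta+\gamma)$, so minimizing it is maximizing $\alpha+\beta+\gamma$, and since $\max(\alpha,\beta)$ never enters the constraint I set the larger of $\alpha,\beta$ to its cap. This reduces the problem to two variables $m=\min(\alpha,\beta)\in[0,\min(a,b)]$ and $\gamma\in[0,c]$, maximizing $m+\gamma$ subject to $\gamma+\max(m,\gamma)\le 2r$, after which $d_{QMF}(r)=\min(a,b)+c-(m+\gamma)^\ast$. Splitting on the sign of $m-\gamma$ gives two branches: the branch $m\le\gamma$ has constraint $\gamma\le r$ and optimum $m+\gamma=\min(r,c)+\min(a,b,c,r)$, while the branch $m>\gamma$ has constraint $m+\gamma\le 2r$ and optimum $m+\gamma=\min\!\left(2r,\min(a,b)+c\right)$. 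Comparing the two according to whether $c\ge\min(a,b)$ or $c<\min(a,b)$ yields the stated two-case formula.

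The computation is elementary, so the main care lies in the bookkeeping of the final casework rather than any single hard step. In particular I expect the crux to be verifying that when $c\ge\min(a,b)$ the $m\le\gamma$ branch dominates (collapsing the tradeoff to $(\min(a,b)-r)^++(c-r)^+$), whereas when $c<\min(a,b)$ the $m>\gamma$ branch dominates and the diagonal point $m=\gamma=r$ is feasible, giving $(\min(a,b)+c-2r)^+$. A secondary point to argue cleanly is that static QMF is outage-limited in \emph{both} directions, so that the exponent above is exactly the achieved diversity and not merely a bound.
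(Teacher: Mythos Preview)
Your proposal is correct and follows essentially the same approach as the paper: both reduce the diversity to the outage exponent of $C_{h.d.}$ at $t=\tfrac12$ via \eqref{eq:hdqmf} and then solve the resulting optimization in $(\alpha,\beta,\gamma)$ by case analysis. Your use of the lattice identity $\min(\max(\alpha,\gamma),\max(\beta,\gamma))=\max(\min(\alpha,\beta),\gamma)$ to collapse the constraint to $\gamma+\max(m,\gamma)\le 2r$ with $m=\min(\alpha,\beta)$, and then immediately capping $\max(\alpha,\beta)$, is a mild streamlining of the paper's more explicit four-subcase enumeration (the paper splits first on $c\gtrless\min(a,b)$ and then on $\gamma\gtrless\min(\alpha,\beta)$), but the substance is the same. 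One small imprecision: your stated optimum $\min(2r,\min(a,b)+c)$ for the $m>\gamma$ branch is only correct when $c<\min(a,b)$ (when $c\ge\min(a,b)$ the branch is capped at $\min(2r,2\min(a,b))$ because $\gamma<m\le\min(a,b)$), but since you already note that the $m\le\gamma$ branch dominates in that regime this does not affect the conclusion.
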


\begin{proof}
The rate $R_{QMF}$ achieved by QMF on the half-duplex relay channel in Figure~\ref{subfig:abc_relay} with a {\em fixed} RX-TX schedule for the relay is lower bounded by $R_{QMF}$ in \eqref{eq:hdqmf}. 
Hence, using the same line of arguments as in Lemma~\ref{lem:fullduplex}, we can reduce the problem of characterizing the DMT achieved by this strategy to the following optimization problem:
\begin{IEEEeqnarray}{C}
d_{QMF}(r) \;\;=\;\; \min_{(\alpha,\beta,\gamma) \in\mathcal{O}(r)}\;  s(\alpha,\beta,\gamma)\label{eq:statqmfoptprob}
\end{IEEEeqnarray} where $s(\alpha,\beta,\gamma)$ is defined in \eqref{eq:s} and

\begin{IEEEeqnarray*}{l}
\mathcal{O}(r) \\
= \left\{ (\alpha,\beta,\gamma) : 0\leq\alpha\leq a,\;0\leq \beta\leq b,\;0\leq \gamma\leq c,\; r_{h.d.} \leq r\right\}
\end{IEEEeqnarray*} and 
\begin{IEEEeqnarray}{l}
r_{h.d.}\triangleq\nonumber\\
\min\left\{t\max(\alpha,\gamma) +(1-t)\gamma,t\gamma+(1-t)\max(\beta,\gamma)\right\}.\IEEEeqnarraynumspace\label{eq:rhd}
\end{IEEEeqnarray}
The set $\mathcal{O}(r)$, as before, is the set of channel realizations for which the strategy is in outage, i.e. the  multiplexing rate $r_{h.d.}$ achieved by the strategy falls below  the desired multiplexing rate $r$. We choose $t=1/2$ for the strategy in which case the multiplexing rate $r_{h.d}$ becomes
\begin{equation*}\label{eq:rhd_halfhalf}
\min\left\{\frac{1}{2}\max(\alpha,\gamma) +\frac{1}{2}\gamma,\frac{1}{2}\gamma+\frac{1}{2}\max(\beta,\gamma)\right\}.\end{equation*}
We solve the optimization problem by splitting it into cases $c \geq\min(a,b)$ and $c <\min(a,b)$.

\begin{description}
\item[Case I ] \hspace{5 pt}$c \geq\min(a,b)$
\begin{itemize}[leftmargin=*]
\item If $\gamma > \min(\alpha,\beta)$, then ${r_{h.d.}\leq r}$ implies $$\min(\alpha,\beta) < \gamma \leq r.$$As in the proof for the full-duplex case, the optimal solution is $$\begin{array}{l}\gamma  =  \min(c,r), \\ \min(\alpha,\beta) =  \min(\gamma,a,b),\\ \max(\alpha,\beta) = \max(a,b),\end{array}$$ which gives \begin{IEEEeqnarray}{l}
s(\alpha,\beta,\gamma)\nonumber\\
=a+b+c-\max(a,b)-\min(a,b,c,r)-\min(c,r)\nonumber\\
=  \min(a,b)-\min(a,b,c,r)+(c-r)^+\nonumber\\
=  \left(\min(a,b)-r\right)^{+}+(c-r)^{+}\label{eq:HDcase11}\end{IEEEeqnarray}
\item If $\gamma \leq \min(\alpha,\beta)$, then $r_{h.d.}\leq r$ implies $$\frac{1}{2}\left(\gamma+\min(\alpha,\beta)\right)\leq r.$$
\begin{itemize}
\item If $r\leq\min(a,b)$, then an optimal point is
$$\begin{array}{l}\gamma=\min(\alpha,\beta)=r,\\ \max(\alpha,\beta)=\max(a,b),\end{array}$$  which gives $$s(\alpha,\beta,\gamma)=\min(a,b)+c-2r.$$
\item If $r>\min(a,b)$, then the optimal point is
$$\begin{array}{l}\gamma= \min(\alpha,\beta)=\min(a,b),\\ \max(\alpha,\beta)=\max(a,b),\end{array}$$ and this results in $$s(\alpha,\beta,\gamma)=c-\min(a,b).$$
\end{itemize} 
\end{itemize}
Combining these results it is easy to observe that \eqref{eq:HDcase11} is the optimal solution.
\item[Case II ] \hspace{7 pt}$c<\min(a,b)$
\begin{itemize}[leftmargin=*]
\item If $\gamma > \min(\alpha,\beta)$, then $r_{h.d.}\leq r$ implies $$\min(\alpha,\beta) < \gamma \leq r.$$
\begin{itemize}
\item If $r\leq c$, an optimal point is $$\begin{array}{l}\gamma=\min(\alpha,\beta)=r,\\ \max(\alpha,\beta)=\max(a,b),\end{array}$$ which gives $$s(\alpha,\beta,\gamma)=\min(a,b)+c-2r.$$
\item If $r>c$, then the optimal point is $$\begin{array}{l}\gamma= \min(\alpha,\beta)=c,\\ \max(\alpha,\beta)=\max(a,b),\end{array}$$ and this results in $$s(\alpha,\beta,\gamma)=\min(a,b)-c.$$
\end{itemize} 
\item If $\gamma \leq \min(\alpha,\beta)$, then $r_{h.d.}\leq r$ implies $$\frac{1}{2}\left(\gamma+\min(\alpha,\beta)\right)\leq r.$$
\begin{itemize}
\item If $r\leq c$, an optimal point is $$\begin{array}{l}\gamma=\min(\alpha,\beta)=r,\\ \max(\alpha,\beta)=\max(a,b),\end{array}$$ which gives $$s(\alpha,\beta,\gamma)=\min(a,b)+c-2r.$$
\item If $ r > c$, an optimal point is $$\begin{array}{l}\gamma = c,\\ \min(\alpha,\beta)=2r-c,\\ \max(\alpha,\beta)=\max(a,b),\end{array}$$ and this results in $$s(\alpha,\beta,\gamma)=\min(a,b)+c-2r.$$
\end{itemize}
\end{itemize}
Thus, $d(r) = \left(\min(a,b)+c-2r\right)^+$ in this case.
\end{description}\end{proof}

Comparing Lemma~\ref{lem:fullduplex} and Lemma~\ref{lem:statqmf}, we immediately have the following corollary.
\begin{cor}\label{cor:sqmf}
Static QMF is optimal and achieves the full duplex DMT in the half-duplex $(a,b,c)$ relay channel when
\begin{itemize} 
\item $c\geq\min(a,b)$,
\item $c<\min(a,b)$ and $r\leq c$.
\end{itemize}
\end{cor}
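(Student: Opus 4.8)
The plan is to prove the corollary by a sandwiching argument that combines the two preceding lemmas, without any new analysis. The structure of the half-duplex problem gives two bookend inequalities: since a full-duplex relay can always emulate any half-duplex schedule, the quantity $d_{f.d.}(r)$ computed in Lemma~\ref{lem:fullduplex} is an upper bound on the optimal half-duplex tradeoff $d_{(a,b,c)}^*(r)$; and since static QMF is itself a legitimate half-duplex scheme, the achievable exponent $d_{QMF}(r)$ from Lemma~\ref{lem:statqmf} is a lower bound on $d_{(a,b,c)}^*(r)$. Therefore it suffices to verify that these two expressions coincide in each of the two stated regimes, which would pin down $d_{QMF}(r)=d_{(a,b,c)}^*(r)=d_{f.d.}(r)$ and simultaneously establish optimality of static QMF and the attainment of the full-duplex performance.

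First I would dispatch the regime $c\geq\min(a,b)$. Here the first branch of Lemma~\ref{lem:statqmf} gives $d_{QMF}(r)=(\min(a,b)-r)^{+}+(c-r)^{+}$, which is verbatim the full-duplex expression of Lemma~\ref{lem:fullduplex}, so equality holds with no computation. Next I would treat $c<\min(a,b)$ with $r\leq c$. On the full-duplex side, the hypotheses $r\leq c<\min(a,b)$ make both positive parts active, so $d_{f.d.}(r)=(\min(a,b)-r)+(c-r)=\min(a,b)+c-2r$. On the QMF side, the second branch of Lemma~\ref{lem:statqmf} gives $d_{QMF}(r)=(\min(a,b)+c-2r)^{+}$, and since $r\leq c$ together with $r<\min(a,b)$ yields $2r\leq c+\min(a,b)$, the clipping by the positive part is inactive; thus $d_{QMF}(r)=\min(a,b)+c-2r=d_{f.d.}(r)$ as well.

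Because the content is entirely elementary, I do not expect a genuine obstacle here: the corollary is just the side-by-side reading of the two piecewise formulas, wrapped in the observation that full-duplex dominates half-duplex. The only point meriting a moment of care is confirming that the positive-part operation in $d_{QMF}$ does not bind in the second regime, which follows immediately from $r\leq c<\min(a,b)$; everywhere else the two expressions are syntactically identical.
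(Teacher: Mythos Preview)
Your proposal is correct and mirrors the paper's own treatment: the paper states that the corollary follows ``immediately'' by comparing Lemma~\ref{lem:fullduplex} with Lemma~\ref{lem:statqmf}, and you have spelled out exactly that comparison, including the sandwich $d_{QMF}(r)\leq d_{(a,b,c)}^*(r)\leq d_{f.d.}(r)$ and the verification that the two piecewise formulas coincide in each regime. The only slight addition you provide is the explicit check that the positive part in $d_{QMF}$ is inactive when $r\leq c<\min(a,b)$, which is correct and is the one detail the paper leaves implicit.
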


This result shows that the half-duplex constraint does not appear in
the optimal DMT as long as $c\geq\min(a,b)$, \emph{i.e.,} when the
average SNR of the direct link is larger than the average SNR of one
of the relay links. The full-duplex DMT can be achieved with a fixed
schedule, extending the result of \cite{PawAveTse} for
$(a,b,c)=(1,1,1)$. The same conclusion holds for $c<\min(a,b)$ but
only for small multiplexing rates, i.e. when $r\leq c$.

\subsection{Dynamic Decode and Forward}
Since $c\geq\min (a,b)$ has been completely characterized, we focus on
the case $c<\min(a,b)$ in the rest of this section. We next establish
the DMT achieved by dynamic decode and forward (DDF) introduced in
\cite{AzaGamSch}. Here the relay node waits until it it is able to
decode the transmitted message from the source which is encoded with a
random Gaussian codebook. It then re-encodes the message via a
randomly chosen independent Gaussian codebook and transmits it in the
remaining time. The destination node chooses the most likely message
in the source codebook given its observation. The fraction of time the
relay listens is determined dynamically depending on the transmission
rate and the instantaneous realization of the S-R link.

Let $\alpha,\beta,\gamma$ be as defined in \eqref{eq:defalphabeta}. Following \cite{AzaGamSch}, the fraction of time the relay needs to listen to decode the source message is given by $t = \frac{r\log\rho}{\log(1+|h_{sr}|^2\rho^a)} \rightarrow \frac{r}{\alpha}$ asymptotically in $\rho$. Outage occurs if at least one of the following two events occur:
\begin{itemize}
\item $\frac{r}{\alpha}> 1$ and $\gamma < r$: In this case, the relay never gets to decode the source message and therefore never gets the chance to transmit, and the direct link is not strong enough to support the desired rate alone; 
\item $t = \frac{r}{\alpha}\leq 1$ and $t\gamma+(1-t)\max(\gamma,\beta) < r$: In this case, the relay decodes and transmits but the mutual information acquired over the S,R--D cut is not sufficient to support the desired rate.
\end{itemize}
As before, the DMT of this strategy is given by $d_{DDF}(r) = \min\; a+b+c-\alpha-\beta-\gamma$ given the system is in outage. 
Solving this optimization problem, we arrive at the following lemma. 

\begin{lem}\label{lem:ddf}
The diversity-multiplexing tradeoff achieved by DDF on the half-duplex \((a,b,c)\)-relay channel when ${c<\min(a,b)}$ is given by $\max(d_{DDF}(r),0)$ where $d_{DDF}(r)=$
$$
\begin{cases}
	\min(a,b)+c-2r & \text{if } 0 \leq r \leq \min\left(c,\frac{\max(a,b)}{2}\right), \\
	\vspace{2mm}
	\min(a,b) - \frac{(\max(a,b)-c)r}{\max(a,b)-r} & \text{if } c < r < \frac{\max(a,b)}{2},\\ \vspace{2mm}
	\left(\frac{ab}{r}-a-b+c\right)^+ &\text{if } r\geq \frac{\max\left(a,b\right)}{2}.
	\end{cases}
$$
\end{lem}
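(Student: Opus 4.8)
The plan is to reduce the computation of $d_{DDF}(r)$ to a deterministic optimization, exactly as in the proofs of Lemma~\ref{lem:fullduplex} and Lemma~\ref{lem:statqmf}. Using the joint large-deviations behaviour of the exponential orders $(\alpha,\beta,\gamma)$ defined in \eqref{eq:defalphabeta}, the outage probability decays as $\rho^{-\min s}$, where $s(\alpha,\beta,\gamma)$ is the cost in \eqref{eq:s} and the minimum is taken over the outage region subject to the box constraints $0\le\alpha\le a$, $0\le\beta\le b$, $0\le\gamma\le c$. Since the outage region is the union of the two events listed just before the lemma, and the exponent of a union of events is the minimum of the individual exponents, I would write $d_{DDF}(r)=\min\{s_1^\star,s_2^\star\}$, where $s_1^\star$ and $s_2^\star$ are the minima of $s$ over event~$1$ (relay never decodes) and event~$2$ (relay decodes but the S,R--D cut is insufficient) respectively, and finally apply $(\cdot)^+$ to absorb those configurations for which the unconstrained minimiser of $s$ already lies in the outage set, corresponding to zero diversity.

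The first subproblem is a straightforward linear program: on event~$1$ the constraint is $\{\alpha<r\}\cap\{\gamma<r\}$ while $\beta$ is unconstrained, so I set $\beta=b$ and maximise $\alpha+\gamma$ over $\alpha\le\min(r,a)$ and $\gamma\le\min(r,c)$, which yields $s_1^\star$ in closed form. The real work is the second subproblem. There I substitute $t=r/\alpha$ and split on the value of $\max(\gamma,\beta)$: when $\beta<\gamma$ the cut constraint collapses to $\gamma<r$ and reduces to another linear program, whereas when $\beta\ge\gamma$ the active constraint is the hyperbola $\tfrac{r}{\alpha}\gamma+(1-\tfrac{r}{\alpha})\beta=r$, i.e. $\beta=\tfrac{r(\alpha-\gamma)}{\alpha-r}$. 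Solving this last branch is the heart of the proof.

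For the hyperbolic branch I would proceed in two stages. First, for a fixed listening exponent $\alpha$, I substitute $\beta=\tfrac{r(\alpha-\gamma)}{\alpha-r}$ into the objective; the derivative of $\alpha+\beta+\gamma$ with respect to $\gamma$ then equals $\tfrac{\alpha-2r}{\alpha-r}$, so the optimal $\gamma$ is pushed to $0$ when $\alpha<2r$ and to $c$ when $\alpha>2r$. This sign change is exactly what produces the threshold $r=\tfrac{\max(a,b)}{2}$ in the statement, since $\alpha$ is capped by its box constraint. Second, I optimise over $\alpha\in[r,a]$ while enforcing the remaining constraints $\beta\le b$ and $\gamma\le c$. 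One finds that in the middle regime $c<r<\tfrac{\max(a,b)}{2}$ the binding configuration saturates one relay-link exponent together with the direct-link exponent at $\gamma=c$ while the cut fixes the third exponent, which after simplification gives the rational term $\min(a,b)-\tfrac{(\max(a,b)-c)r}{\max(a,b)-r}$; whereas for $r\ge\tfrac{\max(a,b)}{2}$ both relay-link exponents saturate at their maxima and $\gamma$ absorbs the slack in the cut, giving the corner term $\tfrac{ab}{r}-a-b+c$. It is in this step that the $\max(a,b)/\min(a,b)$ structure of the answer emerges.

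I expect the main obstacle to be the bookkeeping of this last step: tracking which of the box constraints $\alpha\le a$, $\beta\le b$, $\gamma\le c$ are active as $r$ crosses $c$ and $\tfrac{\max(a,b)}{2}$, and verifying that the resulting piecewise-rational envelope is exactly the claimed one. In particular one must confirm which of $s_1^\star$ and $s_2^\star$ is binding in each interval (the middle regime is driven by event~$2$ rather than event~$1$) and check continuity at the endpoints $r=c$ and $r=\tfrac{\max(a,b)}{2}$. I would close by sanity-checking the boundary values against Lemma~\ref{lem:fullduplex} and Lemma~\ref{lem:statqmf}, and against the symmetric specialisation $a=b=p$ that feeds into Theorem~\ref{thm:mainres1}.
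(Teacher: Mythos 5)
Your overall plan is the same as the paper's: reduce the outage probability to the deterministic program $\min s(\alpha,\beta,\gamma)$ over the union of the two outage events, dispose of event~1 by the linear program you describe (the paper gets exactly $s_1^\star=(a-r)^++(c-r)^+$ and discards it by noting it is no smaller than the full-duplex exponent), and for event~2 substitute $t=r/\alpha$ and optimize in two stages. Your inner-stage computation is also the paper's, just phrased analytically instead of geometrically: the derivative $\frac{\alpha-2r}{\alpha-r}$ is the paper's observation that the constraint line through $(r,r)$ in the $(\gamma,\beta)$ plane has the same slope as the level sets of $\beta+\gamma$ exactly when $t=1/2$, i.e. $\alpha=2r$. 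The optimal configurations you name for the middle and high-rate regimes, $(\alpha,\beta,\gamma)=\bigl(\tfrac{r(b-c)}{b-r},b,c\bigr)$ and $\bigl(a,b,a+b-\tfrac{ab}{r}\bigr)$ for $a<b$, agree with the paper's Tables~I and~II.

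The one step that would fail as written is your attribution of the breakpoint $r=\max(a,b)/2$. You claim it comes from the sign change of $\frac{\alpha-2r}{\alpha-r}$ combined with the cap on $\alpha$; but that mechanism places a transition at $r=a/2=\min(a,b)/2$ when $a<b$, and $r=a/2$ is in fact \emph{not} a breakpoint of the final answer --- the value $a-\frac{(b-c)r}{b-r}$ persists across it, as the paper's tables show for the sub-intervals $c<r<a/2$, $a/2<r<ab/(a+b)$ and $ab/(a+b)<r<b/2$. The true origin of the threshold is the \emph{outer} optimization over $\alpha$ on the branch where $\beta$ is pinned at $b$ and $\gamma$ lies on the cut constraint: there $s=a+c-b-\alpha\,\frac{2r-b}{r}$, so whether one pushes $\alpha$ up to $a$ (giving $\frac{ab}{r}-a-b+c$) or down to $\frac{r(b-c)}{b-r}$ (giving $a-\frac{(b-c)r}{b-r}$) is governed by the sign of $b-2r$, i.e. of $\max(a,b)-2r$. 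If you run the bookkeeping expecting the wrong breakpoint you will either miss the transition at $b/2$ or introduce a spurious one at $a/2$; with that corrected, your plan goes through and reproduces the paper's case analysis.
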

Note that in the first regime DDF also achieves the full-duplex DMT. Note also that the second regime only occurs when $c\leq \frac{\max(a,b)}{2}$. 
\begin{proof}
Please refer to Appendix \ref{sec:lemddf}.
\end{proof}

\subsection{DMT with global CSI} 
We next turn to proving upper bounds on the achievable DMT that are tighter than the full duplex upper bound. In this section, we upper bound the achievable DMT under the optimistic assumption that the relay not only knows its incoming channel state but all the channel states in the network and can optimize its TX-RX times accordingly (global CSI). This obviously upper bounds the achievable DMT when the relay only has receive CSI which is the assumption in our model. (In the next section, we derive an even tighter upper bound on the achievable DMT with only receive CSI at the relay.)    

In the current and the next subsections, we restrict our attention to the case when $a$ and $b$ are equal. Let $a=b=p$.\footnote{Extending our results to the case $a\neq b$ remains an open problem; see Section \ref{sec:anoteqb} for a discussion.}  Recall that we are considering $c<p$ since when $c\geq p$ we have shown in the earlier sections that static QMF is DMT optimal. The upper bound of the current section, establishes yet another regime where static QMF with half TX-half RX schedule achieves the optimal DMT. We show that when $r\geq \frac{p}{2}$, static QMF achieves the optimal DMT although it falls short of achieving the full-duplex performance.

\begin{lem}\label{lem:dqmf}
The DMT of the half-duplex $(a,b,c)$-relay channel with global CSI $d_{G-CSI}(r)$ is given by
\begin{equation*}\label{eq:dqmf}
d_{G-CSI}(r)= \min_{(\alpha,\beta,\gamma)\in\mathcal{O}(r)} s(\alpha,\beta,\gamma),
\end{equation*}
\begin{equation}\label{eq:globaloutage}
\mathcal{O}(r) = \left\{(\alpha,\beta,\gamma): \begin{array}{cc}\frac{\alpha\beta-\gamma^2}{\alpha+\beta-2\gamma} \leq r,\\ \gamma<\min(\alpha,\beta),\end{array} \begin{array}{cc}0\leq\alpha\leq a, \\ 0\leq\beta\leq b, \\ 0\leq\gamma\leq c\end{array}\right\}.\end{equation}
\end{lem}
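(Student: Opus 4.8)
The plan is to reduce the computation of $d_{G-CSI}(r)$ to an outage-exponent optimization exactly as in Lemma~\ref{lem:fullduplex}, and then to evaluate the relay's optimal listening fraction in closed form. By the global-CSI bounds recalled in Section~\ref{sec:model}, the capacity of the channel lies between $\max_{t}C_{h.d.}-\kappa$ and $\max_{t}C_{h.d.}+G$, where the maximum is over listening fractions $t=t(h_{sd},h_{sr},h_{rd})\in[0,1]$ and $C_{h.d.}$ is given in \eqref{eq:hdcutset}. Since these two bounds differ only by an SNR-independent constant, at high SNR I may define outage as the event $\max_{t}C_{h.d.}\le r\log\rho$ and, repeating the argument of Lemma~\ref{lem:fullduplex} verbatim, obtain $d_{G-CSI}(r)=\min s(\alpha,\beta,\gamma)$ over $0\le\alpha\le a$, $0\le\beta\le b$, $0\le\gamma\le c$ intersected with the outage region. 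Dividing $C_{h.d.}$ by $\log\rho$ and letting $\rho\to\infty$ (the convergence is uniform in $t$, since both branches are affine in $t$ with coefficients converging to $\max(\alpha,\gamma),\gamma$ and $\gamma,\max(\beta,\gamma)$ respectively), the outage event becomes $r^*(\alpha,\beta,\gamma)\le r$, where
\begin{equation*}
r^*(\alpha,\beta,\gamma)=\max_{0\le t\le 1}\min\{\,t\max(\alpha,\gamma)+(1-t)\gamma,\ t\gamma+(1-t)\max(\beta,\gamma)\,\}.
\end{equation*}

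Next I evaluate $r^*$ on the set $\gamma<\min(\alpha,\beta)$, which is the region retained in \eqref{eq:globaloutage}. There $\max(\alpha,\gamma)=\alpha$ and $\max(\beta,\gamma)=\beta$, so the first branch $f_1(t)=\gamma+t(\alpha-\gamma)$ is strictly increasing and the second branch $f_2(t)=\beta-t(\beta-\gamma)$ is strictly decreasing. Hence the inner minimum is maximized at the unique crossing $f_1(t^*)=f_2(t^*)$, giving $t^*=\frac{\beta-\gamma}{\alpha+\beta-2\gamma}\in(0,1)$ and, after substitution and simplification of the numerator $\gamma(\alpha+\beta-2\gamma)+(\beta-\gamma)(\alpha-\gamma)=\alpha\beta-\gamma^2$, the value $r^*=\frac{\alpha\beta-\gamma^2}{\alpha+\beta-2\gamma}$. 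Thus on this region the outage condition is precisely $\frac{\alpha\beta-\gamma^2}{\alpha+\beta-2\gamma}\le r$, reproducing the constraints of $\mathcal{O}(r)$.

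It remains to argue that the complementary region $\gamma\ge\min(\alpha,\beta)$ may be dropped without lowering the minimum of $s$. On that region $\max(\alpha,\gamma)=\gamma$ forces $f_1\equiv\gamma$ while $f_2\ge\gamma$, so $r^*=\gamma$ and outage reads $\gamma\le r$; moreover the two expressions for $r^*$ agree on the common boundary $\gamma=\min(\alpha,\beta)$, where both equal $\gamma$, so $r^*$ is continuous. Given any outage point there, say with $\alpha\le\beta$ (hence $\alpha\le\gamma$), raising $\alpha$ to $\alpha'=\gamma$ is feasible since $\gamma\le c<p=a$, strictly decreases $s$, and keeps $r^*=\gamma\le r$, moving the point onto the boundary $\gamma=\min(\alpha',\beta)$, which lies in the closure of $\{\gamma<\min(\alpha,\beta)\}$. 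Therefore the infimum of $s$ over the full outage region is attained in the closure of the region kept in \eqref{eq:globaloutage}, and by continuity the use of a strict inequality there does not change the value. This establishes the claimed characterization of $d_{G-CSI}(r)$.

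The main obstacle I anticipate is this last step: cleanly justifying that the branch $\gamma\ge\min(\alpha,\beta)$ contributes nothing new, i.e. making the domination and continuity argument rigorous so that the single smooth constraint $\frac{\alpha\beta-\gamma^2}{\alpha+\beta-2\gamma}\le r$ together with $\gamma<\min(\alpha,\beta)$ captures the true outage minimum. By contrast, the inner maximization over $t$ is routine once the monotonicity of the two affine branches is observed, and the reduction to the outage exponent is inherited directly from Lemma~\ref{lem:fullduplex}.
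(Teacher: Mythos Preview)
Your approach is correct and matches the paper's: reduce to the outage-exponent minimization via the constant-gap bounds recalled in Section~\ref{sec:model}, and then on $\gamma<\min(\alpha,\beta)$ equate the two affine branches to obtain $t^*=\frac{\beta-\gamma}{\alpha+\beta-2\gamma}$ and $r^*=\frac{\alpha\beta-\gamma^2}{\alpha+\beta-2\gamma}$.

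The only real difference is in how the complementary region $\gamma\ge\min(\alpha,\beta)$ is discarded. The paper does not use your domination argument; it simply observes that on that region the half-duplex outage condition $r^*\le r$ reduces to $\gamma\le r$, which is exactly the full-duplex outage condition there, so the minimum of $s$ over that sub-region is at least $d_{f.d.}(r)$. Since $d_{G-CSI}(r)\le d_{f.d.}(r)$ is already known, that sub-region cannot determine the overall minimum and can be dropped. Your direct argument (raise $\alpha$ to $\gamma$, feasible because $\gamma\le c<p=a$) is also valid and more self-contained, but note that you have only treated the sub-case $\alpha\le\gamma\le\beta$; when both $\alpha,\beta\le\gamma$ you must raise both coordinates to $\gamma$ (again feasible since $\gamma\le c<p=a=b$) to land on the boundary.
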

We do not explicitly characterize the trade-off here. In Lemma~\ref{lem:abstatqmf} below, we will further upper bound $d_{G-CSI}(r)$ by considering one specific point in the domain of this minimization problem.
\begin{proof} From Appendix~\ref{app:glob}, the DMT under global CSI is given by the solution of the optimization problem:
\begin{IEEEeqnarray}{rCl}
d_{G-CSI}(r) & = & \min_{(\alpha,\beta,\gamma)\in\mathcal{O}(r)}\; a+b+c-\alpha-\beta-\gamma,\IEEEeqnarraynumspace\label{eq:globalopt}
\end{IEEEeqnarray}
where
\begin{IEEEeqnarray*}{l}
\mathcal{O}(r)=\left\{(\alpha,\beta,\gamma) : \begin{array}{l}0\leq\alpha\leq a,\;0\leq\beta\leq b,\;0\leq\gamma\leq c,\\ \max_{t(\alpha,\beta,\gamma)} r_{h.d.}\leq r \end{array}\right\},
\end{IEEEeqnarray*}
where $r_{h.d.}$ is defined in \eqref{eq:rhd} but now we allow $t$ to depend on $\alpha, \beta$ and $\gamma$.

If we take $\gamma\geq\min(\alpha,\beta)$ in $\mathcal{O}(r)$, the right-hand side of \eqref{eq:globalopt} is greater than or equal to $ d_{f.d.}(r)$ and the bound is no tighter than the full-duplex upper bound. So, we concentrate on $\gamma<\min(\alpha,\beta)$. It is easy to see that the optimal choice of $t$ is obtained by equating the two terms in $r_{h.d.}$ and  when $\gamma<\min(\alpha,\beta)$, the optimal listening time for the relay becomes $$t=\frac{\beta-\gamma}{\alpha+\beta-2\gamma}.$$
Substituting this in $r_{h.d.}$ gives the outage region in \eqref{eq:globaloutage}. This completes the proof of the lemma.\end{proof}

\begin{lem}\label{lem:abstatqmf}
When $c<a=b=p$, static QMF (with equal listening and transmit times) is optimal for $r\geq \frac{p}{2}$ on the half-duplex $(a,b,c)$-relay channel.
\end{lem}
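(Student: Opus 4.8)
The plan is to establish optimality by a sandwich argument, trapping the optimal DMT between the achievable exponent of static QMF (a lower bound) and the exponent achievable under global CSI (an upper bound), and showing the two agree for $r\geq p/2$. Since static QMF uses only receive CSI, Lemma~\ref{lem:statqmf} gives the lower bound $d_{(p,p,c)}^*(r)\geq d_{QMF}(r)=(p+c-2r)^+$. On the other hand, granting the relay global CSI can only enlarge the class of admissible strategies, so (as noted in the subsection preceding Lemma~\ref{lem:dqmf}) we have $d_{(p,p,c)}^*(r)\leq d_{G\text{-}CSI}(r)$, where $d_{G\text{-}CSI}(r)$ is the minimization value characterized in Lemma~\ref{lem:dqmf}. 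It therefore suffices to prove that $d_{G\text{-}CSI}(r)\leq p+c-2r$ for $p/2\leq r\leq (p+c)/2$; combined with the lower bound this pins down $d_{(p,p,c)}^*(r)=p+c-2r=d_{QMF}(r)$, which is exactly the assertion that static QMF is optimal.

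Because $d_{G\text{-}CSI}(r)$ is a \emph{minimum} of $s(\alpha,\beta,\gamma)=2p+c-\alpha-\beta-\gamma$ (from \eqref{eq:s} with $a=b=p$) over the outage region $\mathcal{O}(r)$ in \eqref{eq:globaloutage}, to upper bound it I only need to exhibit a single feasible point. The witness I would use is $(\alpha,\beta,\gamma)=(p,\,p,\,2r-p)$. I would check that it lies in $\mathcal{O}(r)$: the box constraints hold because $2r-p\in[0,c]$ precisely when $p/2\leq r\leq (p+c)/2$; the condition $\gamma<\min(\alpha,\beta)=p$ holds because $r\leq (p+c)/2<p$ (as $c<p$); and the outage constraint is met with equality, since
\[
\frac{\alpha\beta-\gamma^2}{\alpha+\beta-2\gamma}=\frac{p^2-(2r-p)^2}{2p-2(2r-p)}=\frac{4r(p-r)}{4(p-r)}=r.
\]
Evaluating the objective at this point gives $s(p,p,2r-p)=2p+c-p-p-(2r-p)=p+c-2r$, which yields the desired bound $d_{G\text{-}CSI}(r)\leq p+c-2r$ and completes the argument on the interval $p/2\leq r\leq (p+c)/2$.

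For completeness I would also dispose of the range $r>(p+c)/2$, where $d_{QMF}(r)=0$ and the claim reduces to $d_{(p,p,c)}^*(r)=0$. Here the point $(p,p,c)$ is feasible for $\mathcal{O}(r)$ — it satisfies $\gamma=c<p=\min(\alpha,\beta)$ and $\tfrac{p^2-c^2}{2(p-c)}=\tfrac{p+c}{2}\leq r$ — and gives $s(p,p,c)=0$, so $d_{G\text{-}CSI}(r)=0$ and static QMF is again optimal. I expect no genuinely hard step in this proof: the conceptual work is entirely carried by the sandwich and by the two characterizations already proved in Lemmas~\ref{lem:statqmf} and~\ref{lem:dqmf}. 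The only thing to get right is the choice of the witness point sitting on the boundary of $\mathcal{O}(r)$ that simultaneously makes the outage constraint tight and drives $s$ down to the static-QMF exponent; the attendant (minor) obstacle is the case bookkeeping on $r$ relative to $c$ and $(p+c)/2$, needed to keep $0\leq\gamma\leq c$ and $\gamma<p$ valid for the chosen point.
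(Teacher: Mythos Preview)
Your proposal is correct and follows essentially the same approach as the paper: both use the sandwich $d_{QMF}(r)\le d^*_{(p,p,c)}(r)\le d_{G\text{-}CSI}(r)$ and exhibit the same witness point $(\alpha,\beta,\gamma)=(p,p,2r-p)$ in the outage region of Lemma~\ref{lem:dqmf} to show $d_{G\text{-}CSI}(r)\le p+c-2r$. Your treatment is slightly more thorough in that you separately handle $r>(p+c)/2$ with the point $(p,p,c)$, whereas the paper's proof leaves that (trivial) edge case implicit.
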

\begin{proof} A critical outage event for the static QMF protocol for $r\geq\frac{p}{2}$ when $a=b=p$ is $(\alpha,\beta,\gamma)=\left(p,p,2r-p\right)$. It can be verified that $(p,p,2r-p)\in \mathcal{O}(r)$ in \eqref{eq:globaloutage}. Therefore, $d_{G-CSI}(r)\leq p+c-2r$ which is achieved by static QMF.\end{proof}

\subsection{DMT with local receive CSI}
We next establish an upper bound on the optimal DMT when the relay has only receive CSI. This upper bound shows that DDF is optimal under receive CSI in the range $c<r<\frac{p}{2}.$ To the best of our knowledge, this is the first upper bound on the optimal DMT under limited CSI.

\begin{lem}\label{lem:ddfreceivecsi}
When $a=b=p$ and $c < r < \frac{p}{2}$, the optimal DMT of the half-duplex $(a,b,c)$-relay channel with receive CSI is attained by DDF.
\end{lem}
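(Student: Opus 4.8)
The plan is to establish the matching converse, since the achievability of $d_{DDF}(r)=p-\frac{(p-c)r}{p-r}$ for $c<r<\frac{p}{2}$ is already furnished by Lemma~\ref{lem:ddf}. I would begin from the receive-CSI cutset bound recalled in the Preliminaries: any strategy that schedules the relay using only $h_{sr}$ (equivalently only $\alpha$) has its rate upper bounded by $C_{h.d.}$ of \eqref{eq:hdcutset} evaluated at a listening fraction $t$ that is a function of $\alpha$ alone and must be committed before $h_{rd},h_{sd}$---hence $\beta,\gamma$---are revealed. Passing to exponents exactly as in Lemma~\ref{lem:statqmf}, the diversity of any such scheme is at most the outage exponent $\min s(\alpha,\beta,\gamma)$ over $\{(\alpha,\beta,\gamma):r_{h.d.}(t(\alpha),\alpha,\beta,\gamma)<r\}$, with $r_{h.d.}$ as in \eqref{eq:rhd} and $s$ as in \eqref{eq:s}; since the relay is free to design the schedule, $d^*_{(p,p,c)}(r)$ is bounded above by the supremum of this exponent over all functions $t(\cdot)$.

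The feature that makes this bound strictly tighter than the global-CSI bound of Lemma~\ref{lem:dqmf}---and is the heart of the argument---is that one and the same $t(\alpha)$ must serve every $(\beta,\gamma)$ compatible with a given $\alpha$. I would exploit this by evaluating the bound on the single slice $\alpha=p$, which carries zero probability exponent (the term $p-\alpha$ in $s$ vanishes there). On this slice the outage exponent is at most
\[
\phi(t):=\min_{(\beta,\gamma)\in\mathrm{out}(p,t)}\bigl[(p-\beta)+(c-\gamma)\bigr],
\]
where $\mathrm{out}(p,t)$ collects the $(\beta,\gamma)$, $0\le\beta\le p$, $0\le\gamma\le c$, for which $(p,\beta,\gamma)$ is in outage when the relay listens a fraction $t=t(p)$. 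Writing $r_{h.d.}$ as the minimum of a first-cut term $tp+(1-t)\gamma$ and a second-cut term $t\gamma+(1-t)\beta$ (the complementary region $\gamma\ge\min(\alpha,\beta)$ contributes only events with $s\ge d_{f.d.}(r)>d_{DDF}(r)$ and so cannot lower the maximum over $t$), $\phi(t)$ is the smaller of two elementary linear programs, and the converse reduces to proving $\max_{t\in[0,1]}\phi(t)=d_{DDF}(r)$.

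The crux is the behaviour of $\phi$ at the DDF threshold $t^\star=r/p$. For $t<t^\star$ one can force a first-cut outage by keeping $\beta=p$ and lowering $\gamma$; the resulting best first-cut exponent equals $0$ for $t\le\frac{r-c}{p-c}$ and then increases to $c$ as $t\uparrow t^\star$. For $t\ge t^\star$ one has $tp\ge r$, so the first cut can no longer fail and the binding event is the second cut, whose best exponent is achieved at $\gamma=c$, $\beta=\frac{r-tc}{1-t}$ and equals $p-\frac{r-tc}{1-t}$, a quantity strictly decreasing in $t$ that takes the value $d_{DDF}(r)$ exactly at $t=t^\star$. Hence $\phi$ rises to $c$ just below $t^\star$, jumps up to $d_{DDF}(r)$ at $t^\star$, and decreases thereafter, so $\max_t\phi(t)=d_{DDF}(r)$, attained only at $t=r/p$. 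Here the hypotheses are used sharply: $r<\frac{p}{2}$ is equivalent to $d_{DDF}(r)>c$, which is what makes the jump genuinely upward to the maximum, while $r>c$ guarantees that the extremal second-cut optimizer satisfies $c<\beta=\frac{r(p-c)}{p-r}<p$ and $\gamma=c<\min(p,\beta)$, so the point is feasible.

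Combining this with Lemma~\ref{lem:ddf} gives $d^*_{(p,p,c)}(r)=d_{DDF}(r)$, and as a by-product shows the relay is forced to listen for exactly the DDF fraction $r/\alpha$ at $\alpha=p$. The step I expect to require the most care is not the linear programs but the justification, underlying the receive-CSI cutset bound, that restricting the schedule to depend on $\alpha$ alone is without loss: one must argue that randomizing the listen/transmit pattern adds at most $O(1)$ to the rate (vanishing multiplexing gain) and that causal dependence of the schedule on the relay's observation cannot convey information about $\beta$ or $\gamma$, since $y_r$ is a function of $h_{sr}$ and the source codeword only. Making this precise---so that the single-slice computation at $\alpha=p$ applies to every dynamic receive-CSI strategy and not merely to ``fixed-$t(\alpha)$-then-QMF'' schemes---is exactly what elevates the result to a genuine DMT outer bound under limited CSI.
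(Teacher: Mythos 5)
Your proof is correct and follows essentially the same route as the paper's: both reduce the converse to the receive-CSI min-max-min exponent (the $\min_{\alpha}\max_{t(\alpha)}\min_{(\beta,\gamma)}$ formulation justified in Appendix~\ref{app:local}), fix the slice $\alpha=p$, split at $t=r/p$, and exhibit the same two critical outage points, $(\beta,\gamma)=(p,0)$ for $t<r/p$ and $\left(\frac{r-tc}{1-t},c\right)$ for $t\ge r/p$, yielding the bound $\max\left(c,\,p-\frac{(p-c)r}{p-r}\right)=d_{DDF}(r)$. The one cosmetic difference is your assertion that $\max_t\phi(t)$ \emph{equals} $d_{DDF}(r)$ with a jump at $t=r/p$ --- verifying that lower bound would require checking all outage events at that boundary point (and is sensitive to the $\le r$ versus $<r$ convention there) --- but since achievability is supplied by Lemma~\ref{lem:ddf}, only the inequality $\max_t\phi(t)\le d_{DDF}(r)$ is load-bearing, and that part you establish exactly as the paper does.
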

\emph{Proof of Lemma~\ref{lem:ddfreceivecsi}:} 
From Appendix~\ref{app:local}, the DMT under local receive CSI $d_{L-CSI}(r)$ is given by the following: 
\begin{equation}\label{eq:localopt}
d_{L-CSI}(r) = \min_{\alpha\in[0,p]}\;\max_{t\in[0,1]}\;\min_{(\beta,\gamma)\in\mathcal{O}(r,\alpha,t)} s(\alpha,\beta,\gamma)
\end{equation} where 
$$\mathcal{O}(r,\alpha,t)=\left\{(\beta,\gamma): 0\leq\beta\leq p, \; 0\leq\gamma\leq c,\; r_{h.d.}\leq r\right\}.$$
This can be interpreted as follows. Nature chooses some $\alpha$ which we can observe and optimize $t$ accordingly (due to receive CSI); however, nature gets a second round in which it can make adversarial choices for $(\beta,\gamma)$ depending on $\alpha$ and $t$. In other words, the RX time $t(\alpha)$ chosen by the relay should work equally well for all possible realizations of $(\beta,\gamma)$. This creates the following tension: if $t$ is chosen very small, so that the relay cannot decode the source message, the communication can be in outage if the S--D link turns out to be weak,  in which case we may not be able to convey sufficient mutual information over the S-$\{\text{R,D}\}$ cut; whereas if we choose $t$ to be very large, so that the relay is left with little time to transmit, the R-D link can take on values that make the $\{\text{S,R}\}$--D cut sufficiently weak so as to cause outage. This intuition is formalized in the following analysis.
We fix $\alpha = p$ to get:
$$
d_{L-CSI}(r) \leq \max_{t\in[0,1]}\;\min_{(\beta,\gamma)\in\mathcal{O}(r,p,t))} s(p,\beta,\gamma).
$$
\begin{itemize}
\item If $t < r/p$, $(\beta,\gamma)=(p,0)$ is a feasible point in the above minimization problem. Hence $d(r)\leq c.$
\item If $t\geq r/p$, $\beta=\min\left(\frac{r-tc}{1-t},p\right)$, $\gamma=c$ is feasible. Hence we get 
$$
d(r) \leq p -\min\left(\frac{r-tc}{1-t},p\right) \leq p - \frac{(p-c)r}{p-r},
$$
if $ c<r<\frac{p}{2}.$
\end{itemize}
This shows that $d_{L-CSI}(r)\leq \max\left(c, p - \frac{(p-c)r}{p-r}\right)$. Since DDF achieves $p - \frac{(p-c)r}{p-r}$ which is equal to $\max\left(c, p - \frac{(p-c)r}{p-r}\right)$ for $c<r<\frac{p}{2}$, this proves that DDF is optimal for $c<r<\frac{p}{2}$  when the relay has only receive CSI.~\hfill$\QED$

The various lemmas established in this section complete the proof of Theorem \ref{thm:mainres1}.

\section{The half-duplex $(a,b,c)$-relay channel when $a\neq b$} \label{sec:anoteqb}
 A natural generalization of the two strategies, DDF and static QMF discussed in the earlier section, is \emph{dynamic} QMF where the relay listens for a fraction of time
determined by its receive CSI that is not necessarily long enough to
allow decoding. It then quantizes maps and forwards the received
signal as in the static QMF. This generalization was not necessary in
the earlier section when $a=b=p$. In this section, we give numerical
evidence to show that dynamic QMF is needed to achieve the optimal
trade-off when $a\neq b$ and $c<\min(a,b)$.

In Section \ref{subsec:StaticQMF}, we saw that a particular static QMF
scheme where the relay follows a half Rx-half Tx schedule is optimal
in certain regimes. However, more generally, static QMF schemes can
have schedules dependent on the multiplexing gain and the values of $a,b,c$, since these are assumed to be known to all nodes apriori. Hence, we can obtain the DMT
achieved by the best static QMF scheme by allowing the listening time
$t$ to be chosen optimally in the optimization problem in
\eqref{eq:statqmfoptprob}, which gives us the following optimization
problem:
\begin{equation}\label{eq:optstatqmf}
d_{SQMF}(r) = \max_t\min_{(\alpha,\beta,\gamma)\in\mathcal{O}(r,t)} s(\alpha,\beta,\gamma),
\end{equation}
\begin{IEEEeqnarray*}{l}\mathcal{O}(r,t)=\\
\left\{(\alpha,\beta,\gamma):\begin{array}{l} 0\leq\alpha\leq a,\; 0\leq\beta\leq b, \; 0\leq\gamma\leq c,\\ r_{h.d.}\leq r\end{array}\right\},\end{IEEEeqnarray*}where $r_{h.d.}$ is defined in \eqref{eq:rhd}. The formal derivation of this optimization problem is given in Appendix~\ref{app:stat}.

As described before, the DMT can be potentially improved if we allow schemes that are more general than static QMF or DDF, i.e. dynamic QMF, in which the listening time of the relay is allowed to also depend on the incoming channel realization but is not necessarily long enough that the relay can decode the message. The relay quantizes the received signal, maps it to a random codebook and forwards as in the original QMF. As described in Appendix~\ref{app:local}, the best such dynamic QMF scheme achieves the upper bound \eqref{eq:localopt}, hence:
\begin{equation}\label{eq:optdynqmf}
d_{DQMF}(r) = \min_{0\leq \alpha\leq a}\max_t\min_{(\beta,\gamma)\in\mathcal{O}(r,\alpha,t)} s(\alpha,\beta,\gamma),
\end{equation}
$$\mathcal{O}(r,\alpha,t)=\left\{(\beta,\gamma) : 0\leq\beta\leq b,\;0\leq\gamma\leq c,\; r_{h.d.}\leq r \right\}.$$

Obtaining an analytic solution for the optimization problems \eqref{eq:optstatqmf} and \eqref{eq:optdynqmf} is difficult. Instead, we evaluate the objective function on a fine grid in the feasible region for each $r$ and choose the best value numerically. The results are displayed in Figure \ref{fig:dqmf}. Also included is the exact DMT achieved by Dynamic-Decode-Forward (DDF), derived in Lemma~\ref{lem:ddf}.

\begin{figure}
\centering
%
%
%
%

\begin{tikzpicture}

\begin{axis}[%
width=2.8in,
height=1.4in,
xlabel=$r$,
ylabel=$d(r)$,
scale only axis,
xmin=0.5,
xmax=1.5,
ymin=0,
ymax=1.4,
legend style={at={(0.403320312500001,0.46)},anchor=south west,draw=black,fill=white,align=left}
]

\addplot [
color=black,
solid,
line width=1.0pt
]
coordinates{
(0.5,1.00000000673832)(0.520408163265306,0.968757876071901)(
0.540816326530612,0.937515747053664)(
0.561224489795918,0.905102040817105)(
0.581632653061224,0.871867734453418)(
0.602040816326531,0.839834668063791)(
0.622448979591837,0.805337519647058)(
0.642857142857143,0.769944341401626)(
0.663265306122449,0.737079247316704)(
0.683673469387755,0.700322234190335)(
0.704081632653061,0.667024704654938)(
0.724489795918367,0.628843537453758)(
0.744897959183674,0.595102040861679)(
0.76530612244898,0.56136054427255)(
0.785714285714286,0.521235521302867)(
0.806122448979592,0.487038058554062)(
0.826530612244898,0.452840595810307)(
0.846938775510204,0.418643133074904)(
0.86734693877551,0.384445671236903)(
0.887755102040816,0.350248208963839)(
0.908163265306122,0.31605074701767)(
0.928571428571429,0.281853285815463)(
0.948979591836735,0.247655825416693)(
0.969387755102041,0.213458366156241)(
0.989795918367347,0.179260908548921)(
1.01020408163265,0.145063430777719)(
1.03061224489796,0.110865968008829)(
1.05102040816327,0.0766685052400931)(
1.07142857142857,0.0424710425014463)(
1.09183673469388,0.00827358083176843)(
1.11224489795918,0)(
1.13265306122449,0)(
1.1530612244898,0)(
1.1734693877551,0)(
1.19387755102041,0)(
1.21428571428571,0)(
1.23469387755102,0)(
1.25510204081633,0)(
1.27551020408163,0)(
1.29591836734694,0)(
1.31632653061224,0)(
1.33673469387755,0)(
1.35714285714286,0)(
1.37755102040816,0)(
1.39795918367347,0)(
1.41836734693878,0)(
1.43877551020408,0)(
1.45918367346939,0)(
1.47959183673469,0)(
1.5,0)
};
\addlegendentry{Best Static QMF};

\addplot [
color=black,
dashed,
line width=1.0pt
]
coordinates{
(0.5,1.00806452991063)(
0.520408163265306,0.979838711960911)(
0.540816326530612,0.967741935661654)(
0.561224489795918,0.943548387174663)(
0.581632653061224,0.919354838727277)(
0.602040816326531,0.895161290460098)(
0.622448979591837,0.870967742039411)(
0.642857142857143,0.846774194435578)(
0.663265306122449,0.822580645397219)(
0.683673469387755,0.786290323902335)(
0.704081632653061,0.762096782503287)(
0.724489795918367,0.737903228571752)(
0.744897959183674,0.701612908524061)(
0.76530612244898,0.677419356903536)(
0.785714285714286,0.641129037718007)(
0.806122448979592,0.60630499438042)(
0.826530612244898,0.551267282151003)(
0.846938775510204,0.507680491551461)(
0.86734693877551,0.462180200222473)(
0.887755102040816,0.41647465437788)(
0.908163265306122,0.37288786482335)(
0.928571428571429,0.32930107526882)(
0.948979591836735,0.285714285714291)(
0.969387755102041,0.24137931034483)(
0.989795918367347,0.206896551724145)(
1.01020408163265,0.172413793103636)(
1.03061224489796,0.131720430107613)(
1.05102040816327,0.086290324479358)(
1.07142857142857,0.040860228975371)(
1.09183673469388,0)(
1.11224489795918,0)(
1.13265306122449,0)(
1.1530612244898,0)(
1.1734693877551,0)(
1.19387755102041,0)(
1.21428571428571,0)(
1.23469387755102,0)(
1.25510204081633,0)(
1.27551020408163,0)(
1.29591836734694,0)(
1.31632653061224,0)(
1.33673469387755,0)(
1.35714285714286,0)(
1.37755102040816,0)(
1.39795918367347,0)(
1.41836734693878,0)(
1.43877551020408,0)(
1.45918367346939,0)(
1.47959183673469,0)(
1.5,0)
};
\addlegendentry{Best Dynamic QMF};

\addplot [
color=black,
dotted,
line width=1.0pt
]
coordinates{
(0.5,1)(
0.520408163265306,0.972413793103448)(
0.540816326530612,0.944055944055944)(
0.561224489795918,0.914893617021277)(
0.581632653061224,0.884892086330935)(
0.602040816326531,0.854014598540146)(
0.622448979591837,0.822222222222222)(
0.642857142857143,0.789473684210526)(
0.663265306122449,0.755725190839695)(
0.683673469387755,0.72093023255814)(
0.704081632653061,0.68503937007874)(
0.724489795918367,0.648)(
0.744897959183674,0.609756097560976)(
0.76530612244898,0.570247933884297)(
0.785714285714286,0.529411764705882)(
0.806122448979592,0.487179487179487)(
0.826530612244898,0.443478260869565)(
0.846938775510204,0.398230088495575)(
0.86734693877551,0.351351351351352)(
0.887755102040816,0.302752293577982)(
0.908163265306122,0.252336448598131)(
0.928571428571429,0.2)(
0.948979591836735,0.145631067961165)(
0.969387755102041,0.0891089108910887)(
0.989795918367347,0.0303030303030303)(
1.01020408163265,0)(
1.03061224489796,0)(
1.05102040816327,0)(
1.07142857142857,0)(
1.09183673469388,0)(
1.11224489795918,0)(
1.13265306122449,0)(
1.1530612244898,0)(
1.1734693877551,0)(
1.19387755102041,0)(
1.21428571428571,0)(
1.23469387755102,0)(
1.25510204081633,0)(
1.27551020408163,0)(
1.29591836734694,0)(
1.31632653061224,0)(
1.33673469387755,0)(
1.35714285714286,0)(
1.37755102040816,0)(
1.39795918367347,0)(
1.41836734693878,0)(
1.43877551020408,0)(
1.45918367346939,0)(
1.47959183673469,0)(
1.5,0)
};
\addlegendentry{DDF};

\end{axis}
\end{tikzpicture}%
\caption{$(a,b,c)=(1.5,2,0.5)$}
\label{fig:dqmf}
\end{figure}
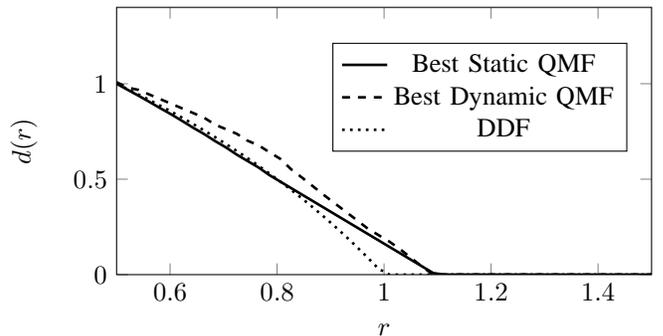

We can see from Figure \ref{fig:dqmf} that both static QMF and DDF fall short of the optimal tradeoff that the dynamic QMF scheme achieves. This suggests that it is not sufficient to consider static QMF or DDF to achieve the optimal DMT. While we are not able to provide an explicit expression for the optimal dynamic schedule or the optimal DMT in this case, identifying the optimal dynamic schedule is possible in some cases, as demonstrated in the next section.

\section{The Parallel Relay Network}\label{sec:parallel}
Dynamic QMF is applicable not only to the half-duplex relay channel but to more general half-duplex relay networks. In this section, we aim to demonstrate its necessity to achieve the optimal DMT of more general networks through the specific configuration of two parallel relays in Fig~\ref{subfig:parallel}. It is surprising that even though the parallel relay network is very simple since it does not involve broadcasting or superposition of signals, fixed schedules (with QMF) or simple decode-and-forward (with a dynamic schedule) are not sufficient  to achieve the optimal trade-off and a dynamic QMF strategy is needed. Recall from Section~\ref{sec:model} that we focus on the case when the exponential orders of the average received SNRs of all the four links are equal to $1$. This case is sufficient to demonstrate the necessity for dynamic QMF. Analogous to \eqref{eq:defalphabeta}, we define  
$$\alpha\triangleq \frac{\log(\left|h_{sr_1}\right|^2\rho)}{\log\rho},$$ $$\beta\triangleq \frac{\log(\left|h_{r_1d}\right|^2\rho)}{\log\rho},$$ $$\gamma\triangleq \frac{\log(\left|h_{sr_2}\right|^2\rho)}{\log\rho},$$ $$\delta\triangleq\frac{\log(\left|h_{r_2d}\right|^2\rho)}{\log\rho},$$
as the exponential orders of the instantaneous SNRs for the four links in the parallel relay network.

The difficulty in applying dynamic QMF and characterizing the trade-off it achieves is in identifying the optimal (dynamic) choice of the listening times at the relays. In the sequel, we identify an optimal choice for the listening time at the first relay as $t_1=1-\alpha(1-r)$. Similarly, $t_2=1-\gamma(1-r)$ for the second relay. Note that in the case of DDF, $t_1=r/\alpha$ which ensures that the relay can decode the transmitted message. The choice $t=1-\alpha(1-r)$, on the other hand, is motivated by the need to balance the multiplexing gain achieved over the two cuts of the network dynamically, based only on the observation of $\alpha$. Note that when $\alpha$ is large $t$ is small, and the strategy allocates more time to the second stage which helps in case the second stage turns out to be weak. When $\alpha$ is small, the relay allocates more time to listen. Indeed, if we were to apply this dynamic schedule $t=1-\alpha(1-r)$ to the $(1,1,c)$ half-duplex relay with $c\leq 1$, it can be readily observed that $1-\alpha(1-r)\geq r/\alpha$ when $\frac{r}{1-r}\leq\alpha\leq 1$ (which is the range of $\alpha$'s where DDF is not in outage) and so the relay can always decode the message. Moreover, in the critical events when $\alpha=\frac{r}{1-r}$ and $\alpha=1$ they allocate the same listening times for the relay. Indeed, it turns out that these two dynamic schedules are equivalent for the $(1,1,c)$ single relay channel. However, as we show in this section this is not the case for the parallel relay network. While dynamic QMF with  $t=1-\alpha(1-r)$ reaches the best achievable DMT with global CSI, DDF (and also static QMF) fails to do so.

In this section, we prove Theorem~\ref{thm:mainres2} in a number of steps summarized in lemmas. First, we establish an upper bound on the DMT of the half-duplex parallel relay network by allowing the switching times to depend on all channel realizations in a similar manner as Lemma~\ref{lem:dqmf}.

\begin{lem}\label{lem:dqmf2}
The DMT of the half-duplex parallel relay network with global CSI $d_{G-CSI}(r)$ is given by
\begin{equation}\label{eq:dqmf2}
d_{G-CSI}(r)= \min_{(\alpha,\beta,\gamma,\delta)\in\mathcal{O}(r)} 4-\alpha-\beta-\gamma-\delta
\end{equation} where $$\mathcal{O}(r) = \left\{(\alpha,\beta,\gamma,\delta): \begin{array}{lcc}\frac{\alpha\beta}{\alpha+\beta}+\frac{\gamma\delta}{\gamma+\delta} \leq r,\\0\leq\alpha,\beta,\gamma,\delta\leq 1 \end{array}\right\}.$$
\end{lem}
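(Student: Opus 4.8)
The plan is to mirror the template already used for the single-relay global-CSI bound in Lemma~\ref{lem:dqmf}. I would first write down the half-duplex cut-set bound for the parallel network when the listening fractions $t_1,t_2$ of the two relays are allowed to depend on all four channel realizations, then invoke the constant-gap QMF achievability (the analogue of \eqref{eq:hdqmf}, together with the global-CSI statement adapted from \cite{OzgDig2} and \cite{AveDigTse} as announced in the preliminaries), so that the DMT is governed by the outage event that the \emph{best} schedule still leaves the achievable rate below $r\log\rho$. The final step passes to exponential orders exactly as in \cite{ZheTse} and Lemma~\ref{lem:fullduplex}.

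The first concrete step is to record the cut-set bound for this topology and show it \emph{decouples}. Because there is no broadcast and no superposition, the four cuts separating $S$ from $D$ (indexed by which side each relay lies on) evaluate, for a schedule in which relay $i$ listens a fraction $t_i$ of the time, to the four sums obtained by choosing one term from $\{t_1 C_{sr_1},(1-t_1)C_{r_1d}\}$ and one from $\{t_2 C_{sr_2},(1-t_2)C_{r_2d}\}$, where $C_{sr_i}=\log(1+|h_{sr_i}|^2\rho)$ and $C_{r_id}=\log(1+|h_{r_id}|^2\rho)$. The minimum over these four cuts therefore factorizes as
$$\min\{t_1 C_{sr_1},(1-t_1)C_{r_1d}\}+\min\{t_2 C_{sr_2},(1-t_2)C_{r_2d}\},$$
i.e. the bound splits into the sum of two independent two-hop relay paths. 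I would state and verify this decoupling, together with the matching QMF lower bound within a constant gap, in an appendix in the spirit of the preliminaries. This is the step that genuinely uses the orthogonal structure of the network and where I expect the bulk of the work to lie; the single-relay arguments of \cite{AveDigTse,OzgDig2} have to be re-derived for four orthogonal point-to-point links rather than a broadcast/superposition relay channel.

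Once the bound is decoupled, the optimization over the schedule separates into two independent one-dimensional problems, $\max_{t_1}\min\{t_1 C_{sr_1},(1-t_1)C_{r_1d}\}$ and its counterpart in $t_2$. Each is maximized by equating its two arguments, which asymptotically yields $t_1=\beta/(\alpha+\beta)$, $t_2=\delta/(\gamma+\delta)$ and maximal per-path multiplexing gains $\tfrac{\alpha\beta}{\alpha+\beta}$ and $\tfrac{\gamma\delta}{\gamma+\delta}$, upon substituting $C_{sr_1}\approx\alpha\log\rho$, $C_{r_1d}\approx\beta\log\rho$, and so on. Hence the best multiplexing gain achievable with global CSI is $\tfrac{\alpha\beta}{\alpha+\beta}+\tfrac{\gamma\delta}{\gamma+\delta}$, and outage occurs precisely when this sum is at most $r$, which is exactly the constraint defining $\mathcal{O}(r)$ in \eqref{eq:dqmf2}.

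For the exponent I would follow Lemma~\ref{lem:fullduplex} verbatim: the four gains being independent, the joint density of $(\alpha,\beta,\gamma,\delta)$ has exponent $(1-\alpha)^++(1-\beta)^++(1-\gamma)^++(1-\delta)^+$, and Laplace's principle converts $\mathbb{P}(\text{outage})$ into the minimization of this exponent over $\mathcal{O}(r)$. As in the standard reduction, the minimizer may be taken in $[0,1]^4$: a coordinate below $0$ only increases its cost while already driving its path rate to zero, and a coordinate above $1$ has vanishing probability cost but only enlarges the corresponding harmonic-mean term (making outage harder), so each coordinate can be clipped to $[0,1]$ without leaving $\mathcal{O}(r)$ or raising the objective. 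On this domain $(1-x)^+=1-x$, giving the stated objective $4-\alpha-\beta-\gamma-\delta$ and completing the lemma. Everything after the decoupling step is routine and parallel to Lemma~\ref{lem:dqmf}; the decoupled cut-set/QMF bound for the orthogonal half-duplex network is the sole real obstacle.
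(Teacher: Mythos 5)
Your proposal follows essentially the same route as the paper's proof: the paper likewise writes the global-CSI cut-set/QMF constant-gap bound for the parallel network directly in the decoupled form $\min\{t_1 C_{sr_1},(1-t_1)C_{r_1d}\}+\min\{t_2 C_{sr_2},(1-t_2)C_{r_2d}\}$ (citing \cite{OzgDig2} for the constant-gap statement rather than re-deriving it), passes to exponential orders as in Lemma~\ref{lem:dqmf}, and obtains the outage region by equating $t_1\alpha=(1-t_1)\beta$ and $t_2\gamma=(1-t_2)\delta$, exactly as you do. Your additional remarks on the four cuts and on clipping the exponents to $[0,1]^4$ are just a more explicit rendering of steps the paper delegates to Lemma~\ref{lem:fullduplex} and the cited references.
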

\begin{proof} 
The proof of this lemma follows on similar lines as the proof of Lemma~\ref{lem:dqmf}.

We first introduce the definition in \eqref{eq:cutset_parallel} on top of the next page, which when maximized over all $t_1(\alpha,\beta,\gamma,\delta)$ and $t_2(\alpha,\beta,\gamma,\delta)$ provides, within a constant gap, an upper bound on the capacity of the parallel relay network (see \cite[Section VI]{OzgDig2}).

\begin{figure*}[!th]
\normalsize
\begin{IEEEeqnarray}{rCl}
C^{parallel} &\triangleq & \min\left\{t_1\log(1+|h_{sr_1}|^2\rho) ,(1-t_1)\log(1+|h_{r_1d}|^2\rho)\right\}\nonumber\\
&& \quad\quad +\min\left\{t_2\log(1+|h_{sr_2}|^2\rho),(1-t_2)\log(1+|h_{r_2,d}|^2\rho)\right\}\label{eq:cutset_parallel}
\end{IEEEeqnarray}
\hrulefill
\end{figure*}


Now, following similar steps as in Lemma \ref{lem:dqmf}, we get the following upper bound on the DMT:
\begin{equation*}
\min_{(\alpha,\beta,\gamma,\delta)\in\mathcal{O}(r)}\;\; 4-\alpha-\beta-\gamma-\delta
\end{equation*}
where $\mathcal{O}(r)$ is given in \eqref{eq:outparallel} on top of the next page. 
\begin{figure*}[!th]
\normalsize
\begin{equation}\label{eq:outparallel}\left\{(\alpha,\beta,\gamma,\delta):\begin{array}{ccc}\max_{\substack{t_1(\alpha,\beta,\gamma,\delta),\\t_2(\alpha,\beta,\gamma,\delta)}}\left(\min(t_1\alpha,(1-t_1)\beta)+\min(t_2\gamma,(1-t_2)\delta)\right)\leq r,\\ 0\leq\alpha,\beta,\gamma,\delta\leq 1\end{array}\right\}\end{equation}
\hrulefill 
\end{figure*}
This follows since the outage event ${C^{parallel}\leq r\log\rho}$ is equivalent at high SNR to 
$$\left( \min(t_1\alpha^+,(1-t_1)\beta^+) + \min(t_2\gamma^+,(1-t_2)\delta^+)\right)\leq r.$$
The optimal choice for $t_1$ and $t_2$ is identified by setting ${t_1\alpha=(1-t_1)\beta}$ and ${t_2\gamma=(1-t_2)\delta}$. This results in \eqref{eq:dqmf2}.
\end{proof}

We now consider the dynamic QMF protocol described at the beginning of this section. Note that the listening times in this protocol depend only on the receive CSI.

\begin{lem}\label{lem:dqmf3}
The DMT obtained by a dynamic QMF protocol where listening times are chosen as $t_1 = 1-\alpha(1-r)$ and $t_2 = 1-\gamma(1-r)$ for $0\leq r\leq \frac{1}{2}$ is $d_{DQMF}(r)=2-\frac{r}{1-r}.$
\end{lem}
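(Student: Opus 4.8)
The plan is to compute the diversity of this specific dynamic QMF scheme directly, by reducing it to an exponent-minimization over the outage region and then solving that program. First I would invoke the QMF constant-gap achievability for the parallel network (the Preliminaries, together with the cut-set expression \eqref{eq:cutset_parallel} from \cite{AveDigTse,OzgDig2}): since the proposed listening times $t_1 = 1-\alpha(1-r)$ and $t_2 = 1-\gamma(1-r)$ depend only on the relays' receive CSI $\alpha$ and $\gamma$, they are admissible, and the scheme achieves---within a constant gap---a rate whose multiplexing order is
\[
r_{h.d.} = \min\bigl(t_1\alpha,(1-t_1)\beta\bigr)+\min\bigl(t_2\gamma,(1-t_2)\delta\bigr).
\]
Substituting the schedule gives $t_1\alpha = \alpha\bigl(1-(1-r)\alpha\bigr)$ and $(1-t_1)\beta = (1-r)\alpha\beta$, and symmetrically for path~2. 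Following the reduction used in Lemma~\ref{lem:fullduplex} and Lemma~\ref{lem:dqmf2}, the achieved diversity is $d_{DQMF}(r)=\inf\{4-\alpha-\beta-\gamma-\delta\}$ over $(\alpha,\beta,\gamma,\delta)\in[0,1]^4$ in the \emph{genuine} outage event $r_{h.d.}<r$. I would stress that the inequality must be \emph{strict}; this is the pivotal point of the whole argument, as explained below.

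Writing $c_1=\min(\alpha(1-(1-r)\alpha),(1-r)\alpha\beta)$ and the symmetric $c_2$ for path~2, so that $r_{h.d.}=c_1+c_2$, the engine of the proof is the single-path estimate: for every $(\alpha,\beta)\in[0,1]^2$ with $c_1<r$,
\[
\alpha+\beta\ \le\ 1+\frac{c_1}{1-r}.
\]
I would prove this by splitting on which branch attains the minimum. If the forward branch is active, $c_1=(1-r)\alpha\beta$, and then $1+\frac{c_1}{1-r}-(\alpha+\beta)=1+\alpha\beta-\alpha-\beta=(1-\alpha)(1-\beta)\ge 0$, which settles this case immediately. If the listen branch is active, $c_1=\alpha(1-(1-r)\alpha)<r$; the two roots of $\alpha(1-(1-r)\alpha)=r$ are $\alpha=\tfrac{r}{1-r}$ and $\alpha=1$, so $c_1<r$ with $\alpha\le1$ forces $\alpha<\tfrac{r}{1-r}$, and then $\alpha+\beta\le\alpha+1\le 1+\frac{c_1}{1-r}$ reduces to $\alpha\le\tfrac{r}{1-r}$, which holds.

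Next I would assemble the two halves. In the strict outage event $c_1+c_2<r$ we automatically have $c_1<r$ and $c_2<r$, so the single-path estimate applies to both paths; summing yields $\alpha+\beta+\gamma+\delta\le 2+\frac{c_1+c_2}{1-r}<2+\frac{r}{1-r}$, hence $4-\alpha-\beta-\gamma-\delta>2-\frac{r}{1-r}$ and $d_{DQMF}(r)\ge 2-\frac{r}{1-r}$. For the matching upper bound I would exhibit the limiting outage family $(\alpha,\beta,\gamma,\delta)=(1,\tfrac{r}{1-r}-\epsilon,1,0)$, for which $c_1=r-(1-r)\epsilon<r$, $c_2=0$, and the exponent is $2-\frac{r}{1-r}+\epsilon\to 2-\frac{r}{1-r}$; this gives $d_{DQMF}(r)\le 2-\frac{r}{1-r}$ and completes the equality. (Alternatively the upper bound follows from $d_{DQMF}\le d_{G\text{-}CSI}=2-\frac{r}{1-r}$ via Lemma~\ref{lem:dqmf2}.)

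I expect the main obstacle to be precisely the strict-versus-closed outage distinction, which is also what makes the single-path estimate tight only for $c_1<r$. The point is that the per-path value is \emph{discontinuous} at the budget boundary $s=r$: for $s<r$ one can reach at most $\alpha+\beta=1+\frac{s}{1-r}$, but exactly at $s=r$ the full-channel choice $\alpha=\beta=1$ becomes feasible with contribution $c_1=\min(r,1-r)=r$, jumping the attainable sum to $2$. Consequently the naively-closed region $\{c_1+c_2\le r\}$ would contain the spurious point $(\alpha,\beta,\gamma,\delta)=(1,1,1,0)$ of exponent $1$, which would (wrongly) suggest $d_{DQMF}=1$; this point lies on the outage boundary $c_1+c_2=r$ but is isolated from the true (open) outage set, so it must be excluded. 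Getting this bookkeeping right---rather than any heavy computation---is the crux, and it is exactly where dynamic QMF separates from static QMF and DDF and meets the global-CSI bound.
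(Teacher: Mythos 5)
Your proof is correct, and while it shares the paper's overall skeleton (constant-gap QMF achievability reduces the DMT of this schedule to minimizing $4-\alpha-\beta-\gamma-\delta$ over the outage region in $[0,1]^4$), it solves the resulting optimization by a genuinely different route. The paper splits into three cases according to whether $\alpha+\beta$ and $\gamma+\delta$ exceed $\frac{1}{1-r}$ and treats each separately (geometrically in Case (i), via the roots of a quadratic in the mixed and doubly-exceeding cases); you instead prove one per-path inequality, $\alpha+\beta\le 1+\frac{c_1}{1-r}$ whenever $c_1<r$ --- the forward branch via $(1-\alpha)(1-\beta)\ge 0$, the listen branch via the roots $\frac{r}{1-r}$ and $1$ of $\alpha(1-(1-r)\alpha)=r$ --- and simply add the two paths. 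This is shorter, case-free, and your matching extremal family $(1,\frac{r}{1-r}-\epsilon,1,0)$ coincides with the paper's critical outage point. Your insistence on the strict inequality is also not pedantry: the closed region \eqref{eq:outparalleldqmf} genuinely contains points of exponent $1$ such as $(1,1,1,0)$ and $(1,0,1,1)$, and the paper's mixed-case derivation glosses over them (the step ``$\alpha\beta=0$ implies $\alpha=\beta=0$'' is not literally valid, so the claimed value $d=2$ for that case is not the infimum over the region as stated). Those points are harmless only because they sit in measure-zero slices of the outage set ($\alpha=1$ and $\delta=0$ exactly, or $\gamma=1$ and $\alpha\beta=0$ exactly) and hence carry no probability; your open-region formulation together with the per-path bound makes this rigorous without any appeal to essential interiors. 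In short: same reduction, cleaner and more careful solution of the minimization.
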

\begin{proof}
If the listening times are chosen as $t_1 = 1-\alpha(1-r)$ and $t_2 = 1-\gamma(1-r)$, the rate achievable by QMF can be lower-bounded \cite{AveDigTse} by 
$$C^{parallel}\Big|_{\substack{t_1 = 1-\alpha(1-r),\\ t_2 = 1-\gamma(1-r)}} -\kappa,$$ where $\kappa$ is a constant independent of the SNR and the channel realizations. This means that the multiplexing rate achieved by this protocol is 
\begin{IEEEeqnarray*}{l}\alpha\min\left(1-(1-r)\alpha,(1-r)\beta\right)\\
\quad\quad\quad\quad\quad\quad\quad\quad +\,\gamma\min\left(1-(1-r)\gamma,(1-r)\delta\right).
\end{IEEEeqnarray*} Hence we obtain that the DMT is given by the solution to the following optimization problem:
$$\min_{(\alpha,\beta,\gamma,\delta)\in\mathcal{O}(r)} 4-\alpha-\beta-\gamma-\delta$$
where $\mathcal{O}(r)$ for this lemma is given in \eqref{eq:outparalleldqmf} on top of the next page.
\begin{figure*}[!th]
\normalsize
\begin{equation}\label{eq:outparalleldqmf}
\left\{(\alpha,\beta,\gamma,\delta): \begin{array}{ccc}\alpha\min\left(\frac{1}{1-r}-\alpha,\beta\right)+\gamma\min\left(\frac{1}{1-r}-\gamma,\delta\right) \leq \frac{r}{1-r},\\0\leq\alpha,\beta,\gamma,\delta\leq 1 \end{array}\right\}
\end{equation}
\hrulefill
\end{figure*}
We solve this problem by considering three different cases:

\begin{itemize}
\item Case (i) \hspace{5 pt} $\alpha+\beta \leq \frac{1}{1-r}, \gamma+\delta\leq\frac{1}{1-r}$\\
The outage region becomes
$$
\mathcal{O}(r) = \left\{(\alpha,\beta,\gamma,\delta): \begin{array}{ccc}\alpha\beta+\gamma\delta \leq \frac{r}{1-r},\\ \alpha+\beta \leq \frac{1}{1-r},\\ \gamma+\delta\leq\frac{1}{1-r}, \\ 0\leq\alpha,\beta,\gamma,\delta\leq 1 \end{array}\right\}
$$
For any value of $(\gamma,\delta)$ such that $\gamma\delta<\frac{r}{1-r}$, the feasible region in the $(\alpha,\beta)$ plane looks as shown in Figure~\ref{subfig:par_casei}. It can be seen that to minimize the objective function, either $\alpha = 1$ or $\beta = 1$. Without loss of generality, assume $\alpha=1$. Similarly, we can conclude that for any value of $(\alpha,\beta)$, either $\gamma=1$ or $\delta=1$. Assume $\gamma = 1$. Thus, the optimization problem reduces to
$$\min_{{O}(r)}\;2-\beta-\delta
$$ where the outage region is
$$
\mathcal{O}(r) = \left\{(\beta,\delta): \begin{array}{ccc}\beta+\delta \leq \frac{r}{1-r},\\0\leq\beta,\delta\leq 1 \end{array}\right\}
$$
Hence, $d(r)=2-\frac{r}{1-r}.$
\begin{figure}
\centering
\subfigure[]{
\begin{tikzpicture}[scale=4]
\clip (-0.25,-0.25) rectangle (1.7,1.7);
\draw [fill,green!20] [domain=0.5:1] plot (\x, {0.5/\x}) -- (1,0) -- (0,0) -- (0,1) -- cycle;
\draw [white,pattern=north east lines] (0,0) -- (0,1) -- (0.7,1) -- (1,0.7) -- (1,0) -- cycle;
\path[draw,->,>=stealth',name path=x axis] (-0.25,0) -- (1.5,0);
\path[draw,->,>=stealth',name path=y axis] (0,-0.25) -- (0,1.5);
\coordinate (a) at (1,0);
\coordinate (b) at (0,1);
\coordinate[label=below left:{$\alpha$}] (temp1) at (1.5,0);
\coordinate[label=below left:{$\beta$}] (temp2) at (0,1.5);
\draw let \p1=(a) in (\x1,1pt) -- (\x1,-1pt) node[anchor=north,fill=white] {$1$};
\draw let \p1=(b) in (1pt,\y1) -- (-1pt, \y1) node[anchor=east,fill=white] {$1$};
\draw[dashed] (a) -- (1,1);
\draw[dashed] (b) -- (1,1);
\draw (0.4,1.3) -- (1.3,0.4);
\draw[domain=0.4:1.25] plot (\x, {0.5/\x});
\node[] (temp) at (1.35,0.8) {$\alpha\beta\leq \frac{r}{1-r}-\gamma\delta$};
\draw[->,>=stealth',shorten >=3] (temp) -- (1.1,0.45);
\node[] (temp2) at (1,1.3) {$\alpha+\beta\leq \frac{1}{1-r}$};
\draw[->,>=stealth',shorten >=3] (temp2) -- (0.6,1.1);
\end{tikzpicture}
\label{subfig:par_casei}}
\subfigure[]{
\begin{tikzpicture}[scale=4]
\clip (-0.25,-0.25) rectangle (1.7,1.7);
\draw [fill,green!20] (0,0) -- (1,0) -- (1,1) -- (0,1) -- cycle;
\draw [fill,white] (0.7,0.7) circle [radius=0.76];
\draw (0.7,0.7) circle [radius=0.76];
\path[draw,->,>=stealth',name path=x axis] (-0.25,0) -- (1.5,0);
\path[draw,->,>=stealth',name path=y axis] (0,-0.25) -- (0,1.5);
\coordinate (a) at (1,0);
\coordinate (c) at (0,1);
\coordinate[label=below left:{$\alpha$}] (temp1) at (1.5,0);
\coordinate[label=below left:{$\gamma$}] (temp2) at (0,1.5);
\draw let \p1=(a) in (\x1,1pt) -- (\x1,-1pt) node[anchor=north,fill=white] {$1$};
\draw let \p1=(c) in (1pt,\y1) -- (-1pt, \y1) node[anchor=east,fill=white] {$1$};
\draw[dashed] (a) -- (1,1);
\draw[dashed] (c) -- (1,1);
\draw (0.4,0.4) -- (1,0.4);
\draw (0.4,0.4) -- (0.4,1);
\draw [pattern=north east lines] (0.4,0.4) -- (1,0.4) -- (1,1) -- (0.4,1) -- cycle;
\node at (0.7,0.7) [circle,draw,fill,inner sep=0pt,minimum size=1mm] {};
\node[] (temp) at (0.7,0.2) {$\left(\frac{1}{2(1-r)},\frac{1}{2(1-r)}\right)$};
\draw[->,>=stealth',shorten >=3] (temp) -- (0.7,0.69);
\node[] (temp2) at (0.4,1.1) {$\left(\frac{r}{1-r},1\right)$};
\end{tikzpicture}
\label{subfig:par_caseiii}}
\caption{Proof of Theorem~\ref{lem:dqmf3}; feasible region is the intersection of shaded and dashed regions (a) Case i, (b) Case iii: no feasible point}
\end{figure}
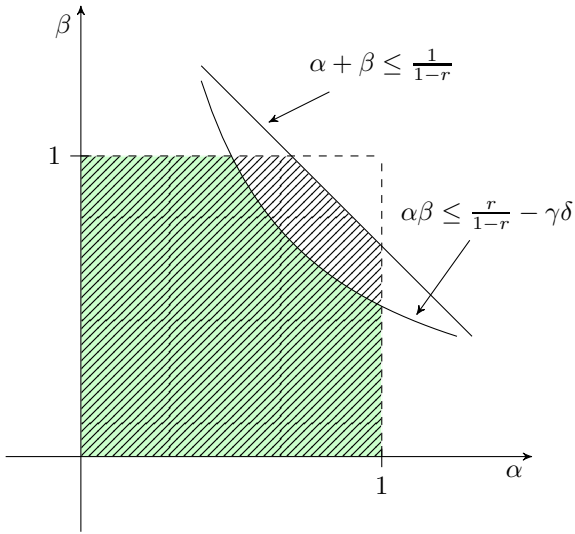
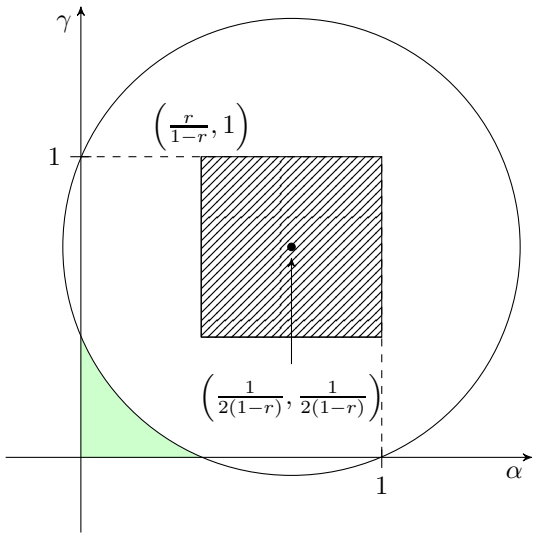

\item Case (ii) \hspace{5 pt} $\alpha+\beta \leq \frac{1}{1-r}, \gamma+\delta > \frac{1}{1-r}$\\
The outage region for this case is 
$$
\mathcal{O}(r) = \left\{(\alpha,\beta,\gamma,\delta): \begin{array}{ccc}\alpha\beta+\gamma\left(\frac{1}{1-r}-\gamma\right) \leq \frac{r}{1-r},\\ \alpha+\beta \leq \frac{1}{1-r},\\ \gamma+\delta>\frac{1}{1-r}, \\ 0\leq\alpha,\beta,\gamma,\delta\leq 1 \end{array}\right\}
$$
Firstly, we can immediately see that $\delta = 1$. Also, \begin{equation}\label{eq:gammabigger}\gamma +\delta>\frac{1}{1-r},\;\delta\leq 1\implies \gamma > \frac{r}{1-r}.\end{equation} Now we examine the first condition in the definition of the outage region:
$$\alpha\beta+\gamma\left(\frac{1}{1-r}-\gamma\right) \leq \frac{r}{1-r}\implies \alpha\beta\leq \gamma^2-\frac{\gamma-r}{1-r}.$$
The term on the RHS of the inequality needs to be non-negative to ensure that there exist feasible $\alpha,\beta$. 
$$\gamma^2-\frac{\gamma-r}{1-r}\geq 0 \Leftrightarrow \gamma \leq \frac{r}{1-r} \text{ or } \gamma \geq 1.$$
Combining this with \eqref{eq:gammabigger} and $0\leq\gamma\leq 1$, we can see that the only admissible value of $\gamma = 1$, which makes $\gamma^2-\frac{\gamma-r}{1-r} = 0$, which implies $\alpha=\beta=0$. Hence $d(r)=2.$
\item Case (iii) \hspace{5 pt} $\alpha+\beta > \frac{1}{1-r}, \gamma+\delta > \frac{1}{1-r}$\\
The outage region is:
$$
\left\{(\alpha,\beta,\gamma,\delta): \begin{array}{ccc}(\alpha +\gamma)\left(\frac{1}{1-r}-\gamma\right) \leq \frac{r}{1-r},\\ \alpha+\beta > \frac{1}{1-r},\\ \gamma+\delta>\frac{1}{1-r}, \\ 0\leq\alpha,\beta,\gamma,\delta\leq 1 \end{array}\right\}.
$$
As in the previous case, we note that $\beta=1$, $\delta=1$ and 
\begin{equation*}\label{eq:alphagammabigger}
\alpha > \frac{r}{1-r},\;\;\; \gamma > \frac{r}{1-r}
\end{equation*}
The first condition can be rewritten as 
\begin{IEEEeqnarray*}{l}
\left(\alpha-\frac{1}{2(1-r)}\right)^2 + \left(\gamma-\frac{1}{2(1-r)}\right)^2\\
\quad\quad\quad\quad\quad\quad\quad\quad\quad >\, \frac{1}{2}+\frac{1}{2}\left(\frac{r}{1-r}\right)^2.
\end{IEEEeqnarray*}
So the DMT is given by the following simplified optimization problem:
$$\min_{\mathcal{O}(r)}\; 2-\alpha-\gamma
$$ where $\mathcal{O}(r)$ is given in \eqref{eq:outcaseiii} on top of the next page. As shown in Figure~\ref{subfig:par_caseiii}, the circle of infeasible values of $(\alpha,\gamma)$ defined by the first condition contains the square of feasible values defined by the other two conditions for all $0\leq r\leq 1/2$. So there are no feasible values of $(\alpha,\gamma)$ which means that the optimization problem for this case is infeasible. 
\end{itemize}
Analysis of the remaining case $\alpha+\beta > \frac{1}{1-r}, \gamma+\delta\leq\frac{1}{1-r}$ is similar to Case (ii) by symmetry. Taking the minimum over all the cases, we have:
$$
d_{DQMF}(r) = 2 - \frac{r}{1-r},\;\;\;\; 0\leq r \leq \frac{1}{2}.
$$
\end{proof}

\begin{figure*}[!ht]
\normalsize
\begin{equation}\label{eq:outcaseiii}
\left\{(\alpha,\beta,\gamma,\delta): \begin{array}{ccc}\left(\alpha-\frac{1}{2(1-r)}\right)^2 + \left(\gamma-\frac{1}{2(1-r)}\right)^2 > \frac{1}{2}+\frac{1}{2}\left(\frac{r}{1-r}\right)^2, \\ \frac{r}{1-r}<\alpha,\gamma\leq 1 \end{array}\right\}
\end{equation}
\hrulefill
\end{figure*}

We now prove Theorem~\ref{thm:mainres2} via the following two lemmas.

First, Lemma~\ref{lem:par1} proves that the dynamic QMF protocol analyzed in the previous lemma achieves the upper bound established in Lemma~\ref{lem:dqmf2} for $0\leq r< 1/2$. We also show that static QMF and DDF are strictly suboptimal in this range of multiplexing gains.

For the remaining range of multiplexing gains $1/2 \leq r\leq 1$, Lemma~\ref{lem:par2} at the end of this section shows that the DMT of any scheme that depends only on receive CSI is upper bounded by $2-2r$ which is achieved by static QMF, thus establishing the optimality of static QMF in the class of receive CSI schemes.

\begin{lem}\label{lem:par1}
The optimal DMT of the half-duplex parallel relay network with receive CSI in the range $0\leq r< 1/2$ is given by 
$$d(r) = 2-\frac{r}{1-r},$$ which is achieved by the dynamic QMF scheme described in the previous lemma.
\end{lem}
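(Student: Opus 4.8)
The plan is to sandwich the optimal receive-CSI DMT between two bounds I have already established and show they coincide. The global-CSI assumption of Lemma~\ref{lem:dqmf2} is strictly more generous than the receive-CSI assumption under which we operate, so $d_{G-CSI}(r)$ in \eqref{eq:dqmf2} is a valid upper bound on the optimal DMT of \emph{any} receive-CSI scheme. On the other side, the dynamic QMF protocol of Lemma~\ref{lem:dqmf3} sets its listening times $t_1,t_2$ using only the incoming gains $\alpha,\gamma$ and is therefore a legitimate receive-CSI scheme achieving $2-\frac{r}{1-r}$. Hence it suffices to evaluate the minimization in \eqref{eq:dqmf2} and verify that $d_{G-CSI}(r)=2-\frac{r}{1-r}$ for $0\le r<\frac12$; equality of the upper and lower bounds then pins down the optimal receive-CSI DMT and certifies DQMF as optimal in this range.

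To evaluate \eqref{eq:dqmf2}, I would recast the minimization of $4-\alpha-\beta-\gamma-\delta$ as the maximization of $\alpha+\beta+\gamma+\delta$ over $[0,1]^4$ subject to $g(\alpha,\beta)+g(\gamma,\delta)\le r$, where $g(x,y)\triangleq\frac{xy}{x+y}$. I would decompose this into an inner single-pair problem and an outer budget-allocation problem. For the inner problem, fix a budget $c$ and maximize $\alpha+\beta$ subject to $g(\alpha,\beta)\le c$ and $\alpha,\beta\le 1$. Since $g$ is increasing in each argument, the optimum lies on the boundary $g=c$, and along that level curve inside the box the sum $\alpha+\beta$ is largest when one variable is pushed to its cap $1$; solving $g(1,\beta)=\frac{\beta}{1+\beta}=c$ gives $\beta=\frac{c}{1-c}$ and hence $\max(\alpha+\beta)=\frac{1}{1-c}$, valid for $c\le\frac12$.

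The outer problem is then to maximize $\frac{1}{1-c_1}+\frac{1}{1-c_2}$ over $c_1+c_2\le r$, $c_i\ge 0$. Because $c\mapsto\frac{1}{1-c}$ is increasing and convex, the full budget is used and the maximum of a convex objective over the segment $\{c_1+c_2=r\}$ is attained at an endpoint, namely $(c_1,c_2)=(r,0)$ (and by symmetry $(0,r)$). This yields $\max(\alpha+\beta+\gamma+\delta)=\frac{1}{1-r}+1=2+\frac{r}{1-r}$, realized at the critical outage point $(\alpha,\beta,\gamma,\delta)=\bigl(1,\frac{r}{1-r},1,0\bigr)$, so that $d_{G-CSI}(r)=4-\bigl(2+\frac{r}{1-r}\bigr)=2-\frac{r}{1-r}$, which one checks equals $\frac{2-3r}{1-r}=3-\frac{1}{1-r}$. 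Combining this with Lemma~\ref{lem:dqmf3} closes the matching argument. To complete the statement's claim about the extremal schemes, I would separately compute the DMTs of static QMF (half-RX half-TX) and DDF on this network and verify each lies strictly below $2-\frac{r}{1-r}$ on $(0,\frac12)$; for static QMF this is a short calculation giving $2(1-r)$, which is strictly smaller than $2-\frac{r}{1-r}$ for $0<r<\frac12$.

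The main obstacle I anticipate is rigorously justifying the two reductions, particularly that the inner single-pair maximizer sits at a box corner rather than in the interior of the level curve: one must rule out interior critical points (along $g=c$ the sum is in fact \emph{minimized} at $\alpha=\beta=2c$ and grows toward the box boundary) and then confirm the convex outer objective is extremized at a vertex of the budget simplex. Care is also needed to keep the allocation $c_i\le\frac12$ feasible, which holds precisely because $r<\frac12$ forces $c_1=r<\frac12$; this is exactly where the range restriction enters and explains why the complementary regime $r\ge\frac12$ must be handled separately in Lemma~\ref{lem:par2}.
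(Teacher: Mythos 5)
Your proposal is correct and rests on the same sandwich the paper uses: the global-CSI minimization of Lemma~\ref{lem:dqmf2} upper-bounds any receive-CSI scheme, and the DQMF protocol of Lemma~\ref{lem:dqmf3} meets it from below. Where you differ is in how the upper bound is established. The paper does the bare minimum: it exhibits the single point $(\alpha,\beta,\gamma,\delta)=\bigl(1,\tfrac{r}{1-r},1,0\bigr)$, checks it lies in $\mathcal{O}(r)$ of \eqref{eq:dqmf2}, and reads off $d_{G-CSI}(r)\le 2-\tfrac{r}{1-r}$; matching with Lemma~\ref{lem:dqmf3} finishes the argument in three lines. You instead solve the minimization in \eqref{eq:dqmf2} exactly, via the inner problem $\max\{\alpha+\beta: \tfrac{\alpha\beta}{\alpha+\beta}\le c,\ \alpha,\beta\in[0,1]\}=\tfrac{1}{1-c}$ (correctly noting the sum is minimized, not maximized, at the symmetric point $\alpha=\beta=2c$, so the optimum sits at the box corner $\alpha=1$) and the outer convex budget allocation, which is extremized at the vertex $(c_1,c_2)=(r,0)$. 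This is more work than the lemma requires, but it buys something the paper's one-point argument does not: it shows $d_{G-CSI}(r)$ \emph{equals} $2-\tfrac{r}{1-r}$, i.e., that global CSI provides no DMT advantage over receive CSI in this regime, and it explains structurally why the extremal outage event concentrates the entire rate budget on one branch. Your identification of where $r<\tfrac12$ enters (keeping $c_i\le\tfrac12$ so that $(1,1)$ is infeasible for the inner problem) is also the right diagnosis of why Lemma~\ref{lem:par2} needs a separate treatment. The only piece you leave as a promissory note is the strict suboptimality of DDF (you carry out the static-QMF comparison $2-2r<2-\tfrac{r}{1-r}$ but not the DDF one); the paper handles this by observing that the event $\{\alpha<r,\gamma<r\}$ leaves both relays silent, giving $d_{DDF}(r)\le 2-2r$. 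That claim belongs to Theorem~\ref{thm:mainres2} rather than to the lemma statement itself, so it is not a gap in your proof of the stated result.
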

\begin{proof}
To prove this lemma, we only need to show that the critical outage point $(\alpha,\beta,\gamma,\delta)=\left(1,\frac{r}{1-r},1,0 \right)$ of the dynamic QMF protocol is a feasible point of \eqref{eq:dqmf2}. That is easy to check. Since $0\leq r< 1/2$, every component of $\left(1,\frac{r}{1-r},1,0 \right)$ is in $[0,1]$. Also, 
$$
\frac{\alpha\beta}{\alpha+\beta}+\frac{\gamma\delta}{\gamma+\delta} = \frac{1\cdot\frac{r}{1-r}}{1+\frac{r}{1-r}} + \frac{1\cdot0}{1 + 0} = r \leq r,
$$ and hence, $d_{G-CSI}(r)\leq 4 - 1 - \frac{r}{1-r} -1-0 = 2-\frac{r}{1-r},$ which is achieved by the dynamic QMF scheme described in Lemma~\ref{lem:dqmf3}. This establishes the optimality of the dynamic QMF protocol in Lemma \ref{lem:dqmf3} for $0\leq r< 1/2$.

We now argue that static QMF and DDF are both strictly suboptimal in this range of multiplexing gains.
\begin{itemize}
\item Static QMF\\
Consider choosing $t_1=t_2=1/2$ for the static QMF scheme, for which the DMT can be obtained by solving the following optimization problem. We will argue later that this is indeed an optimal static choice of the listening times.
\begin{equation*}
d_{SQMF}(r) = \min_{\mathcal{O}(r)} 4 - \alpha-\beta-\gamma-\delta
\end{equation*} where $\mathcal{O}(r) =$
$$ \left\{(\alpha,\beta,\gamma,\delta):\begin{array}{ccc}\frac{1}{2}\min(\alpha,\beta)+\frac{1}{2}\min(\gamma,\delta)\leq r,\\ 0\leq\alpha,\beta,\gamma,\delta\leq 1 \end{array}\right\}.
$$
Without loss of generality, we can assume that $\alpha = \gamma = 1$. That makes the outage condition: $\beta + \delta \leq 2r$. For all $r\in[0,1]$, we can satisfy this condition with equality for $\beta,\delta\in[0,1]$, which implies
\begin{equation}\label{eq:parstatqmf}
d_{SQMF}(r) = 2-2r.
\end{equation}
We now argue that $t_1=t_2=1/2$ is optimal.  If $t_1$ and $t_2$ are set to any other values, we get the DMT by solving the following optimization problem.
\begin{equation*}
\min_{\mathcal{O}(r)} 4 - \alpha-\beta-\gamma-\delta
\end{equation*} where $\mathcal{O}(r)$ is given in \eqref{eq:parallelstatqmf} on top of the next page.

\begin{figure*}[!ht]
\normalsize
\begin{equation}\label{eq:parallelstatqmf} \left\{(\alpha,\beta,\gamma,\delta):\begin{array}{ccc}\min(t_1\alpha,(1-t_1)\beta)+\min(t_2\gamma,(1-t_2)\delta)\leq r,\\ 0\leq\alpha,\beta,\gamma,\delta\leq 1 \end{array}\right\}
\end{equation}
\hrulefill
\end{figure*}

This gives us a worse DMT which can be seen as follows. Consider the case when $t_1$ and $t_2$ are both no more than $1/2$. In this case, a feasible point in the optimization problem is $$(\alpha,\beta,\gamma,\delta) = \left(\min\{\frac{r}{t_1+t_2},1\},1,\min\{\frac{r}{t_1+t_2},1\},1\right),$$ that results in the objective value being $$2 - 2\min\{\frac{r}{t_1+t_2},1\},$$which is no more than $2-2r$. The other choices for $t_1$ and $t_2$ can also be treated similarly to conclude that $t_1=t_2=1/2$ is indeed optimal.

\item DDF\\
In DDF, each relay waits until it can decode the transmitted message, i.e. $t_1=r/\alpha$ and $t_2=r/\gamma$. One of the outage events is $\alpha <r$ and $\gamma < r$, in which case none of the relays gets to transmit. Hence, $$d_{DDF}(r)\leq 2-2r.$$ An alternative strategy would be to split the information stream into two streams each of multiplexing gain $r/2$ and send them over the two orthogonal paths in the parallel relay network. Both the relays perform DDF on the corresponding stream i.e. $t_1=r/2\alpha$ and $t_2=r/2\gamma$. However, now communication is in outage if even one of the paths is in outage, e.g. $\alpha < r/2$ when the first relay does not get a chance to transmit. So, the diversity of this scheme at rate $r$ is no more than $4-\frac{r}{2}-1-1-1 = 1-r/2$, which is even worse than $2-2r$ for $0\leq r< 1/2$.
\end{itemize}
Thus, for $0\leq r< 1/2$, neither DDF nor static QMF is able to achieve the optimal DMT.
\end{proof}

\begin{figure}
\centering
\input{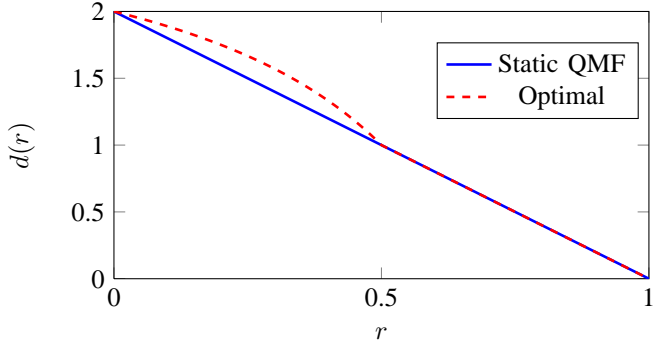}
\caption{DMT of the half-duplex parallel relay network}
\label{fig:dmt_parallel}
\end{figure}

\begin{lem}\label{lem:par2}
The optimal DMT of the half-duplex parallel relay network with receive CSI in the range $1/2\leq r\leq 1$ is given by
$$d(r)=2(1-r),$$ and is achieved by the static QMF scheme.
\end{lem}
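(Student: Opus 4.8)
The plan is to prove the two inequalities $d(r)\le 2(1-r)$ and $d(r)\ge 2(1-r)$ separately, the latter (achievability) being essentially already in hand. The achievability follows immediately from the static QMF analysis in the previous lemma: the choice $t_1=t_2=1/2$ yields $d_{SQMF}(r)=2-2r$ as computed in \eqref{eq:parstatqmf}, valid for all $r\in[0,1]$ and in particular on $[1/2,1]$. Hence the entire content of the lemma is the matching converse, namely that no scheme using only receive CSI can exceed $2(1-r)$ in this range. Note that the global-CSI converse of Lemma~\ref{lem:dqmf2} is strictly loose here (it gives $4(1-r)/(2-r)>2(1-r)$), so the improvement must come entirely from the fact that the relay-to-destination exponents are unobservable.

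For the converse I would first record the receive-CSI characterization of the DMT, derived exactly as in Appendix~\ref{app:local} and Lemma~\ref{lem:ddfreceivecsi} but for the parallel network. Since relay~$1$ observes only $\alpha$ and relay~$2$ only $\gamma$, their listening fractions are functions $t_1(\alpha)$ and $t_2(\gamma)$, and the best receive-CSI scheme achieves
\begin{equation*}
d_{L-CSI}(r)=\min_{(\alpha,\gamma)\in[0,1]^2}\;\max_{t_1,t_2\in[0,1]}\;\min_{(\beta,\delta)\in\mathcal{O}(r,\alpha,\gamma,t_1,t_2)}\; 4-\alpha-\beta-\gamma-\delta,
\end{equation*}
where $\mathcal{O}=\{(\beta,\delta)\in[0,1]^2:\min(t_1\alpha,(1-t_1)\beta)+\min(t_2\gamma,(1-t_2)\delta)\le r\}$, matching the outage event in \eqref{eq:parallelstatqmf}. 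Because the outer minimization is over nature's observable choice of $(\alpha,\gamma)$, I can upper bound $d_{L-CSI}(r)$ by evaluating the bracket at the single point $\alpha=\gamma=1$, giving
\begin{equation*}
d_{L-CSI}(r)\le \max_{t_1,t_2\in[0,1]}\;\min_{(\beta,\delta)\in\mathcal{O}(r,1,1,t_1,t_2)}\; 2-\beta-\delta.
\end{equation*}

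The crux is then a purely deterministic claim: for every $t_1,t_2\in[0,1]$ and every $r\in[1/2,1]$ there exist $(\beta,\delta)\in[0,1]^2$ with $\min(t_1,(1-t_1)\beta)+\min(t_2,(1-t_2)\delta)\le r$ and $\beta+\delta\ge 2r$. This makes $\min_{(\beta,\delta)\in\mathcal{O}}(2-\beta-\delta)\le 2-2r$ for each $(t_1,t_2)$, so the $\max$ over $(t_1,t_2)$ is at most $2-2r$, completing the converse. I would prove the claim by splitting on $m_i:=\min(t_i,1-t_i)$, which is the budget needed to drive the $i$-th relay-to-destination exponent to $1$. If $m_1+m_2\le r$, then $(\beta,\delta)=(1,1)$ is already in outage and gives $\beta+\delta=2\ge 2r$. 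Otherwise $m_1+m_2>r$; taking $t_1\le t_2$ without loss of generality I would saturate the less efficient path, $\beta=1$ (cost $m_1\le 1/2\le r$), and spend the remaining budget on the other, $\delta=(r-m_1)/(1-t_2)$, where $r-m_1<m_2$ keeps us in the linear regime. A short split on whether $t_2\gtrless 1/2$ reduces the target $\beta+\delta\ge 2r$ to $m_1\le 1/2$ when $t_2\ge 1/2$, and to $(1-r)(1-2t_2)\ge 0$ at the extreme $t_1=t_2$ when $t_2<1/2$; both hold, and since the relevant slack only increases as $t_1$ decreases, the boundary case $t_1=t_2$ is the worst.

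The main obstacle is exactly this deterministic claim, i.e.\ verifying that $t_1=t_2=1/2$ is genuinely the relays' best defense and cannot be beaten by an asymmetric or more extreme schedule. The subtle point is that the symmetric witness $\beta=\delta=r$ does \emph{not} certify the bound (it can violate the outage constraint), so one is forced to the corner where one relay-to-destination link is fully saturated; keeping track of the saturation gain, rather than relying on the lossy linear relaxation $g_i(\beta)\le(1-t_i)\beta$, is what makes the "saturate one path" construction the correct one and is where the argument must be carried out carefully.
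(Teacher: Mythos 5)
Your proof is correct, and its skeleton is the same as the paper's: achievability is quoted from the static QMF computation $d_{SQMF}(r)=2-2r$, and the converse fixes the observable exponents at $\alpha=\gamma=1$ and then exhibits, for every admissible schedule, an outage point $(\beta,\delta)\in[0,1]^2$ with $\beta+\delta\ge 2r$. The substantive difference is in what ``every admissible schedule'' means. The paper stipulates that both relays use the same function $f(\cdot,\cdot)$ of their receive CSI ``by symmetry,'' so after setting $\alpha=\gamma=1$ it only has to defeat a common listening fraction $t_1=t_2=f(r,1)$, which it does by a three-way case split on $f(r,1)$ with witnesses $(1,1)$, $\bigl(\tfrac{r}{2(1-f)},\tfrac{r}{2(1-f)}\bigr)$ and $\bigl(\tfrac{r-f}{1-f},1\bigr)$. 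You instead allow $t_1\neq t_2$ and prove the deterministic claim for arbitrary pairs via $m_i=\min(t_i,1-t_i)$: either $m_1+m_2\le r$ and $(1,1)$ is already in outage, or you saturate one path ($\beta=1$) and spend the residual budget on the other ($\delta=(r-m_1)/(1-t_2)$). I checked the feasibility and the inequality $\beta+\delta\ge 2r$ in both subcases ($t_2\ge 1/2$ reducing to $m_1\le 1/2$; $t_2<1/2$ reducing to $(1-r)(1-2t_2)\ge 0$ at $t_1=t_2$), and they hold. This buys you two things: it removes the unproved symmetry reduction (a priori the two relays could use different functions, and the paper gives no argument that a common $f$ is without loss of optimality), and your Case A cleanly absorbs the parameter range where the paper's symmetric witness $\tfrac{r}{2(1-f)}$ exceeds $1$ and so is not literally a member of $\mathcal{O}$. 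The cost is a slightly longer case analysis; the paper's version is shorter but rests on the symmetric-schedule assumption.
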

\begin{proof}
As in Lemma \ref{lem:ddfreceivecsi}, the DMT of the half-duplex parallel relay network with receive CSI $d_{L-CSI}(r)$ is given by $d_{L-CSI}(r)=$
\begin{equation*}\label{eq:parallelreceiveCSI}
\min_{\alpha,\gamma}\;\max_{f(r,\alpha),f(r,\gamma)}\;\min_{(\beta,\delta)\in\mathcal{O}(r,\alpha,\gamma,f(r,\alpha),f(r,\gamma))} 4-\alpha-\beta-\gamma-\delta
\end{equation*} where $\mathcal{O}(r,\alpha,\gamma,t_1,t_2)=$ \begin{equation*}\left\{(\beta,\delta):\begin{array}{l}\min\left\{t_1\alpha,(1-t_1)\beta\right\}\\ \quad\quad +\,\min\left\{t_2\gamma,(1-t_2)\delta\right\}\leq r,\\ 0\leq\beta,\delta\leq 1\end{array}\right\},\end{equation*} and for explicitness $t_1$ and $t_2$ are set as $f(r,\alpha)$ and $f(r,\gamma)$, where $f(\cdot,\cdot)$ is any arbitrary function. Note that both relays use the same function $f(\cdot,\cdot)$ to decide the switching time by symmetry. Setting $\alpha=1$ and $\gamma=1$:
\begin{IEEEeqnarray}{l}
d_{L-CSI}(r)\nonumber\\  
\leq  \min_{\alpha,\gamma}\;\max_{f(r,\alpha),f(r,\gamma)}\;\min_{\mathcal{O}(r,\alpha,\gamma,f(r,\alpha),f(r,\gamma))} 4-\alpha-\beta-\gamma-\delta \nonumber \\
 \leq  \max_{f(r,1),f(r,1)}\;\min_{\mathcal{O}(r,1,1,f(r,1),f(r,1))} 2-\beta-\delta \nonumber
\end{IEEEeqnarray}
\begin{itemize}
\item If $f(r,1)\leq\frac{r}{2}$, then
$$(1,1)\in\mathcal{O}(r,1,1,f(r,1),f(r,1)),$$which means that $$d_{L-CSI}(r)=0.$$
\item If $f(r,1)>\frac{1}{2}$, then
\begin{IEEEeqnarray*}{l}
\left(\frac{r}{2(1-f(r,1))},\frac{r}{2(1-f(r,1))}\right)\\
\quad\quad\quad\in\,\mathcal{O}(r,1,1,f(r,1),f(r,1)),
\end{IEEEeqnarray*}which means that
\begin{IEEEeqnarray*}{rCll}
d_{L-CSI}(r) & \leq & 2 - \frac{r}{1-f(r,1)} \\
& \leq & 2-2r, 
\end{IEEEeqnarray*}
since $f(r,1) > \frac{1}{2}.$
\item If $\frac{r}{2} < f(r,1) \leq \frac{1}{2},$ then
$$\left(\frac{r-f(r,1)}{1-f(r,1)},1\right)\in\mathcal{O}(r,1,1,f(r,1),f(r,1)).$$ (Note: Along with checking that the outage condition is satisfied, we also need to check $0\leq\frac{r-f(r,1)}{1-f(r,1)}\leq 1$. This is indeed true since $r < 1$ and $f(r,1)\leq \frac{1}{2} \leq r$.) Hence,
\begin{IEEEeqnarray*}{rCl}
d_{L-CSI}(r) & \leq & 1 - \frac{r-f(r,1)}{1-f(r,1)} \\
& \leq & 1 - \frac{r-\frac{1}{2}}{1-\frac{1}{2}}\\
& = & 2-2r.
\end{IEEEeqnarray*}
\end{itemize}
So by taking the maximum over the 3 cases, we can say that $d_{L-CSI}(r)\leq 2-2r$. 

Achievability by the static QMF scheme with equal listening and transmit times for both the relays follows by \eqref{eq:parstatqmf}. 
\end{proof}

\section{Concluding Remarks}
\label{sec:Conc}

We investigated the necessity of dynamic relaying strategies in achieving the optimal DMT in half-duplex wireless networks with only receive CSI by focusing on the simplest wireless relay network: the single relay channel with arbitrary channel strengths. We introduced a generalized diversity-multiplexing trade-off framework in order to capture different channel strengths in the high-SNR limit. Using this framework, we identified regimes in which dynamic schemes are necessary and those where static schemes are sufficient. We showed that either static QMF or dynamic-decode-forward (DDF) is sufficient to achieve the optimal DMT in all the regimes. Comparing with the full-duplex case, we found that the optimal half-duplex DMT equals the optimal full-duplex DMT in some regimes, while it falls short in certain other regimes. These results also put into perspective earlier results in the literature which focused on the two extreme cases for a single relay network, when all channels are statistically equivalent and when there is no direct link between the source and the destination. 


While static QMF and dynamic-DDF turned out to be sufficient to achieve the optimal DMT of the single relay channel,  we showed through the example of the half-duplex parallel relay network that these two strategies are not sufficient in general to achieve the optimal DMT of larger half-duplex relay networks. We identified a dynamic QMF scheme with a simple listen-transmit schedule for the relays as a function of their receive CSI and proved that it is DMT optimal. For larger networks, there has been significant recent interest in identifying optimal static half-duplex schedules that are globally optimized based on the central knowledge of all the channel realizations in the network \cite{Etkinetal, Brahmaetal,CardoneTK14}. Interestingly, \cite{Brahmaetal,CardoneTK14} show that the optimal schedule uses only a few active states out of the exponentially many possible states for the network. However, such global knowledge of the channel coefficients is rarely available and we believe the DMT framework with local CSI assumption at the relays that we consider in this paper can be more relevant for such networks. Identifying optimal dynamic schedules for more general networks that are functions of the local channel state information available at the relays remains an interesting open problem.

\section*{Acknowledgments}
We would like to thank the reviewers and the associate editor for their insightful comments that improved the exposition in this paper.

\appendices
\section{Proof of Lemma \ref{lem:ddf}}\label{sec:lemddf}
\begin{proof}
We can assume $\gamma\leq\min(\alpha,\beta)$, since assuming $\gamma>\min(\alpha,\beta)$ can be treated identically to the full-duplex case (Lemma \ref{lem:fullduplex}). In DDF, the listening time for the relay is $t =r/\alpha $. Outage occurs if either of the following two events occur:
\begin{itemize}
\item $t > 1$ and $\gamma < r$ (relay never gets a chance to transmit and direct link is not strong enough)
\item $t \leq 1$ and $t\gamma+(1-t)\max(\gamma,\beta) = t\gamma+(1-t)\beta < r$ (relay decodes and is able to transmit but the second cut is not strong enough)
\end{itemize}
The DMT achieved by DDF is given by $d_{DDF}(r) = \min\; a+b+c-\alpha-\beta-\gamma$ where the minimization is over the above two outage events. The first event can be ignored by noting that $\alpha = \min(a,r)$, $\beta = b$ and $\gamma = \min(\alpha,\beta,c,r)=\min(c,r)$ is the optimal point, resulting in $d_{DDF}(r) = (a-r)^+ +(c-r)^+$, which is no worse than the full-duplex DMT. Hence for the remainder of the proof, we deal exclusively with the second event.

From $t=\frac{r}{\alpha}$, we note that $\alpha=\frac{r}{t}$; and since $\alpha\leq a$, we have that $\frac{r}{a}\leq t\leq 1.$ Hence, the domain of outage events is given by the following conditions,
$$ t\gamma+(1-t)\beta < r,$$ $$\frac{r}{a}\leq t\leq 1,$$ $$0\leq \beta\leq b,\; 0\leq \gamma\leq c,$$ $$\gamma\leq\min\left(\frac{r}{t},\beta\right).$$ We instead enlarge the domain of outage events by ignoring the condition to get the following conditions, $\gamma\leq \frac{r}{t}$:
$$t\gamma+(1-t)\beta < r,$$ $$\frac{r}{a}\leq t\leq 1,$$ $$0\leq \beta\leq b,\;0\leq \gamma\leq c,$$ $$\gamma\leq\beta.$$ It can be verified in the end that for every $(a,b,c,r)$ the optimal $(\beta,\gamma,t)$ on the larger domain also satisfy the condition $\gamma\leq \frac{r}{t}$, so enlarging the domain does not affect the solution of the optimization problem.

The optimal values of $\beta,\gamma$ as a function of $(a,b,c,r,t)$ will depend on the relations among $a,b,c,r$ and $t$. A few cases are shown in Figures~\ref{fig:ddf1}, \ref{fig:ddf2}, \ref{fig:ddf3}, \ref{fig:ddf4}, \ref{fig:ddf5}, \ref{fig:ddf6}. The green region denotes the region $\{(\gamma,\beta):0\leq\gamma\leq c,0\leq\beta\leq b, \gamma\leq\beta\}$. The orange region denotes the region $\{(\gamma,\beta):\gamma\geq 0,\beta\geq 0,t\gamma +(1-t)\beta < r\}$ for some $t$. (Note that the boundary of the orange region always passes through $(r,r)$.) The intersection of the two regions, marked by lines, is the feasible region. As can be seen from the different subcases, depending on the slope of the line (which is determined by $t$), the value of $r$, and the relative values of $b$ and $c$, $\beta+\gamma$ is maximized by different points on the boundary of the feasible region. 

\begin{figure}[!htp]
\centering
\begin{tikzpicture}[scale=1.5]
\clip (-0.5,-0.5) rectangle (4,4);
\path[draw,->,>=stealth',name path=x axis] (-0.5,0) -- (3.5,0);
\path[draw,->,>=stealth',name path=y axis] (0,-0.5) -- (0,3.5);
\coordinate (rr) at (0.7,0.7);
\coordinate (c) at (1,0);
\coordinate (b) at (0,2.5);
\coordinate[label=below left:{$\gamma$}] (temp1) at (3.5,0);
\coordinate[label=below left:{$\beta$}] (temp2) at (0,3.5);
\draw let \p1=(c) in (\x1,1pt) -- (\x1,-1pt) node[anchor=north,fill=white] {$c$};
\draw let \p1=(b) in (1pt,\y1) -- (-1pt, \y1) node[anchor=east,fill=white] {$b$};
\filldraw[fill=green!40,fill opacity=0.5] let \p1=(c),\p2=(b) in (0,0) -- (\x1,\x1) -- (\x1,\y2) -- (b) -- cycle;
\draw[dashed] let \p1=(c) in (c) -- (\x1,\x1);
\draw[dashed] let \p1=(c) in (\x1,\x1) -- (2,2);

\coordinate (startpt) at (0,1);
\path let \p1=(startpt),\p2=(rr),\n1={\x2*\y1/(\y1-\y2)} in coordinate (endpt) at (\n1,0){};
\draw (startpt) -- (endpt) node[pos=0.6,anchor=south west]{$t\gamma+(1-t)\beta = r$};
\filldraw[fill=orange!40,fill opacity=0.5] (0,0) -- (startpt) -- (endpt) -- cycle;
\draw[pattern=north east lines] (0,0) -- (rr) -- (startpt) -- cycle;

\node[align=left] (optpttext) at (2,1.5) {optimal};
\draw[->,>=stealth',shorten >=3] (optpttext) -- (rr);
\node[] (temp3) at (0.5,1.3) {$(r,r)$};
\draw[->,>=stealth',shorten >=3] (temp3) -- (rr);
\node at (rr) [circle,draw,fill,inner sep=0pt,minimum size=1mm] {};
\end{tikzpicture}
\caption{$c=1,b=2.5,r<c, t<0.5$}
\label{fig:ddf1}
\end{figure}

\begin{figure}[!htp]
\centering
\begin{tikzpicture}[scale=1.5]
\clip (-0.5,-0.5) rectangle (4,4);
\path[draw,->,>=stealth',name path=x axis] (-0.5,0) -- (3.5,0);
\path[draw,->,>=stealth',name path=y axis] (0,-0.5) -- (0,3.5);
\coordinate (rr) at (0.7,0.7);
\coordinate (c) at (1,0);
\coordinate (b) at (0,2.5);
\coordinate[label=below left:{$\gamma$}] (temp1) at (3.5,0);
\coordinate[label=below left:{$\beta$}] (temp2) at (0,3.5);
\draw let \p1=(c) in (\x1,1pt) -- (\x1,-1pt) node[anchor=north,fill=white] {$c$};
\draw let \p1=(b) in (1pt,\y1) -- (-1pt, \y1) node[anchor=east,fill=white] {$b$};
\filldraw[fill=green!40,fill opacity=0.5] let \p1=(c),\p2=(b) in (0,0) -- (\x1,\x1) -- (\x1,\y2) -- (b) -- cycle;
\draw[dashed] let \p1=(c) in (c) -- (\x1,\x1);
\draw[dashed] let \p1=(c) in (\x1,\x1) -- (2,2);

\coordinate (startpt) at (0,1.6);
\path let \p1=(startpt),\p2=(rr),\n1={\x2*\y1/(\y1-\y2)} in coordinate (endpt) at (\n1,0){};
\draw (startpt) -- (endpt) node[pos=0.9,anchor=south west]{$t\gamma+(1-t)\beta = r$};
\filldraw[fill=orange!40,fill opacity=0.5] (0,0) -- (startpt) -- (endpt) -- cycle;
\draw[pattern=north east lines] (0,0) -- (rr) -- (startpt) -- cycle;

\node[align=left] (optpttext) at (2,1.5) {optimal};
\draw[->,>=stealth',shorten >=3] (optpttext) -- (startpt);
\node[] (temp3) at (0.7,1.3) {$(r,r)$};
\draw[->,>=stealth',shorten >=3] (temp3) -- (rr);
\node at (rr) [circle,draw,fill,inner sep=0pt,minimum size=1mm] {};
\end{tikzpicture}
\caption{$c=1,b=2.5,r<c,t>0.5$}
\label{fig:ddf2}
\end{figure}

\begin{figure}[!htp]
\centering
\begin{tikzpicture}[scale=1.5]
\clip (-0.5,-0.5) rectangle (4,4);
\path[draw,->,>=stealth',name path=x axis] (-0.5,0) -- (3.5,0);
\path[draw,->,>=stealth',name path=y axis] (0,-0.5) -- (0,3.5);
\coordinate (rr) at (0.8,0.8);
\coordinate (c) at (1,0);
\coordinate (b) at (0,1.5);
\coordinate[label=below left:{$\gamma$}] (temp1) at (3.5,0);
\coordinate[label=below left:{$\beta$}] (temp2) at (0,3.5);
\draw let \p1=(c) in (\x1,1pt) -- (\x1,-1pt) node[anchor=north,fill=white] {$c$};
\draw let \p1=(b) in (1pt,\y1) -- (-1pt, \y1) node[anchor=east,fill=white] {$b$};
\filldraw[fill=green!40,fill opacity=0.5] let \p1=(c),\p2=(b) in (0,0) -- (\x1,\x1) -- (\x1,\y2) -- (b) -- cycle;
\draw[dashed] let \p1=(c) in (c) -- (\x1,\x1);
\draw[dashed] let \p1=(c) in (\x1,\x1) -- (2,2);

\coordinate (startpt) at (0,2);
\path let \p1=(startpt),\p2=(rr),\n1={\x2*\y1/(\y1-\y2)} in coordinate (endpt) at (\n1,0){};
\draw[name path=constraint] (startpt) -- (endpt) node[pos=0.9,anchor=south west]{$t\gamma+(1-t)\beta = r$};
\filldraw[fill=orange!40,fill opacity=0.5] (0,0) -- (startpt) -- (endpt) -- cycle;
\path[name path=hor thru b] let \p1=(b) in (b) -- (3,\y1);
\path [name intersections={of=constraint and hor thru b}] coordinate (newpt2) at (intersection-1);
\draw[pattern=north east lines] (0,0) -- (rr) -- (newpt2) -- (b) -- cycle;

\node[align=left] (optpttext) at (2,2.5) {optimal};
\draw[->,>=stealth',shorten >=3] (optpttext) -- (newpt2);
\node[] (temp3) at (0.7,1.3) {$(r,r)$};
\draw[->,>=stealth',shorten >=3] (temp3) -- (rr);
\node at (rr) [circle,draw,fill,inner sep=0pt,minimum size=1mm] {};
\end{tikzpicture}
\caption{$c=1,b=1.5,r<c,t>0.5$}
\label{fig:ddf3}
\end{figure}

\begin{figure}[!htp]
\centering
\begin{tikzpicture}[scale=1.5]
\clip (-0.5,-0.5) rectangle (3.8,4);
\path[draw,->,>=stealth',name path=x axis] (-0.5,0) -- (3.5,0);
\path[draw,->,>=stealth',name path=y axis] (0,-0.5) -- (0,3.5);
\coordinate (rr) at (1.2,1.2);
\coordinate (c) at (1,0);
\coordinate (b) at (0,2.5);
\coordinate[label=below left:{$\gamma$}] (temp1) at (3.5,0);
\coordinate[label=below left:{$\beta$}] (temp2) at (0,3.5);
\draw let \p1=(c) in (\x1,1pt) -- (\x1,-1pt) node[anchor=north,fill=white] {$c$};
\draw let \p1=(b) in (1pt,\y1) -- (-1pt, \y1) node[anchor=east,fill=white] {$b$};
\filldraw[fill=green!40,fill opacity=0.5] let \p1=(c),\p2=(b) in (0,0) -- (\x1,\x1) -- (\x1,\y2) -- (b) -- cycle;
\draw[dashed] let \p1=(c) in (c) -- (\x1,\x1);
\draw[dashed] let \p1=(c) in (\x1,\x1) -- (2,2);

\coordinate (startpt) at (0,1.7);
\path let \p1=(startpt),\p2=(rr),\n1={\x2*\y1/(\y1-\y2)} in coordinate (endpt) at (\n1,0){};
\filldraw[fill=orange!40,fill opacity=0.5] (0,0) -- (startpt) -- (endpt) -- cycle;
\draw [name path=constraint] (startpt) -- (endpt) node[pos=0.7,anchor=north east]{};
\path[name path=vert thru c] let \p1=(c) in (c) -- (\x1,3.5);
\path [name intersections={of=constraint and vert thru c}] coordinate (newpt) at (intersection-1);
\draw[pattern=north east lines] let \p1=(c) in (0,0) -- (\x1,\x1) -- (newpt) -- (startpt) -- cycle;

\node[align=left] (optpttext) at (2,2.5) {optimal};
\draw[->,>=stealth',shorten >=3] (optpttext) -- (newpt);
\node[] (temp3) at (2.5,1.8) {$(r,r)$};
\draw[->,>=stealth',shorten >=3] (temp3) -- (rr);
\node at (rr) [circle,draw,fill,inner sep=0pt,minimum size=1mm] {};
\end{tikzpicture}
\caption{$c=1,b=2.5,r>c,t<0.5$}
\label{fig:ddf4}
\end{figure}

\begin{figure}[!htp]
\centering
\begin{tikzpicture}[scale=1.5]
\clip (-0.5,-0.5) rectangle (4,4);
\draw [->,>=stealth',name path=x axis] (-0.5,0) -- (3.5,0);
\draw [->,>=stealth',name path=y axis] (0,-0.5) -- (0,3.5);
\coordinate (rr) at (1.1,1.1);
\coordinate (c) at (1,0);
\coordinate (b) at (0,2.5);
\coordinate[label=below left:{$\gamma$}] (temp1) at (3.5,0);
\coordinate[label=below left:{$\beta$}] (temp2) at (0,3.5);
\draw let \p1=(c) in (\x1,1pt) -- (\x1,-1pt) node[anchor=north,fill=white] {$c$};
\draw let \p1=(b) in (1pt,\y1) -- (-1pt, \y1) node[anchor=east,fill=white] {$b$};
\filldraw[fill=green!40,fill opacity=0.5] let \p1=(c),\p2=(b) in (0,0) -- (\x1,\x1) -- (\x1,\y2) -- (b) -- cycle;
\draw[dashed] let \p1=(c) in (c) -- (\x1,\x1);
\draw[dashed] let \p1=(c) in (\x1,\x1) -- (2,2);

\coordinate (startpt) at (0,2.4);
\path let \p1=(startpt),\p2=(rr),\n1={\x2*\y1/(\y1-\y2)} in coordinate (endpt) at (\n1,0){};
\draw [name path=constraint] (startpt) -- (endpt) node[pos=0.8,anchor=south west]{$t\gamma+(1-t)\beta=r$};
\filldraw[fill=orange!40,fill opacity=0.5] (0,0) -- (startpt) -- (endpt) -- cycle;
\path[name path=vert thru c] let \p1=(c) in (c) -- (\x1,3.5);
\path [name intersections={of=constraint and vert thru c}] coordinate (newpt) at (intersection-1);
\draw[pattern=north east lines] let \p1=(c) in (0,0) -- (\x1,\x1) -- (newpt) -- (startpt) -- cycle;

\node[align=left] (optpttext) at (2.5,2.3) {optimal};
\draw[->,>=stealth',shorten >=3] (optpttext) -- (startpt);
\node[] (temp3) at (2.5,1.8) {$(r,r)$};
\draw[->,>=stealth',shorten >=3] (temp3) -- (rr);
\node at (rr) [circle,draw,fill,inner sep=0pt,minimum size=1mm] {};
\end{tikzpicture}
\caption{$c=1,b=2.5,r>c,t>0.5$}
\label{fig:ddf5}
\end{figure}

\begin{figure}[!htp]
\centering
\begin{tikzpicture}[scale=1.5]
\clip (-0.5,-0.5) rectangle (4,4);
\draw [->,>=stealth',name path=x axis] (-0.5,0) -- (3.5,0);
\draw [->,>=stealth',name path=y axis] (0,-0.5) -- (0,3.5);
\coordinate (rr) at (1.2,1.2);
\coordinate (c) at (1,0);
\coordinate (b) at (0,2.5);
\coordinate[label=below left:{$\gamma$}] (temp1) at (3.5,0);
\coordinate[label=below left:{$\beta$}] (temp2) at (0,3.5);
\draw let \p1=(c) in (\x1,1pt) -- (\x1,-1pt) node[anchor=north,fill=white] {$c$};
\draw let \p1=(b) in (1pt,\y1) -- (-1pt, \y1) node[anchor=east,fill=white] {$b$};
\filldraw[fill=green!40,fill opacity=0.5] let \p1=(c),\p2=(b) in (0,0) -- (\x1,\x1) -- (\x1,\y2) -- (b) -- cycle;
\draw[dashed] let \p1=(c) in (c) -- (\x1,\x1);
\draw[dashed] let \p1=(c) in (\x1,\x1) -- (2,2);

\coordinate (startpt) at (0,3);
\path let \p1=(startpt),\p2=(rr),\n1={\x2*\y1/(\y1-\y2)} in coordinate (endpt) at (\n1,0){};
\draw [name path=constraint] (startpt) -- (endpt) node[pos=0.8,anchor=south west]{$t\gamma+(1-t)\beta = r$};
\filldraw[fill=orange!40,fill opacity=0.5] (0,0) -- (startpt) -- (endpt) -- cycle;
\path[name path=vert thru c] let \p1=(c) in (c) -- (\x1,3.5);
\path[name path=hor thru b] let \p1=(b) in (b) -- (3,\y1);
\path [name intersections={of=constraint and vert thru c}] coordinate (newpt) at (intersection-1);
\path [name intersections={of=constraint and hor thru b}] coordinate (newpt2) at (intersection-1);
\draw[pattern=north east lines] let \p1=(c) in (0,0) -- (\x1,\x1) -- (newpt) -- (newpt2) -- (b) -- cycle;

\node[align=left] (optpttext) at (2.5,3) {optimal};
\draw[->,>=stealth',shorten >=3] (optpttext) -- (newpt2);
\node[] (temp3) at (2.5,1.8) {$(r,r)$};
\draw[->,>=stealth',shorten >=3] (temp3) -- (rr);
\node at (rr) [circle,draw,fill,inner sep=0pt,minimum size=1mm] {};
\end{tikzpicture}
\caption{$c=1,b=2.5,r>c,t>0.5$}
\label{fig:ddf6}
\end{figure}

We describe the different cases in detail for $c<a<b$. The analysis for $c<b<a$ is similar (though not identical) and hence omitted. The critical values of $t$ are 
\begin{itemize}
\item $\frac{r}{a}$: we know that $t$ has to be greater than this value;
\item $\frac{1}{2}$: when $t=\frac{1}{2}$, the slope of the constraint boundary equals the slope of the objective function;
\item $1-\frac{r}{b}$: when $t=1-\frac{r}{b}$, the $\beta$-coordinate of intersection of constraint region boundary with the $\beta$-axis equals $b$;
\item $\frac{b-r}{b-c}$: when $t=\frac{b-r}{b-c}$, the $\gamma$-coordinate of intersection of constraint boundary with the line $\beta=b$ equals $c.$
\end{itemize}
We need to order these critical values, so that we can identify the optimal $(\beta,\gamma)$ for each $t$. %
For this we note the following:
\begin{itemize}
\item $\frac{r}{a} < \frac{1}{2}$ if $r<\frac{a}{2}$,
\item $\frac{r}{a} < 1-\frac{r}{b}$ if $r<\frac{ab}{a+b}$,
\item $\frac{r}{a} < \frac{b-r}{b-c}$ if $r<\frac{ab}{a+b-c}$,
\item $1-\frac{r}{b}<\frac{1}{2}$ if $r>\frac{b}{2}$,
\item $1-\frac{r}{b}<\frac{b-r}{b-c}$ always,
\item $\frac{b-r}{b-c}<\frac{1}{2}$ if $r>\frac{b+c}{2}$.
\end{itemize}
Finally, for a given $(a,b,c)$ we can calculate the DMT achieved by DDF depending on how the critical values of $r$ identified above are ordered. Tables \ref{tab:dmtddf1} and \ref{tab:dmtddf2} provide all the cases. From the tables, it is easy to verify the claim in Lemma \ref{lem:ddf}.
\end{proof}

\section{Characterizing DMT under global CSI}\label{app:glob}
From the capacity upper-bound and achievable rates described in the preliminaries in Section~\ref{sec:model} for the case of global CSI at the relay, we get the following lower-bound on the probability of outage
\begin{equation}\label{eq:glob_upp}\text{Pr}(\text{outage})\geq\mathbb{P}(\max_{t}C_{h.d.}(\rho^a, \rho^b, \rho^c)+G\leq r\log\rho),\end{equation}
and the following upper-bound on the probability of outage:
\begin{equation}\label{eq:glob_low}\text{Pr}(\text{outage})\leq\mathbb{P}(\max_{t}C_{h.d.}(\rho^a, \rho^b, \rho^c)-\kappa\leq r\log\rho),\end{equation} where $\kappa$ is a constant independent of the SNR.

At high SNR, the constants $G$ and $\kappa$ become insignificant and hence, both \eqref{eq:glob_upp} and \eqref{eq:glob_low} are given by:
\begin{IEEEeqnarray*}{rCl}
\mathbb{P}(\max_{t}C_{h.d.}(\rho^a, \rho^b, \rho^c)\leq r\log\rho) = \mathbb{P}\left(\max_{t(\alpha,\beta,\gamma)}r'_{h.d.}\leq r\right),
\end{IEEEeqnarray*}
where $r'_{h.d.}$ is defined in \eqref{eq:rhdprime} on top of the next page.
\begin{figure*}[!th]
\normalsize
\begin{equation}\label{eq:rhdprime}
r'_{h.d.}\triangleq\min\left\{t\max(\alpha^+,\gamma^+) +(1-t)\gamma^+,t\gamma^++(1-t)\max(\beta^+,\gamma^+)\right\}
\end{equation}
\hrulefill
\end{figure*}

\begin{table*}[!ht]
\normalsize
\centering
\begin{tabular}{l c c c c c c}
\hline
\bfseries &&$(\alpha,\beta,\gamma)$&Optimal $t$& $s(\alpha,\beta,\gamma)$ && $\min s(\alpha,\beta,\gamma)$\\ \hline
$c\leq\frac{a}{2}$\\
\multirow{3}{*}{$\hspace{1 cm}r<c$} & $\frac{r}{a}<t<\frac{1}{2}$ & $(\frac{r}{t},r,r)$ & $\frac{r}{a}$ & $b+c-2r$ & \rdelim\}{3}{1mm}[] & \multirow{3}{*}{$a+c-2r$}\\
& $\frac{1}{2}<t<1-\frac{r}{b}$ & $(\frac{r}{t},\frac{r}{1-t},0)$ & $1-\frac{r}{b}$ & $a+c-\frac{br}{b-r}$ &&\\
& $1-\frac{r}{b}<t<1$ & $(\frac{r}{t},b,b-\frac{b-r}{t})$ & $1$ & $a+c-2r$ &&\vspace{2mm}\\

\multirow{4}{*}{$\hspace{1 cm}c<r<\frac{a}{2}$} & $\frac{r}{a}<t<\frac{1}{2}$ & $(\frac{r}{t},\frac{r-tc}{1-t},c)$ & $\frac{r}{a}$ & $b-\frac{(a-c)r}{a-r}$ & \rdelim\}{4}{1mm}[] & \multirow{4}{*}{$a-\frac{(b-c)r}{b-r}$}\\
& $\frac{1}{2}<t<1-\frac{r}{b}$ & $(\frac{r}{t},\frac{r}{1-t},0)$ & $1-\frac{r}{b}$ & $a+c-\frac{br}{b-r}$ &&\\
& $1-\frac{r}{b}<t<\frac{b-r}{b-c}$ & $(\frac{r}{t},b,b-\frac{b-r}{t})$ & $\frac{b-r}{b-c}$ & $a-\frac{(b-c)r}{b-r}$ &&\\
& $\frac{b-r}{b-c}<t<1$ & $(\frac{r}{t},b,c)$ & $\frac{b-r}{b-c}$ & $a-\frac{(b-c)r}{b-r}$ &&\vspace{2 mm}\\

\multirow{3}{*}{$\hspace{1 cm}\frac{a}{2}<r<\frac{ab}{a+b}$} & $\frac{r}{a}<t<1-\frac{r}{b}$ & $(\frac{r}{t},\frac{r}{1-t},0)$ & $1-\frac{r}{b}$ & $a+c-\frac{br}{b-r}$ & \rdelim\}{3}{1mm}[] & \multirow{3}{*}{$a-\frac{(b-c)r}{b-r}$}\\
& $1-\frac{r}{b}<t<\frac{b-r}{b-c}$ & $(\frac{r}{t},b,b-\frac{b-r}{t})$ & $\frac{b-r}{b-c}$ & $a-\frac{(b-c)r}{b-r}$ &&\\
& $\frac{b-r}{b-c}<t<1$ & $(\frac{r}{t},b,c)$ & $\frac{b-r}{b-c}$ & $a-\frac{(b-c)r}{b-r}$ &&\vspace{2 mm}\\ 

\multirow{2}{*}{$\hspace{1 cm}\frac{ab}{a+b}<r<\frac{b}{2}$} & $\frac{r}{a}<t<\frac{b-r}{b-c}$ & $(\frac{r}{t},b,b-\frac{b-r}{t})$ & $\frac{b-r}{b-c}$ & $a-\frac{(b-c)r}{b-r}$ & \rdelim\}{2}{1mm}[] & \multirow{2}{*}{$a-\frac{(b-c)r}{b-r}$}\\
& $\frac{b-r}{b-c}<t<1$ & $(\frac{r}{t},b,c)$ & $\frac{b-r}{b-c}$ & $a-\frac{(b-c)r}{b-r}$ &&\vspace{2 mm}\\

\multirow{2}{*}{$\hspace{1 cm}\frac{b}{2}<r$} & $\frac{r}{a}<t<\frac{b-r}{b-c}$ & $(\frac{r}{t},b,b-\frac{b-r}{t})$ & $\frac{r}{a}$ & $\frac{ab}{r}-a-b+c$ & \rdelim\}{2}{1mm}[] & \multirow{2}{*}{$\frac{ab}{r}-a-b+c$}\\
& $\frac{b-r}{b-c}<t<1$ & $(\frac{r}{t},b,c)$ & $\frac{b-r}{b-c}$ & $a+c-\frac{br}{b-r}$ &&\\ \hline

$\frac{a}{2}<c\leq\frac{ab}{a+b}$\\
\multirow{3}{*}{$\hspace{1 cm}r<\frac{a}{2}$} & $\frac{r}{a}<t<\frac{1}{2}$ & $(\frac{r}{t},r,r)$ & $\frac{r}{a}$ & $b+c-2r$ & \rdelim\}{3}{1mm}[] & \multirow{3}{*}{$a+c-2r$}\\
& $\frac{1}{2}<t<1-\frac{r}{b}$ & $(\frac{r}{t},\frac{r}{1-t},0)$ & $1-\frac{r}{b}$ & $a+c-\frac{br}{b-r}$ &&\\
& $1-\frac{r}{b}<t<1$ & $(\frac{r}{t},b,b-\frac{b-r}{t})$ & $1$ & $a+c-2r$ &&\vspace{2 mm}\\ 

\multirow{2}{*}{$\hspace{1 cm}\frac{a}{2}<r<c$} & $\frac{r}{a}<t<1-\frac{r}{b}$ & $(\frac{r}{t},\frac{r}{1-t},0)$ & $1-\frac{r}{b}$ & $a+c-\frac{br}{b-r}$ & \rdelim\}{2}{1mm}[] & \multirow{2}{*}{$a+c-2r$}\\
& $1-\frac{r}{b}<t<1$ & $(\frac{r}{t},b,b-\frac{b-r}{t})$ & $1$ & $a+c-2r$ &&\vspace{2 mm}\\

\multirow{3}{*}{$\hspace{1 cm}c<r<\frac{ab}{a+b}$} & $\frac{r}{a}<t<1-\frac{r}{b}$ & $(\frac{r}{t},\frac{r}{1-t},0)$ & $1-\frac{r}{b}$ & $a+c-\frac{br}{b-r}$ & \rdelim\}{3}{1mm}[] & \multirow{3}{*}{$a-\frac{(b-c)r}{b-r}$}\\
& $1-\frac{r}{b}<t<\frac{b-r}{b-c}$ & $(\frac{r}{t},b,b-\frac{b-r}{t})$ & $\frac{b-r}{b-c}$ & $a-\frac{(b-c)r}{b-r}$ &&\\
& $\frac{b-r}{b-c}<t<1$ & $(\frac{r}{t},b,c)$ & $\frac{b-r}{b-c}$ & $a-\frac{(b-c)r}{b-r}$ &&\vspace{2 mm}\\

\multirow{2}{*}{$\hspace{1 cm}\frac{ab}{a+b}<r<\frac{b}{2}$} & $\frac{r}{a}<t<\frac{b-r}{b-c}$ & $(\frac{r}{t},b,b-\frac{b-r}{t})$ & $\frac{b-r}{b-c}$ & $a-\frac{(b-c)r}{b-r}$ & \rdelim\}{2}{1mm}[] & \multirow{2}{*}{$a-\frac{(b-c)r}{b-r}$}\\
& $\frac{b-r}{b-c}<t<1$ & $(\frac{r}{t},b,c)$ & $\frac{b-r}{b-c}$ & $a-\frac{(b-c)r}{b-r}$ &&\vspace{2 mm}\\ 

\multirow{2}{*}{$\hspace{1 cm}\frac{b}{2}<r$} & $\frac{r}{a}<t<\frac{b-r}{b-c}$ & $(\frac{r}{t},b,b-\frac{b-r}{t})$ & $\frac{r}{a}$ & $\frac{ab}{r}-a-b+c$ & \rdelim\}{2}{1mm}[] & \multirow{2}{*}{$\frac{ab}{r}-a-b+c$}\\
& $\frac{b-r}{b-c}<t<1$ & $(\frac{r}{t},b,c)$ & $\frac{b-r}{b-c}$ & $a+c-\frac{br}{b-r}$ &&\\ \hline
\end{tabular}
\caption{DMT achieved by DDF on half-duplex $(a,b,c)$-relay channel, $a<b$, for the cases $c\leq \frac{a}{2}$ and $\frac{a}{2}<c\leq\frac{ab}{a+b}$}
\label{tab:dmtddf1} 
\end{table*}

So the expression for the minimum outage probability under global CSI is given by
\begin{IEEEeqnarray*}{l}
\text{Pr}(\text{outage}) \\
 =  \int p(\alpha)p(\beta)p(\gamma)\cdot\mathbb{P}\left(\max_{t(\alpha,\beta,\gamma)}r'_{h.d.}\leq r\;\Big| \alpha,\beta,\gamma\right)\; d\alpha\;d\beta\;d\gamma
\end{IEEEeqnarray*}
The quantity $\mathbb{P}\left(\max_{t(\alpha,\beta,\gamma)}r'_{h.d.}\leq r\;\Big| \alpha,\beta,\gamma\right)$ has value either 0 or 1. It is 0 if for the given $(\alpha,\beta,\gamma)$, there exists a $t$ such that $r'_{h.d.}>r$. It is 1 if for the given $(\alpha,\beta,\gamma)$, it is the case that $r'_{h.d.}\leq r$ for any choice of t.
Hence,
\begin{IEEEeqnarray}{rCl}
\text{Pr}(\text{outage}) & = & \iiint_{\max_{t(\alpha,\beta,\gamma)}r'_{h.d.}\leq r} p(\alpha)p(\beta)p(\gamma)\; d\alpha\;d\beta\;d\gamma\nonumber\\
& \doteq & \rho^{-d_{G-CSI}(r)},\label{eq:glob_sol}
\end{IEEEeqnarray}
where \begin{equation*}
d_{G-CSI}(r) = \min_{\substack{(\alpha,\beta,\gamma):\\\max_{t(\alpha,\beta,\gamma)} r'_{h.d.}\leq r, \\ \alpha\leq a,\;\beta\leq b,\;\gamma\leq c}}a+b+c-\alpha-\beta-\gamma,\end{equation*} and \eqref{eq:glob_sol} follows by taking the same steps as in \cite[Appendix~A]{ZheTse}. The notation $f(\rho)\doteq g(\rho)$ means that $\lim_{\rho\rightarrow\infty}\frac{\log f(\rho)}{\log\rho}=\lim_{\rho\rightarrow\infty}\frac{\log g(\rho)}{\log\rho}.$


Finally, it is easy to check that introducing non-negativity conditions on $\alpha,\beta,\gamma$ and redefining $r'_{h.d.}$ as \eqref{eq:rhd} instead of \eqref{eq:rhdprime} does not change the problem, which gives us \eqref{eq:globalopt}.\hfill$\QED$

\begin{table*}[!th]
\normalsize
\centering
\begin{tabular}{l c c c c c c}
\hline
\bfseries &&$(\alpha,\beta,\gamma)$&Optimal $t$& $s(\alpha,\beta,\gamma)$ && $\min s(\alpha,\beta,\gamma)$\\ \hline
$\frac{ab}{a+b}<c\leq\frac{b}{2}$\\
\multirow{3}{*}{$\hspace{1 cm}r<\frac{a}{2}$} & $\frac{r}{a}<t<\frac{1}{2}$ & $(\frac{r}{t},r,r)$ & $\frac{r}{a}$ & $b+c-2r$ & \rdelim\}{3}{1mm}[] & \multirow{3}{*}{$a+c-2r$}\\
& $\frac{1}{2}<t<1-\frac{r}{b}$ & $(\frac{r}{t},\frac{r}{1-t},0)$ & $1-\frac{r}{b}$ & $a+c-\frac{br}{b-r}$ &&\\
& $1-\frac{r}{b}<t<1$ & $(\frac{r}{t},b,b-\frac{b-r}{t})$ & $1$ & $a+c-2r$ &&\vspace{2 mm}\\

\multirow{2}{*}{$\hspace{1 cm}\frac{a}{2}<r<\frac{ab}{a+b}$} & $\frac{r}{a}<t<1-\frac{r}{b}$ & $(\frac{r}{t},\frac{r}{1-t},0)$ & $1-\frac{r}{b}$ & $a+c-\frac{br}{b-r}$ & \rdelim\}{2}{1mm}[] & \multirow{2}{*}{$a+c-2r$}\\
& $1-\frac{r}{b}<t<1$ & $(\frac{r}{t},b,b-\frac{b-r}{t})$ & $1$ & $a+c-2r$ &&\vspace{2 mm}\\

\multirow{1}{*}{$\hspace{1 cm}\frac{ab}{a+b}<r<c$} & $\frac{r}{a}<t<1$ & $(\frac{r}{t},b,b-\frac{b-r}{t})$ & $1$ & $a+c-2r$ & \rdelim\}{1}{1mm}[] & \multirow{1}{*}{$a+c-2r$}\vspace{2 mm}\\

\multirow{2}{*}{$\hspace{1 cm}c<r<\frac{b}{2}$} & $\frac{r}{a}<t<\frac{b-r}{b-c}$ & $(\frac{r}{t},b,b-\frac{b-r}{t})$ & $\frac{b-r}{b-c}$ & $a-\frac{(b-c)r}{b-r}$ & \rdelim\}{2}{1mm}[] & \multirow{2}{*}{$a-\frac{(b-c)r}{b-r}$}\\
& $\frac{b-r}{b-c}<t<1$ & $(\frac{r}{t},b,c)$ & $\frac{b-r}{b-c}$ & $a-\frac{(b-c)r}{b-r}$ &&\vspace{2 mm}\\

\multirow{2}{*}{$\hspace{1 cm}\frac{b}{2}<r$} & $\frac{r}{a}<t<\frac{b-r}{b-c}$ & $(\frac{r}{t},b,b-\frac{b-r}{t})$ & $\frac{r}{a}$ & $\frac{ab}{r}-a-b+c$ & \rdelim\}{2}{1mm}[] & \multirow{2}{*}{$\frac{ab}{r}-a-b+c$}\\
& $\frac{b-r}{b-c}<t<1$ & $(\frac{r}{t},b,c)$ & $\frac{b-r}{b-c}$ & $a+c-\frac{br}{b-r}$ &&\\ \hline

$\frac{b}{2}<c$\\
\multirow{3}{*}{$\hspace{1 cm}r<\frac{a}{2}$} & $\frac{r}{a}<t<\frac{1}{2}$ & $(\frac{r}{t},r,r)$ & $\frac{r}{a}$ & $b+c-2r$ & \rdelim\}{3}{1mm}[] & \multirow{3}{*}{$a+c-2r$}\\
& $\frac{1}{2}<t<1-\frac{r}{b}$ & $(\frac{r}{t},\frac{r}{1-t},0)$ & $1-\frac{r}{b}$ & $a+c-\frac{br}{b-r}$ &&\\
& $1-\frac{r}{b}<t<1$ & $(\frac{r}{t},b,b-\frac{b-r}{t})$ & $1$ & $a+c-2r$ &&\vspace{2 mm}\\

\multirow{2}{*}{$\hspace{1 cm}\frac{a}{2}<r<\frac{ab}{a+b}$} & $\frac{r}{a}<t<1-\frac{r}{b}$ & $(\frac{r}{t},\frac{r}{1-t},0)$ & $1-\frac{r}{b}$ & $a+c-\frac{br}{b-r}$ & \rdelim\}{2}{1mm}[] & \multirow{2}{*}{$a+c-2r$}\\
& $1-\frac{r}{b}<t<1$ & $(\frac{r}{t},b,b-\frac{b-r}{t})$ & $1$ & $a+c-2r$ &&\vspace{2 mm}\\

\multirow{1}{*}{$\hspace{1 cm}\frac{ab}{a+b}<r<\frac{b}{2}$} & $\frac{r}{a}<t<1$ & $(\frac{r}{t},b,b-\frac{b-r}{t})$ & $1$ & $a+c-2r$ & \rdelim\}{1}{1mm}[] & \multirow{1}{*}{$a+c-2r$}\vspace{2 mm}\\

\multirow{1}{*}{$\hspace{1 cm}\frac{b}{2}<r<c$} & $\frac{r}{a}<t<1$ & $(\frac{r}{t},b,b-\frac{b-r}{t})$ & $\frac{r}{a}$ & $\frac{ab}{r}-a-b+c$ & \rdelim\}{1}{1mm}[] & \multirow{1}{*}{$\frac{ab}{r}-a-b+c$}\vspace{2 mm}\\

\multirow{2}{*}{$\hspace{1 cm}c<r$} & $\frac{r}{a}<t<\frac{b-r}{b-c}$ & $(\frac{r}{t},b,b-\frac{b-r}{t})$ & $\frac{r}{a}$ & $\frac{ab}{r}-a-b+c$ & \rdelim\}{2}{1mm}[] & \multirow{2}{*}{$\frac{ab}{r}-a-b+c$}\\
& $\frac{b-r}{b-c}<t<1$ & $(\frac{r}{t},b,c)$ & $\frac{b-r}{b-c}$ & $a+c-\frac{br}{b-r}$ &&\\ \hline
\end{tabular}
\caption{DMT achieved by DDF on half-duplex $(a,b,c)$-relay channel, $a<b$, for the cases $\frac{ab}{a+b}<c\leq\frac{b}{2}$ and $\frac{b}{2}<c$}
\label{tab:dmtddf2} 
\end{table*}

\section{Characterizing DMT under local CSI}\label{app:local}
Using the results described in the preliminaries in Section~\ref{sec:model} for the case when the relay only has CSIR, and following similar steps as Appendix~\ref{app:glob}, we get that the optimal probability of outage, which is achievable in the high-SNR limit, is given by
\begin{IEEEeqnarray*}{l}
\text{Pr}(\text{outage})\\  
\quad =  \int_{\alpha} p(\alpha)\min_{t(\alpha)} \text{Pr}\left(r'_{h.d.}\leq r\;\Big| \alpha\right)\; d\alpha \\
\quad =  \int_{\alpha}p(\alpha)  \min_{t(\alpha)} \left(\iint_{(\beta,\gamma)\;:\; r'_{h.d.}\leq r}p(\beta)\; p(\gamma)\; d\beta\; d\gamma \right) d\alpha
\end{IEEEeqnarray*}

Substituting the expressions for $p(\alpha),p(\beta)$ and $p(\gamma),$ and ignoring constants and terms that do not contribute to the $\rho$-exponent, we get that
\begin{IEEEeqnarray*}{l}
\text{Pr}(\text{outage})\doteq \\
 \int_{\alpha\leq a} \rho^{\alpha - a} \min_{t(\alpha)}\left(\iint_{\beta \leq b,\;\gamma\leq c,\; r'_{h.d.}\leq r} \rho^{\beta +\gamma -b-c}\; d\beta \; d\gamma\right) d\alpha.
\end{IEEEeqnarray*}

Now we need to show that $F(\rho)\doteq \rho^{-d_{L-CSI(r)}},$ where $F(\rho)$ is defined to be
\begin{equation}
\int_{\alpha\leq a} \rho^{\alpha - a} \min_{t(\alpha)}\left(\iint_{\beta \leq b,\;\gamma\leq c,\; r'_{h.d.}\leq r} \rho^{\beta +\gamma -b-c}\; d\beta \; d\gamma\right) d\alpha\label{eq:1}
\end{equation}
where \begin{equation}\label{eq:2}d_{L-CSI(r)} = \min_{\alpha\leq a}\max_{t(\alpha)}\min_{\beta\leq b,\;\gamma\leq c,\; r'_{h.d.}\leq r} a+b+c-\alpha-\beta-\gamma.\end{equation}

The proof of this fact follows on similar lines as \cite[Appendix~A]{ZheTse}, the details of which are provided below for the sake of completeness.


\subsection*{Upper bound on $F(\rho)$}

\begin{figure*}[!th]
\normalsize
\begin{IEEEeqnarray*}{ll}
\int_{\alpha< -(b+c)} \rho^{\alpha - a} \min_{t(\alpha)}\left(\iint_{\beta \leq b,\;\gamma\leq c,\; r'_{h.d.}\leq r} \rho^{\beta +\gamma -b-c}\; d\beta \; d\gamma\right) d\alpha \\
 =  \rho^{-(a+b+c)} \int_{\mu\leq 0} \rho^{\mu}\min_{t(\mu)}\left(\iint_{\beta \leq b,\;\gamma\leq c,\; r'_{h.d.}\leq r} \rho^{\beta +\gamma -b-c}\; d\beta \; d\gamma\right) d\mu\\
 \leq  k\rho^{-(a+b+c)}\quad \text{ for some } k<\infty\\
 \doteq  \rho^{-(a+b+c)}
\end{IEEEeqnarray*}
\hrulefill
\end{figure*}

Let $I$ denote the 3-dimensional region $[-(b+c), a] \times [-(a+c), b] \times [-(a+b), c].$ Consider what happens while evaluating $F(\rho)$ if instead of integrating over $\alpha$ in the range $(-\infty,a]$, we integrate over the range $(-\infty,-(b+c)).$ The first chain of inequalities given on the top of the next page shows that this does not change the $\rho$-exponent of $F(\rho)$. In this chain of inequalities, the first step follows by substituting $\mu = b+c+\alpha,$ and the last-but-one step follows since the triple integral is finite.

Hence, for the purpose of evaluating the $\rho$-exponent of $F(\rho)$ we can ignore this term and assume that $\alpha \geq -(b+c).$ Similarly, we can assume $\beta \geq -(a+c)$ and $\gamma \geq -(a+b)$. Then, as shown by the second chain of inequalities on top of the next page, we have 
\begin{equation}\label{eq:ub}
F(\rho) \;\dot{\leq}\; \rho^{-d_{L-CSI(r)}}.
\end{equation}

\begin{figure*}[!th]
\normalsize
\begin{IEEEeqnarray*}{rCl}
F(\rho) &
 \dot{\leq} & (a+b+c)^2 \int_{\alpha< -(b+c)} \rho^{\alpha - a} \,\min_{t(\alpha)}\,
\, \rho^{-\min\limits_{\beta\leq b,\;\gamma\leq c,\;r'_{h.d.}\leq r}\,(b+c-\beta-\gamma)} \,\,d\alpha\\
& = & (a+b+c)^2 \int_{\alpha< -(b+c)} \rho^{\alpha - a} \,\,
\, \rho^{-\max\limits_{t(\alpha)}\,\, \min\limits_{\beta\leq b,\;\gamma\leq c,\;r'_{h.d.}\leq r}\,(b+c-\beta-\gamma)} \,\,d\alpha\\
& \leq & (a+b+c)^3\;\rho^{-d_{L-CSI(r)}}\\
& \doteq  &\rho^{-d_{L-CSI(r)}}
\end{IEEEeqnarray*}
\hrulefill
\end{figure*}

\subsection*{Lower bound on $F(\rho)$}

Define $f(\alpha,\beta,\gamma) \triangleq a+b+c-\alpha-\beta-\gamma$ and
\begin{IEEEeqnarray*}{l}(\alpha^*,\beta^*,\gamma^*) \\
= \arg\;\min_{\alpha\leq a}\max_{t(\alpha)}\min_{\beta\leq b,\;\gamma\leq c,\; r'_{h.d.}\leq r} a+b+c-\alpha-\beta-\gamma.\end{IEEEeqnarray*}
 Since $f(\alpha,\beta,\gamma)$ is continuous, there exists a neighborhood $J$ of $(\alpha^*,\beta^*,\gamma^*)$ within which $f(\alpha,\beta,\gamma) \leq f(\alpha^*,\beta^*,\gamma^*) +\delta$ for any $\delta > 0$. 

Note that around any $(\alpha,\beta,\gamma)$ within the range of integration in \eqref{eq:1}, there exists a neighborhood of points that also lie in the range of integration (e.g. the neighborhood obtained by considering points of the form $(\alpha-\epsilon_1,\beta-\epsilon_2,\gamma-\epsilon_3)$ for $\epsilon_1,\epsilon_2,\epsilon_3>0$). Consider such a neighborhood around $(\alpha^*,\beta^*,\gamma^*)$ and call it $\mathcal{A}.$ Then,
\begin{IEEEeqnarray*}{rCl}
F(\rho) & \geq & \text{vol}(\mathcal{A}\cap J) \rho^{-(f(\alpha^*,\beta^*,\gamma^*)+\delta)}\\
& \doteq & \rho^{-(f(\alpha^*,\beta^*,\gamma^*)+\delta)} \quad\text{ since }\text{vol}(\mathcal{A}\cap J) \neq 0.
\end{IEEEeqnarray*}

Since this is true for all $\delta > 0$, we have a lower bound on $F(\rho)$:
\begin{equation}\label{eq:lb} F(\rho)\;\dot{\geq} \;\rho^{-(a+b+c-\alpha^*-\beta^*-\gamma^*)}.\end{equation}

Hence, from \eqref{eq:ub} and \eqref{eq:lb}, $$F(\rho)\doteq \rho^{-(a+b+c-\alpha^*-\beta^*-\gamma^*)}.$$\hfill$\QED$

\section{Characterizing DMT under no CSI (Static)}\label{app:stat}
The best static scheme chooses $t$ to minimize the probability of outage without the knowledge of any channel realization. As in Appendix~\ref{app:glob}, since the capacity under static schemes and the rate achievable by an appropriate static QMF differ only by constants, the optimal probability of outage, which is achievable in the high-SNR limit, is given by
\begin{IEEEeqnarray}{rCl}
\text{Pr}(\text{outage}) & = & \min_t\left(\iiint_{r'_{h.d.}\leq r} p(\alpha)p(\beta)p(\gamma)\; d\alpha\;d\beta\;d\gamma\right)\nonumber\\
& \doteq & \min_{t}\max_{\substack{r'_{h.d.}\leq r,\;\alpha\leq a,\\ \beta\leq b,\;\gamma\leq c}}\rho^{-(a+b+c-\alpha-\beta-\gamma)}\label{eq:stat_sol}\\
& = &\rho^{-d_{SQMF}(r)}\nonumber,
\end{IEEEeqnarray}
where
\begin{IEEEeqnarray*}{rCl}d_{SQMF}(r) & = & \max_t\;\min_{\substack{r'_{h.d.}\leq r,\; \alpha\leq a,\\\beta\leq b,\;\gamma\leq c}} a+b+c-\alpha-\beta-\gamma,\\
& = & \;\max_t\;\min_{\substack{r_{h.d.}\leq r,\; 0\leq\alpha\leq a,\\ 0\leq\beta\leq b,\; 0\leq\gamma\leq c}} a+b+c-\alpha-\beta-\gamma,
\end{IEEEeqnarray*} and \eqref{eq:stat_sol} is arrived at by following similar steps as \cite[Appendix~A]{ZheTse}.\hfill$\QED$

\bibliographystyle{IEEEtran}
\bibliography{IEEEfull,dmt}

\end{document}